%% file: main.tex
\documentclass{article}
\usepackage{graphicx} 
\input{preamble}

\title{\bfseries Equitable \textcolor{red}{C}\textcolor{orange}{o}\textcolor{violet}{l}\textcolor{green}{o}\textcolor{blue}{r}\textcolor{cyan}{i}\textcolor{magenta}{n}\textcolor{purple}{g}\textcolor{brown}{s} of Vertex-Weighted Graphs}
\author{Siddharth Barman\thanks{Indian Institute of Science.{\tt barman@iisc.ac.in}} \qquad \quad Vignesh Viswanathan\thanks{University of Massachusetts Amherst.{\tt vviswanathan@umass.edu}}}
\date{}

\begin{document}

\maketitle

\begin{abstract}
We study a generalization of the classical Hajnal-Szemer\'edi theorem to vertex-weighted graphs. Given a graph with nonnegative vertex weights, a coloring is called $\alpha$-approximately equitable up to one vertex ($\alpha$-\EQo) if, for each color class, the total weight remaining after removing its maximum-weight vertex is at most $\alpha \geq 1$ times the weight of any other color class. 

For vertex-weighted graphs with maximum degree $\Delta$, we show that there exist instances for which no $k$-coloring is $\alpha$-\EQo for any $k < \frac{3\Delta}{2}$ and $\alpha < \sqrt{2}$. In light of this impossibility, we relax these parameters and establish the following results for any vertex-weighted graph $G$ with maximum degree $\Delta$:
\begin{enumerate}
\item For any $\vare \in (0,1)$ and all $k \geq \left( \frac{c}{\varepsilon^2}\ln{\frac{1}{\varepsilon}} \right) \Delta$, there exists a $(1 + \vare)$-\EQo $k$-coloring of $G$, where $c$ is a fixed constant; and 
\item For all $k \ge \Delta + 1$, there exists a $2$-\EQo $k$-coloring of $G$.
\end{enumerate}
Furthermore, such equitable colorings can be computed in polynomial time. En route to our results on equitability under vertex weights, we establish sufficient conditions for the existence of $k$-colorings that are equitable with respect to any given partition of the vertex set. 

Our coloring results correspond to fairness guarantees in a constrained fair division setting and lead to concentration inequalities for partly dependent random variables. 
\end{abstract}
\thispagestyle{empty}
\newpage

\tableofcontents
\thispagestyle{empty}
\newpage

\section{Introduction}
Resolving a conjecture by Erd\H{o}s, Hajnal and Szemer\'edi \cite{HajnalSzemeredi1970} proved that every graph with maximum degree $\Delta$ admits an {\em equitable} $(\Delta+1)$-coloring, i.e., a coloring in which the sizes of all color classes differ by at most one. The current work develops extensions of this result to the weighted setting.

Given a graph $G = (V, E, w)$ with nonnegative vertex weights $\{w_v\}_{v \in V}$, a $k$-coloring $(C_1, \dots, C_{k})$ is said to be $\alpha$-approximately equitable up to one vertex ($\alpha$-\EQo) if, for every pair of color classes $C_i$ and $C_j$ (with $C_i \neq \emptyset$), there exists a vertex $v \in C_i$ such that $w(C_i \setminus \{v\}) \le \alpha w(C_j)$, where $w(S)=\sum_{v\in S} w(v)$ denotes the total weight of the vertices in $S \subseteq V$. A coloring is called \EQo if this condition holds with $\alpha = 1$.

We prove that, in the weighted setting, for all $k \ge \Delta + 1$ (as in \cite{HajnalSzemeredi1970}), there exists a $k$-coloring in which the color-class weights are within a factor of two, up to one vertex. Formally,

\begin{restatable}{theorem}{thmdeltaplusone}\label{thm:delta-plus-one}
Let $G = (V, E, w)$ be a vertex-weighted graph with maximum degree $\Delta$. For any $k \ge \Delta + 1$, there exists a $2$-\EQo $k$-coloring of $G$.
\end{restatable}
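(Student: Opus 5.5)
The plan is to order the vertices $v_1,\dots,v_n$ so that $w_{v_1}\ge w_{v_2}\ge\dots\ge w_{v_n}$ and add them one at a time. We maintain a proper $k$-coloring of $G[\{v_1,\dots,v_m\}]$ that satisfies a \emph{stronger} invariant than $2$-\EQo. Specifically, for every pair of classes $C_i,C_j$,
\[
w(C_i\setminus\{\max C_i\}) \le w(C_j) + \bigl(\text{weight of the lightest vertex in } C_j\bigr) \le 2\,w(C_j),
\]
where $\max C_i$ is a maximum-weight vertex of $C_i$. We use this form because it survives appending a vertex lighter than everything already placed. Adding a vertex $x$ to a class $C_j$ raises the left side of the comparisons from $C_j$ by at most $w_x$. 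The right sides of the other comparisons, where $C_j$ is the class being compared against, only grow. So the natural greedy step is to put $v_{m+1}$ into a minimum-weight class $L$. If $v_{m+1}$ has no neighbor in $L$, I will check that the invariant is preserved. The check uses the minimality of $L$ together with the fact that $w_{v_{m+1}}$ is at most every weight currently present, so $w(L\cup\{v_{m+1}\})\setminus\{\max\}\le w(L)\le w(C_j)$ up to the one-vertex slack.

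The difficulty is that $v_{m+1}$ may have neighbors in every light class. Since $k\ge\Delta+1$, at least one class $C$ contains no neighbor of $v_{m+1}$, but that class may be heavy. Here I would import the Hajnal--Szemer\'edi machinery in the Kierstead--Kostochka form. Call a class \emph{accessible} if one can reach it from $L$ through a chain of single-vertex moves, where each move takes a vertex $y$ from a class $C_a$ that has exactly one neighbor of $y$... wait, more precisely: each move relocates a vertex $y\in C_a$ into the previous class of the chain, and $y$ must have no neighbors in that receiving class. Shifting along such a chain moves weight toward $L$ one class at a time. The danger is that a shifted vertex $y$ could be heavy. To keep the invariant, I would restrict moves to vertices that are not the maximum of their class. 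This is where the factor $2$ and the one-vertex slack should be spent: removing a non-maximum vertex from $C_a$ cannot violate the bound on $C_a$ by more than the slack, and adding $y$ to a class of weight at most $w(C_a)$ stays within $2\,w(\cdot)$.

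So the key steps, in order, are these. (1) Formulate the strengthened invariant and verify that the greedy insertion preserves it when $v_{m+1}$ fits into a minimum class. (2) Define the accessibility digraph on classes using ``movable'' non-maximal vertices. (3) Show, by the degree-counting argument of Kierstead--Kostochka, that if $v_{m+1}$ cannot be placed then some class reachable from $L$ admits $v_{m+1}$. The counting is that $v_{m+1}$ has at most $\Delta\le k-1$ neighbors, and each accessible-class vertex blocks only few moves. (4) Perform the shifts and check the weight inequalities along the chain, using that every class on the chain has weight at least $w(L)$. Polynomial time follows because each insertion needs at most $k$ shifts.

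I expect step (3) to be the main obstacle. In the unweighted proof the accessible family must be enlarged by a delicate ``solo vertex'' swap argument, and here the restriction to non-maximal movers removes one vertex per class from play. My first attempt would be to absorb this loss using the slack in the factor $2$. Concretely, I would allow the maximum vertex of a class to be moved when its weight is at most $w(L)$, and show that the remaining case forces a class of weight at most $2\,w(L)$ that has no neighbor of $v_{m+1}$ except its maximum vertex. In that case $v_{m+1}$ could be swapped in directly.
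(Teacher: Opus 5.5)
There is a genuine gap, and it begins with the invariant. The condition $w(C_i\setminus\{\max C_i\})\le w(C_j)+\min_{u\in C_j}w_u$ cannot be maintained in general. Take the instance from Theorem~\ref{thm:lower-bound} (odd $\Delta\ge 3$) with $k=\Delta+1$, which the statement allows. The weight-$0$ clique $K_{\Delta+1}$ forces every class to contain a weight-$0$ vertex. So at termination your invariant would read $w(C_i\setminus\{\max C_i\})\le w(C_j)$ for all $i,j$, which is exact \EQo with $\Delta+1<\frac{3\Delta}{2}$ colors, and that theorem rules it out.

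The invariant already fails in your easy step (1). The right-hand side for $L$ is not monotone: appending a light $x$ replaces $w(L)+\min L$ by $w(L)+2w_x$. For example, with $C_i=\{10,10\}$, $L=\{5\}$ and $w_x=1$, the valid bound $10\le 10$ becomes the false bound $10\le 7$.

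More importantly, steps (3)--(4) are not proved, and they carry the entire difficulty of placing $v_{m+1}$ when it is adjacent to every light class. You only describe what you would try. The Kierstead--Kostochka counting controls cardinalities and does not transfer to weights. Shifting along a chain changes each intermediate class's weight by $w_{\mathrm{in}}-w_{\mathrm{out}}$. This can be very negative even when movers are non-maximal, so a class can drop far below $w(L)$ and break every comparison in which it is on the right-hand side. Nothing in the proposal controls this.

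The paper sidesteps insertion order and augmenting paths altogether by adapting the feasible-EFX envy-swap procedure. It starts from empty classes. While some independent set of uncolored vertices outweighs some class, it takes a \emph{minimally envied} independent set $I$, meaning no proper subset of $I$ outweighs any class. It replaces an envious class $C_i$ by $I$ and returns the old $C_i$ to the uncolored pool $U$. Since $\sum_i w(C_i)$ strictly increases, this terminates with no independent subset of $U$ outweighing any class. The leftovers are then colored greedily, which $k\ge\Delta+1$ permits.

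Each final class has the form $C_i\cup R_i$. First, $w(C_i\setminus\{\bar v\})\le w(C_j)$ for all $j$, by minimal envy at the time of assignment and because class weights only increase afterwards. Second, $R_i$ is an independent subset of $U$, so $w(R_i)\le w(C_j)$ by the stopping condition. Adding the two bounds gives the factor $2$, with no need to track how weights shift during insertions.
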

Ideally, one would hope to extend the Hajnal-Szemer\'{e}di theorem to vertex-weighted graphs by guaranteeing the existence of (exact) \EQo $k$-colorings for all $k \ge \Delta + 1$, i.e., to obtain  \Cref{thm:delta-plus-one} without any approximation factor. However, Theorem~\ref{thm:lower-bound} shows that such a strong extension is not possible. Specifically, this lower bound asserts that, when $k < \tfrac{3\Delta}{2}$, no $k$-coloring can achieve a guarantee better than $\sqrt{2}$-\EQo.

\begin{restatable}{theorem}{thmlowerbound}\label{thm:lower-bound}
For every odd integer $\Delta \ge 3$, there exists a vertex-weighted graph $G=(V,E,w)$ with maximum degree $\Delta$ such that, for any $k < \tfrac{3\Delta}{2}$ and $\alpha < \sqrt{2}$, no $\alpha$-\EQo $k$-coloring of $G$ exists.
\end{restatable}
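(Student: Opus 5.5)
The plan is an explicit gadget built from cliques, with weights calibrated so that every $k$-coloring with $k<\tfrac{3\Delta}{2}$ is forced into one of two bad configurations. In one configuration some colour class is too heavy after removing its heaviest vertex. In the other some colour class is too light. The two failures should cost factors of roughly $x$ and $2/x$ for a weight parameter $x$, and choosing $x=\sqrt2$ balances them, which is where the threshold $\sqrt2$ comes from. Since $\Delta$ is odd, $\Delta+1$ is even and $\tfrac{\Delta+1}{2}$ is an integer. Any integer $k<\tfrac{3\Delta}{2}$ satisfies $k\le \tfrac{3(\Delta+1)}{2}-2$. This strongly suggests that the intended gadget has on the order of $\tfrac{3(\Delta+1)}{2}$ ``heavy'' vertices. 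By pigeonhole, a proper $k$-coloring must then place two heavy vertices in the same class.

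Concretely, I would start from a clique $K_{\Delta+1}$ of unit-weight vertices, which forces $\Delta+1$ distinct colours. I would attach a second family of about $\tfrac{\Delta+1}{2}$ vertices of weight $x$, together with enough low-weight ``filler'' vertices to make every colour class nonempty. The adjacencies of the second family are chosen so that all degrees stay at most $\Delta$, while each weight-$x$ vertex is still excluded from as many colours as possible. One natural way to arrange this is to pair the clique vertices via a perfect matching, which exists because $\Delta+1$ is even, and to attach one weight-$x$ vertex to each pair.

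The argument then proceeds in three steps. First, show that some class $C_i$ contains two heavy vertices, so that $w(C_i\setminus\{v\})\ge \min(1,x)$ for every $v\in C_i$. Second, show that this forces the remaining mass to be spread thinly over the other classes. Third, use the total weight together with the degree constraints to show that some class $C_j$ has weight at most about $\min(1,x)/\sqrt2$, or symmetrically that a class with a single heavy vertex is too light relative to a doubled class. Comparing the two cases gives $\alpha\ge \max(x,2/x)$, and this is at least $\sqrt2$.

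The main obstacle is the calibration of the construction. The naive version, two disjoint cliques, fails: a class of the form $\{1,x\}$ minus its maximum weighs $\min(1,x)$, and that can be matched by a singleton class of weight $1$. So the light vertices and the adjacencies must be designed so that classes lacking a heavy vertex are genuinely light. One must also ensure that no coloring can evade the pigeonhole by spreading filler weight evenly. I would first try small cases, $\Delta=3$ and $k\in\{4\}$, by hand to find the right weights. I would then generalise the construction by taking $\tfrac{\Delta+1}{2}$ copies of the resulting gadget glued through the clique.
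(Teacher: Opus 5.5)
There is a genuine gap. Your proposal never exhibits a graph for which the theorem holds, and the one concrete gadget it suggests cannot be built. Every vertex of $K_{\Delta+1}$ already has degree $\Delta$. Attaching a weight-$x$ vertex to a matched pair of clique vertices, or attaching filler to the clique, therefore pushes the maximum degree to $\Delta+1$. Even if you ignore the degree bound, the gadget fails at $k=\Delta+1<\tfrac{3\Delta}{2}$. Each class gets exactly one unit-weight clique vertex, and each weight-$x$ vertex forbids only two colours. So the $\tfrac{\Delta+1}{2}$ weight-$x$ vertices can go into distinct classes, producing classes $\{1,x\}$ and $\{1\}$. This is exactly \EQo before the filler is added, since $\min(1,x)\le 1$; it is the same obstruction you identified for two disjoint cliques. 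Your pigeonhole step only puts two heavy vertices in one class, and nothing in the plan forces some other class to be light. Your second and third steps (\emph{the remaining mass is spread thinly}, \emph{some class has weight about $\min(1,x)/\sqrt2$}) are asserted, not derived. Deferring the weights and adjacencies to hand experiments for $\Delta=3$ leaves the actual content of the theorem unproved.

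The missing idea is a parity argument on a complete bipartite graph. The paper takes $K_{\Delta,\Delta}$ with parts $L$ (weight $2$) and $R$ (weight $\sqrt2$), plus a disjoint $K_{\Delta+1}$ of weight $0$ whose only job is to force $k\ge\Delta+1$. A class meeting $R$ contains no $L$-vertex. Since $|R|=\Delta$ is odd, some class contains an odd number $m$ of $R$-vertices and no other positive weight.
\begin{itemize}
\item If $m\ge 3$, that class weighs at least $2\sqrt2$ after deleting any vertex. So every class must weigh more than $2$, i.e.\ contain at least two $R$-vertices or at least two $L$-vertices. This gives $k\le\Delta$, a contradiction.
\item If $m=1$, that class weighs $\sqrt2<2/\alpha$, so no class may contain two $L$-vertices or three $R$-vertices. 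Hence at least $\Delta$ classes are needed for $L$ and at least $\Delta/2$ further classes for $R$, so $k\ge\tfrac{3\Delta}{2}$.
\end{itemize}
Oddness of $\Delta$ is used for the parity of $|R|$, not for a perfect matching of the clique. Your balancing heuristic $\max(x,2/x)$ is realized with $x=2/\sqrt2=\sqrt2$, the ratio of the two part weights.
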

In view of this impossibility, we allow the number of color classes to exceed $\Delta + 1$ and obtain results for both \EQo and $(1+\varepsilon)$-\EQo colorings in this regime. We prove that an \EQo coloring can be achieved using $O(d\Delta)$ colors, where $d$ denotes the number of distinct weights in $G$.

\begin{restatable}{theorem}{thmdequitable}\label{thm:d-equitable}
Let $G = (V, E, w)$ be a vertex-weighted graph with maximum degree $\Delta$, and let $d$ denote the number of distinct values in $\{w_v\}_{v \in V}$. For any integer $k \ge (8d+10) \Delta$, there exists an   \EQo $k$-coloring of $G$. Furthermore, such a coloring can be computed in polynomial time.
\end{restatable}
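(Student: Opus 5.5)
The plan is to reduce to equitability with respect to a partition of the vertex set, which the abstract says the paper develops. Let the distinct weights be $x_1 > x_2 > \dots > x_d$, and let $V_t = \{v : w_v = x_t\}$ be the weight classes. We aim for a $k$-coloring $(C_1,\dots,C_k)$ that is \emph{equitable with respect to the partition} $(V_1,\dots,V_d)$: for every $t$ and every $i,j$, $\big||C_i\cap V_t| - |C_j\cap V_t|\big| \le 1$. In addition, we want the ``extra'' vertices, the classes that receive $\lceil |V_t|/k\rceil$ vertices of $V_t$ instead of $\lfloor |V_t|/k\rfloor$, to be arranged in a staircase (prefix) fashion. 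Concretely, write $|V_t| = q_t k + r_t$. We ask that the colors receiving an extra vertex of $V_t$ be, after a single global relabeling of colors, exactly $\{1,\dots,r_t\}$ when $t$ is odd-indexed in a suitable nested order. More simply, we want the sets of colors with an extra vertex to form a chain under inclusion as $t$ ranges over the weights sorted by decreasing $x_t$.

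Given such a nested structure, we check \EQo. For classes $C_i, C_j$, the difference $w(C_i) - w(C_j) = \sum_t x_t(\epsilon_{i,t} - \epsilon_{j,t})$ with $\epsilon \in \{0,1\}$. Suppose the extra-sets are nested so that $S_1 \subseteq S_2 \subseteq \dots$ in order of decreasing weight. Then the coordinates where $C_i$ beats $C_j$ and where $C_j$ beats $C_i$ interleave so that the total excess of $C_i$ is at most the largest single weight $x_t$ at which $C_i$ has an extra vertex. That vertex can be removed, giving $w(C_i\setminus\{v\}) \le w(C_j)$. Ensuring the chain requires choosing the extras greedily: for the heaviest class, extras go to colors $1..r_1$. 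For the next, extras go to colors continuing cyclically from $r_1+1$. This cyclic ``round robin over a global order'' is the usual way to get an \EQo allocation of goods (one vertex at a time, in decreasing weight, to the currently lightest color). So the real target is simply this: realize, via a proper coloring, the round-robin allocation in which vertices sorted by decreasing weight are dealt to colors $1,2,\dots,k,1,2,\dots$. Any proper coloring whose per-weight-class counts match this pattern up to relabeling within each weight class gives an \EQo coloring. This is because the cyclic pattern has each color's count vector, for any prefix of weights, differing from any other's by at most one in the right direction.

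The main step, and the obstacle, is the combinatorial lemma. For each $t$ and each color $i$ a target count $n_{t,i}\in\{q_t,q_t+1\}$ is prescribed, with $\sum_i n_{t,i}=|V_t|$. We must show that a proper $k$-coloring achieving these counts exactly exists whenever $k \ge (8d+10)\Delta$. I would build it weight class by weight class. Having colored $V_1\cup\dots\cup V_{t-1}$, apply a Hajnal--Szemer\'edi-type argument to $V_t$ with lists: each vertex $v\in V_t$ forbids at most $\Delta$ colors used by earlier-colored neighbors. We need an assignment with prescribed class sizes and forbidden sets. This is a bipartite $b$-matching/flow problem (vertices to colors with capacities $n_{t,i}$) plus the constraint that $G[V_t]$-edges are not monochromatic. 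I would first handle the within-class edges by splitting $V_t$ into $O(\Delta)$ independent groups, using an equitable coloring of $G[V_t]$ via Hajnal--Szemer\'edi. I would then use Hall's condition together with swapping (Kierstead--Kostochka style moves) to fix the capacities. The accumulated forbidden colors across $d$ rounds are what force the $\Theta(d\Delta)$ count: a color may be ``blocked'' for a vertex by up to $\Delta$ neighbors in each previous class. The slack of $8d\Delta$ should be enough to make Hall's condition hold and to allow local repairs. Polynomial time follows since each step is a flow computation or a bounded sequence of swaps.
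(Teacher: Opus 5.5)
There is a genuine gap, and it sits in the step you yourself flag as the main one. The ``combinatorial lemma'' is not merely unproven; as stated it is false.

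Take a single edge $uv$ together with $k-1$ isolated vertices $a_1,\dots,a_{k-1}$ of one common weight, and set $w_u > w_{a_i} > w_v > 0$. Here $d=3$ and $\Delta=1$, so any $k\ge 34$ is allowed. Dealing in decreasing weight sends $u$ to color $1$, the $a_i$ to colors $2,\dots,k$, and $v$ back to color $1$. In every coloring with these per-weight-class counts, even after a global relabeling of colors, the unique color containing no $a_i$ must contain both $u$ and $v$. That is impossible for every $k$.

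The looser reading, where the extra vertices of each weight class are placed independently, gives only floor/ceiling balance per class, and that does not imply \EQo. Take $|V_1|=qk+1$, $|V_2|=q'k+1$, $x_2>0$, with both extra vertices in $C_1$. Removing any vertex of $C_1$ still leaves $w(C_1\setminus\{\bar v\}) = w(C_2)+x_2$. This is also a counterexample to the nested-chain condition in your first paragraph, since $S_1=S_2=\{1\}$ is a chain.

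So partitioning by weight classes forces you to dictate which colors receive the leftover vertices, and that cannot be done in general. Your class-by-class Hall/swap sketch would not deliver it anyway: with earlier classes frozen, a singleton weight class whose prescribed color already holds a neighbor has no valid placement, however large $k$ is. The bound $(8d+10)\Delta$ is also never derived.

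The missing idea is to use a partition whose parts all have sizes divisible by $k$. Then equitability with respect to the partition means \emph{exactly equal} counts, and nothing about where leftovers go needs to be prescribed. The paper does the following:
\begin{enumerate}
\item Sort the vertices by nonincreasing weight.
\item Pad with isolated zero-weight dummies so that $k \mid n$, which leaves at most $d+1$ distinct weights.
\item Cut the sorted list into consecutive blocks of $k$ vertices.
\item Merge adjacent blocks that consist of a single common weight. Odd-indexed parts then have strictly decreasing maximum weights, so there are $\widetilde d \le 2(d+1)$ parts.
\end{enumerate}
The partition-equitability theorem has a path-augmentation proof that needs $k\ge(4\eta+2)\Delta$, and here $\eta\le\widetilde d$. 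So it applies once $k\ge(4\cdot 2(d+1)+2)\Delta=(8d+10)\Delta$. It gives every color exactly one vertex of each size-$k$ block and the same number of vertices from each uniform-weight merged block.

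\EQo then follows from the round-robin shift. Drop the color's vertex in the first size-$k$ block, and bound its vertex in the $(s+1)$-st such block by the other color's vertex in the $s$-th. This works no matter which vertex of a mixed block each color received. In the example above, the second block is $\{v\}$ plus dummies, and $u$'s color simply takes a dummy. That is exactly the freedom your exact weight-class pattern rules out.
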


\paragraph{Main Result.} Our main technical result is the following theorem, which establishes that for any $\varepsilon > 0$, one can obtain a $(1+\varepsilon)$-\EQo coloring with $k = \widetilde{O} \left(\Delta/{\varepsilon^2}\right)$ colors.

\begin{restatable}{theorem}{thmepsequitable}\label{thm:eps-equitable}
Let $G = (V, E, w)$ be a vertex-weighted graph with maximum degree $\Delta$. For any $\vare \in (0, 0.1]$ and integer $k \ge \frac{4(1 + \varepsilon)^2}{\varepsilon^2}\left (\ln{\frac1{\varepsilon}} + 4 \right ) \Delta$, there exists a $(1 + 3\vare)$-\EQo $k$-coloring of $G$. Additionally, such a coloring can be computed in polynomial time.
\end{restatable}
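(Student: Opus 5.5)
We prove Theorem~\ref{thm:eps-equitable} by bucketing the weights geometrically and balancing each bucket separately with a partition-equitable coloring result, as promised in the abstract. Concretely, we rely on a statement of the following form (proved en route): for a partition $V = P_1 \cup \dots \cup P_m$ and $k$ large relative to $\Delta$, there is a proper $k$-coloring in which every color class receives either $\lfloor |P_j|/k \rfloor$ or $\lceil |P_j|/k \rceil$ vertices of each part $P_j$. Since exact equitability per part is impossible with many parts, we aim for approximate per-part balance: every class gets at least $\lfloor |P_j|/k\rfloor$ and at most $\lceil |P_j|/k\rceil$ vertices from each heavy part, with a weaker guarantee on a light tail.

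\textbf{Step 1 (bucketing).} Normalize the weights so that the heaviest vertex has weight $1$. Put vertices of weight in $((1+\varepsilon)^{-(j+1)}, (1+\varepsilon)^{-j}]$ into bucket $B_j$ for $j = 0, 1, \dots, L-1$, where $L \approx \frac{1}{\varepsilon}\ln\frac1\varepsilon$. This makes the weight ratio across all buckets about $\varepsilon$, i.e.\ a range of $(1+\varepsilon)^{L}\approx 1/\varepsilon$. All vertices of weight below $\varepsilon\cdot(\text{some threshold})$ go into a residual light part $R$.

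\textbf{Step 2 (coloring with $L$ parts).} We want a coloring in which each class's count in every $B_j$ differs from the average by at most one. Summing over $j$, the class weights then differ by at most $\sum_j (1+\varepsilon)^{-j} \approx 1/\varepsilon$ times the maximum weight. That error is too large to absorb in a single ``up to one vertex'' removal, so we need a cumulative balance instead: for each prefix $B_0\cup\dots\cup B_j$, the class counts differ by at most one. Once prefix counts are balanced, Abel summation shows two classes differ in weight by at most one top vertex plus a $(1+\varepsilon)$ factor within buckets. We therefore use nested (laminar) prefix constraints. For these, the greedy Hajnal--Szemerédi-type argument with $k \ge c\Delta L$ colors should suffice, because each vertex blocks at most $\Delta$ colors and each of the $L$ levels may need an extra $O(\Delta)$ slack. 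This accounts for the factor $\frac{1}{\varepsilon}\ln\frac1\varepsilon$.

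\textbf{Step 3 (the light tail).} Distribute $R$ by a Chernoff-type or greedy-fill argument: assign each light vertex to a lightest available color not used by its neighbors. With $k \gtrsim \Delta/\varepsilon$, at least a $(1-\varepsilon)$ fraction of colors are available to each vertex. Light vertices have weight at most $\varepsilon$ times the typical class weight, so the fill keeps classes within $(1+\varepsilon)$ up to additive $O(\varepsilon)\cdot$ average. This costs the second $1/\varepsilon$ factor, giving the $\varepsilon^{-2}$ in the bound.

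\textbf{Main obstacle.} The hardest step is the combination. We must show that the $(1+\varepsilon)$ rounding within buckets, the tail-fill error, and the removal of one vertex together give $1+3\varepsilon$. We must also handle the case where total weight is small relative to $k$, i.e.\ many classes have only a few heavy vertices. There, the comparison must use $w(C_i\setminus\{v\})$, which is why the top vertex is removed. Polynomial time follows because each step is a greedy procedure or a sequence of polynomially many augmenting recolorings.
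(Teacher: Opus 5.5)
There is a genuine gap, and it sits exactly at the point you defer to your ``main obstacle'': nothing in your scheme bounds the weight of the greedily filled part relative to the lightest color class.

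You normalize at the \emph{heaviest} vertex and call everything below weight about $\varepsilon$ light, but this threshold has no relation to class weights. Take one vertex of weight $1$ and, for each $i=1,\dots,m$, $k$ vertices of weight $\varepsilon^i/2$. A typical class then weighs about $\varepsilon/2$, which equals the weight of its heaviest ``light'' vertex. In general $R$ can carry almost all of the weight and span arbitrarily many scales, so Step~3 is the original problem again rather than a tail.

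Greedy filling cannot absorb this. Putting each vertex into a lightest \emph{available} class only bounds what a class receives from above, by $w(R)/(k-\Delta)+\max_{v\in R}w_v$ (this is Lemma~\ref{lem:filling-up}). It gives no lower bound, because a class whose vertices are adjacent to the incoming ones may receive nothing. It is therefore useful only when $w(R)$ is small compared with the weight every class already has, and your scheme never arranges that.

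There are two smaller problems. First, the requirements $k\gtrsim\Delta L$ and $k\gtrsim\Delta/\varepsilon$ combine by a maximum, not a product, so your account of the $\varepsilon^{-2}$ does not hold. Second, the prefix balance in Step~2 is only asserted (``should suffice''). Theorem~\ref{thm:partition-equitability} gives per-part floor/ceiling balance, and that yields prefix balance only when every part size is a multiple of $k$. Your Abel-summation step itself is sound.

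The paper supplies the missing mechanism, in three pieces.
\begin{enumerate}
\item \textbf{A weight floor for every class.} It normalizes at the $(kx+1)$-th heaviest vertex, with $x=\lceil 1/\varepsilon^2\rceil$, and makes the top $kx$ vertices into parts $T_1,\dots,T_x$ of size $k$. Every class gets one vertex of weight at least $1$ from each $T_j$, so it weighs at least $1/\varepsilon^2$.
\item \textbf{Bounding $\eta$ without bounding the number of buckets.} All remaining vertices, across unboundedly many scales, go into geometric buckets $B_j$. The buckets are split into $y\approx\frac{1+\varepsilon}{\varepsilon}\ln\frac1\varepsilon$ residue classes mod $y$. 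Within each residue class, any bucket $B_{t'y+h}$ with $|B_{t'y+h}|\le(1+\varepsilon)^{t'-t}|B_{ty+h}|$ for an earlier surviving $B_{ty+h}$ is deferred. In addition, fewer than $k$ vertices are deferred from each bucket so that all sizes are multiples of $k$, which gives prefix balance for free. Surviving sizes then grow geometrically within each residue class, so $\eta\le y(1+1/\varepsilon)+x$; this is the source of $\varepsilon^{-2}\ln\frac1\varepsilon$.
\item \textbf{A light deferred set.} Between consecutive buckets of a residue class, weights drop by $(1+\varepsilon)^{y}$, with $(1+\varepsilon)^{y-1}\ge 1/\varepsilon$, and this outpaces the permitted size growth. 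Hence the deferred set satisfies $w(D)\le k\varepsilon\frac{1+\varepsilon}{1-\varepsilon}\,w(C^*_i)$ for every class, and $\max_{v\in D}w_v\le 1\le\varepsilon^2 w(C^*_i)$. Only because of these two bounds does the final application of Lemma~\ref{lem:filling-up} cost just $O(\varepsilon)$ relative error.
\end{enumerate}
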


\subsection{Applications}

\paragraph{Fair Division.}
Our coloring results directly imply fairness guarantees in a setting of allocating indivisible items with conflicts \cite{Chiarelli2020ConflictingItems,Hummel2022ConflictingItems}. In this discrete fair division setup, a set of $n$ indivisible items must be partitioned among $k$ agents. Conflicts among items are represented by a graph $G$ on the item set, where an edge between two items indicates a conflict. No agent can receive a pair of conflicting items, i.e., each agent must be assigned an independent set of $G$. 

Our results imply the existence of equitable allocations in this constrained setting under identical additive valuations. Equitability is a well-studied fairness notion  \cite{Dubins1961EquitableDivisible,Cechlarova2012EquitableDivisible,Freeman2019Equitable,Barman2026Equitable,Gourves2014Equitable}. In the context of indivisible items, two standard relaxations of this notion are equitability up to one item (\EQo) and equitability up to any item (EQX); see, e.g., \cite{Freeman2019Equitable,Barman2026Equitable}.

When items are chores (have costs instead of utilities) and all agents have identical costs, $\alpha$-\EQo $k$-colorings of $G$---where weights represent chore costs---correspond to $\alpha$-\EQo allocations. Consequently, our results provide sufficient conditions on the number of agents $k$ for the existence of such equitable allocations in the presence of conflicts.

One can similarly define the notion of an $\alpha$-EQX $k$-coloring, wherein the weights of the color classes must be within a factor of $\alpha$ after the removal of {\it any} vertex. While EQX allocations are known to exist in the unconstrained setting \cite{Gourves2014Equitable}, even under monotone valuations \cite{Barman2026Equitable}, we note that there exist vertex-weighted graphs for which no non-trivial $\alpha$-EQX $k$-colorings exist; see Appendix \ref{apdx:eqx}.

\paragraph{Concentration Inequalities.} The existence of equitable colorings in the unweighted case (specifically, the Hajnal-Szemer\'edi theorem) has been used to establish tail bounds for sums of partly dependent identical Bernoulli random variables \cite{Pemmaraju2001TailBounds}. We show that, along similar lines, our weighted equitability results lead to tail bounds for non-identically distributed random variables. Specifically, we establish the following concentration inequality for the sums of partly dependent (non-identical) random variables.
\begin{restatable}{theorem}{thmtailbounds}\label{thm:tail-bounds}
Let $X_1, \dots, X_n$ be random variables such that each $X_i \in [a_i, b_i]$ almost surely, and let $G=([n],E)$ denote their dependency graph, where $(i,j) \in E$ if and only if $X_i$ and $X_j$ are dependent. Let $X=\sum_{i=1}^n X_i$, and let $\Delta$ denote the maximum degree of $G$. Then, for all $t>0$,
\begin{align*}
    \Pr \left\{ X - \E[X] \ge t \right\} \le (\Delta + 1) \exp \left (\frac{-2t^2}{2(\Delta +1)\sum_{i = 1}^n (b_i - a_i)^2 + (\Delta + 1)^2 \max_{i \in [n]} (b_i - a_i)^2} \right ).
\end{align*}
The same upper bound holds for $\Pr\left\{X - \E[X] \le -t \right\}$.
\end{restatable}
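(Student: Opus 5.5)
The plan is to combine \Cref{thm:delta-plus-one}, applied with weights $w_i=(b_i-a_i)^2$, with Hoeffding's inequality on each color class and a union bound. Set $k=\Delta+1$ and let $(C_1,\dots,C_k)$ be a $2$-\EQo $k$-coloring of the dependency graph $G$ under these weights. Each $C_j$ is an independent set of $G$. I will use the standard reading of a dependency graph: every $X_i$ is independent of the family $\{X_{i'}: i' \text{ not adjacent to } i\}$. Under this reading the variables $\{X_i\}_{i\in C_j}$ are mutually independent. Write
\begin{align*}
Y_j=\sum_{i\in C_j}(X_i-\E[X_i]), \qquad W_j=w(C_j), \qquad W=\sum_{i=1}^n w_i, \qquad w_{\max}=\max_{i\in[n]} w_i .
\end{align*}
Then $X-\E[X]=\sum_{j=1}^k Y_j$.

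If $\sum_j Y_j\ge t$, then some $j$ has $Y_j\ge t/k$. The union bound and Hoeffding's inequality for the independent bounded summands of $Y_j$ therefore give
\begin{align*}
\Pr\{X-\E[X]\ge t\}\le \sum_{j=1}^{k}\Pr\{Y_j\ge t/k\}\le \sum_{j=1}^k \exp\!\left(\frac{-2t^2}{k^2 W_j}\right)\le k\exp\!\left(\frac{-2t^2}{k^2\max_j W_j}\right).
\end{align*}

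It remains to bound $\max_j W_j$, and this is where equitability enters. Fix $j$ and let $v$ be the vertex given by the $2$-\EQo condition. For every $l\neq j$ we have $W_j-w_{\max}\le W_j-w_v\le 2W_l$. Averaging over all $l$, with the trivial case $l=j$ included, gives $W_j-w_{\max}\le 2W/k$. Hence $k^2 W_j\le 2kW+k^2w_{\max}$. Substituting $k=\Delta+1$ yields exactly the stated denominator $2(\Delta+1)\sum_i(b_i-a_i)^2+(\Delta+1)^2\max_i(b_i-a_i)^2$. The lower tail follows by applying the same argument to $-X_i\in[-b_i,-a_i]$, which leaves the widths unchanged.

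The only delicate points are the following.
\begin{itemize}
\item \textbf{Mutual independence within a class.} This must hold for Hoeffding's inequality to apply. It is guaranteed by the standard notion of a dependency graph; pairwise independence alone would not suffice.
\item \textbf{Averaging step.} The bound $W_j\le 2W/k+w_{\max}$ has to hold uniformly over $j$, which the averaging argument provides.
\end{itemize}

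As a cross-check, splitting $t$ unevenly by setting $p_j\propto\sqrt{W_j}$ and applying Cauchy--Schwarz would give the bound for any proper $(\Delta+1)$-coloring. I would use that as a sanity check, while presenting the equitable-coloring argument as the main proof.
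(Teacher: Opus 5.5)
Your proof is correct and follows the paper's argument essentially step for step: weights $w_i=(b_i-a_i)^2$, a $2$-\EQo $(\Delta+1)$-coloring from \Cref{thm:delta-plus-one}, Hoeffding on each class at threshold $t/(\Delta+1)$, the bound $(\Delta+1)w(C_j)\le 2\sum_i w_i+(\Delta+1)\max_i w_i$ obtained by summing the $2$-\EQo inequality over all classes, and a union bound. Both of your side remarks also hold up. The paper's step from ``independent set'' to ``independent variables'' tacitly uses the standard mutual-independence notion of a dependency graph; under the literal pairwise definition in the statement the bound is false (e.g.\ the $2^m-1$ nonconstant parities of $m$ fair $\pm1$ bits are pairwise independent yet sum to $2^m-1$ with probability $2^{-m}$, violating the claimed bound at $t=2^m-1$ for $m\ge 4$). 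Your Cauchy--Schwarz split gives the stronger bound $(\Delta+1)\exp\bigl(-2t^2/((\Delta+1)\sum_i(b_i-a_i)^2)\bigr)$ from any proper $(\Delta+1)$-coloring, so equitability is not actually needed for this theorem.
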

We note that these bounds are not novel and follow from the deviation inequalities in \cite{Janson2004TailBounds}. However, the proof technique in \cite{Janson2004TailBounds} is different and, in particular, does not rely on equitable colorings. Although the tail bounds here are slightly weaker, we believe that our alternative proof may be of independent interest.

\subsection{Additional Results}
We also obtain improved guarantees when the maximum vertex weight is small relative to the total weight of the vertex set. Specifically, the following result applies when $\max_{v \in V} w_v \le \frac{\varepsilon}{k} \sum_{v \in V} w_v$ for some small $\varepsilon > 0$.

\begin{restatable}{theorem}{thmchromaticnumber}\label{thm:chromatic-number}
Let $G = (V, E, w)$ be a vertex-weighted graph with maximum degree $\Delta$. For any parameter $\vare \in (0, 0.1]$ and integer $k \in \mathbb{Z}_+$, if $k \ge \max \left \{\frac{\chi(G)}{\vare}, 2\Delta \right \}$ and $\max_{v \in V} w_v \le \vare \frac{w(V)}{k}$, then there exists a $(1 + 7\vare)$-\EQo $k$-coloring of $G$, where $\chi(G)$ denotes the chromatic number of $G$. 
\end{restatable}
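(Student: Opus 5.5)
Let $W = w(V)$ and $\mu = W/k$, so every vertex has weight at most $\vare\mu$. The plan is to start from an optimal proper coloring and then rebalance it.

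\textbf{Splitting the big classes.} Fix a proper coloring of $G$ with classes $I_1,\dots,I_\chi$, where $\chi=\chi(G)$. Each $I_j$ is an independent set, so any subset of it may serve as a color class. For each $I_j$, scan its vertices in an arbitrary order and cut it greedily into consecutive chunks of weight in $[\mu,(1+\vare)\mu)$. This works because each vertex weighs at most $\vare\mu$, so a chunk can be closed as soon as its weight reaches $\mu$. Each $I_j$ then leaves a residual piece $R_j$ of weight less than $\mu$. The full chunks number at most $W/\mu = k$. Together with the residues there are at most $k+\chi$ pieces, with $\chi \le \vare k$.

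\textbf{Redistributing the residues.} If we used exactly $k$ full chunks there would be no residue, so in general the number of full chunks is at most $k-1$. The deficit is absorbed by the $\chi$ residues, whose total weight is less than $\chi\mu \le \vare k\mu = \vare W$. The goal is to distribute the residual vertices so that every final class has weight in $[(1-c\vare)\mu,(1+c\vare)\mu]$. For this I would choose a target $\mu' = W/k$, rescale, and cut at thresholds slightly below $\mu$, say $(1-2\vare)\mu$. Then the full chunks number more than the "empty" slots, and the residual vertices can be inserted into chunks.

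The insertion must keep the classes independent, and this is the main obstacle. A residual vertex $v$ can enter a chunk $C$ only if $C$ contains no neighbor of $v$. Since $v$ has at most $\Delta$ neighbors, it is blocked from at most $\Delta$ chunks. The hypothesis $k\ge 2\Delta$ guarantees that at least $k-\Delta \ge k/2$ chunks remain available to $v$. So I would add residual vertices one at a time, each into the currently lightest admissible class. A class is admissible if it is non-adjacent to $v$ and has weight below $(1+\vare)\mu$. The counting argument has to show that such a class always exists. The total residue weight is at most $\vare W$, and the number of classes that have already exceeded $(1+\vare)\mu$ through insertions is at most $\vare W/(\vare\mu)$. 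Making this bound work is the delicate part; if the count is too tight, I would lower the chunk threshold to $(1-\vare)\mu$ to gain slack. Alternatively, one could invoke the paper's partition-equitable coloring result with the partition $\{I_j\}$.

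\textbf{Verifying the guarantee.} At the end, every class has weight in $[(1-3\vare)\mu,(1+2\vare)\mu]$. The lower bound holds because no class stays light while residues remain, given that the lightest admissible class is chosen. Removing one vertex changes a class weight by at most $\vare\mu$. Hence the ratio between any two classes is at most $(1+2\vare)/(1-3\vare)\le 1+7\vare$ for $\vare\le 0.1$, which gives the $(1+7\vare)$-\EQo guarantee.
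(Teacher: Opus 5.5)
Your architecture matches the paper's: the statement is Theorem~\ref{thm:low-max-weight} applied to a $\chi(G)$-coloring. You cut each class into minimal chunks of weight in $[(1-2\vare)\mu,(1-\vare)\mu]$, keep $k$ of them, and insert the remaining vertices. But the insertion step, where the work lies, is not proved, and the rule you propose can fail.

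First, you must say that you keep exactly $k$ chunks and move the surplus chunks into the residue. There are at least $k$ chunks, since the leftovers weigh less than $\chi\mu\le\vare W$ and each chunk weighs at most $(1-\vare)\mu$. Keeping exactly $k$ is what bounds the residue by $W-k(1-2\vare)\mu=2\vare W$, and this bound is essentially attained when vertex weights are tiny.

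Second, your count of capped classes, $\vare W/(\vare\mu)=k$, is vacuous, and no count can rescue the cap $(1+\vare)\mu$. Nothing prevents all residual vertices from having their neighbours in the same $\Delta=k/2$ classes. The whole residue of about $2\vare W$ must then enter the other $k/2$ classes, about $4\vare\mu$ each on average. With tiny weights, a class starting near $(1-2\vare)\mu$ can absorb only about $3\vare\mu$ before hitting the cap, so the procedure gets stuck with no admissible class.

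For the same reason, the ``lightest admissible class'' rule does not justify your lower bound, since admissibility depends on adjacency. That lower bound is automatic anyway, because every class starts with a chunk of weight at least $(1-2\vare)\mu$. Your final arithmetic is also off: $(1+2\vare)/(1-3\vare)\approx 1.714>1.7$ at $\vare=0.1$.

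The missing ingredient is a cap-free bound of the form $\maxw(R)+w(R)/(k-\Delta)$ on the weight each class receives from the residue $R$. The paper gets it from Lemma~\ref{lem:filling-up}:
\begin{itemize}
\item sort $R$ by nonincreasing weight and cut it into batches of $k-\Delta$ vertices;
\item match each batch to distinct classes containing none of the vertex's neighbours, which is possible because each vertex is blocked by at most $\Delta$ classes;
\item then a class receives at most one vertex per batch, and any vertex of batch $j+1$ weighs at most the average of batch $j$.
\end{itemize}
Your greedy can also be repaired by dropping the cap and sending each residual vertex to the non-blocked class with the least inserted weight so far. At least $k-\Delta$ classes are non-blocked and their inserted weights sum to at most $w(R)$. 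So the chosen class has inserted weight at most $w(R)/(k-\Delta)$ before receiving a vertex of weight at most $\vare\mu$.

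Either way, with $k\ge 2\Delta$ each class receives at most $4\vare\mu+\vare\mu$. Take $\bar v$ to be a vertex of the class's own chunk whose removal drops that minimal chunk below $(1-2\vare)\mu$. This gives $w(C_i\setminus\{\bar v\})\le(1+3\vare)\mu$ against $w(C_j)\ge(1-2\vare)\mu$, so the ratio is at most $1+5\vare/(1-2\vare)\le 1+7\vare$.

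Finally, your suggested alternative via Theorem~\ref{thm:partition-equitability} on $\{I_j\}$ does not apply: it needs $k\ge 6\Delta$, and it equalizes cardinalities, not weights.
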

Assuming low maximum weight, this result strengthens Theorem~\ref{thm:eps-equitable} in terms of the dependence on $\varepsilon$ and $\Delta$. However, in contrast to Theorem~\ref{thm:eps-equitable}, it is not accompanied by a polynomial-time algorithm. We also show that, by weakening the condition on $k$, one can obtain a comparable result with a polynomial-time algorithm.    
\begin{restatable}{theorem}{thmlowmaxwtcomputable}\label{thm:low-max-wt-computable}
Let $G = (V, E, w)$ be a vertex-weighted graph with maximum degree $\Delta$. For any parameter $\vare \in (0, 0.1]$ and integer $k \in \mathbb{Z}_+$, if $k \ge \frac{\Delta + 1}{\vare}$ and $\max_{v \in V} w_v \le \vare \frac{w(V)}{k}$, then $G$ admits a $(1 + 7\vare)$-\EQo $k$-coloring that can be computed in polynomial time.
\end{restatable}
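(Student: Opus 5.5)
We prove \Cref{thm:low-max-wt-computable} by computing a greedy proper coloring with few classes and then splitting those classes into $k$ groups of nearly equal weight. Since every class obtained by subdividing an independent set is still independent, and the low maximum weight controls rounding errors, no structural graph theory is needed beyond a $(\Delta+1)$-coloring.

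\textbf{Step 1.} Compute a proper coloring $I_1,\dots,I_m$ of $G$ with $m \le \Delta+1$ colors greedily, in polynomial time. Write $W = w(V)$, $\mu = W/k$ and $w_{\max} = \max_v w_v \le \vare \mu$. Note that $m \le \Delta+1 \le \vare k$.

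\textbf{Step 2.} Split each $I_j$ into contiguous chunks. Order the vertices of $I_j$ arbitrarily and cut the sequence into consecutive pieces, each closed as soon as its weight reaches at least $\mu$. Every closed piece then has weight in $[\mu, \mu + w_{\max}) \subseteq [\mu,(1+\vare)\mu)$. The final leftover piece of $I_j$ has weight less than $\mu$.
\begin{itemize}
\item Each independent set yields at most one leftover, so there are at most $m \le \vare k$ leftovers, with total weight less than $\vare k \mu = \vare W$.
\item The number of full pieces is therefore at most $W/\mu = k$ and at least $(W - \vare W)/((1+\vare)\mu) \ge (1-2\vare)k$.
\end{itemize}

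\textbf{Step 3.} Assemble exactly $k$ color classes. Take the full pieces as classes. The fewer than $2\vare k$ remaining classes must be made from the leftovers, and the obstacle is that these may carry tiny total weight. We therefore rebalance. Pool all leftovers and, for each independent set $I_j$, also its full pieces, and redistribute more carefully using a targeted procedure:
\begin{itemize}
\item Let $t$ be the number of classes still needed.
\item Lower the threshold uniformly: recut each $I_j$ greedily with threshold $\tau$ in place of $\mu$.
\item Choose $\tau$ by binary search so that the total number of pieces, leftovers included, is exactly $k$.
\end{itemize}

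\textbf{Step 4.} Bound the leftovers at the chosen threshold. Full pieces again have weight in $[\tau, \tau + w_{\max})$. The count of pieces changes by a bounded amount as $\tau$ varies, so exactness is achievable, possibly by splitting a piece or merging a leftover. The issue is the leftovers of weight below $\tau$, which could be near zero.

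To fix this, we merge each leftover into a full piece of the same $I_j$ whenever that is possible. Merging is safe because both lie in $I_j$, so the union remains independent. After merging, each class has weight in $[\tau,\, 2\tau + w_{\max}]$, which is too loose, so a tighter balancing is required. Instead, we distribute the leftover of $I_j$ across its full pieces: cut $I_j$ into $p_j = \lfloor w(I_j)/\tau \rfloor$ pieces, each of weight within $w_{\max}$ of $w(I_j)/p_j \in [\tau, \tau(1 + 1/p_j))$. The only sets with $p_j$ small are those with $w(I_j) < 2\tau$, and there are at most $m \le \vare k$ of them.

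\textbf{Step 5.} Place the small sets. Each $I_j$ with $w(I_j) < \tau$ is placed as a single class. These are the hardest classes, since they may be arbitrarily light. We handle them by choosing $\tau$ so that all of them can be absorbed, appending each one to some other class via a bipartite pairing. If that fails, we compare via the \EQo\ slack: a light class of weight at least $\tau - w_{\max}$ can be forced by moving vertices in from a same-color piece.

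\textbf{Step 6.} Verify the ratio. Combine the bounds that every class lies in $[(1-2\vare)\tau,\, (1+3\vare)\tau]$ up to one vertex. Then $\frac{1+3\vare}{1-2\vare} \le 1+7\vare$ for $\vare \le 0.1$, which gives the $(1+7\vare)$-\EQo\ guarantee.

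We expect Step 5, handling the independent sets of small total weight, to be the main obstacle.
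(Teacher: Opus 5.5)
There is a genuine gap, and it is structural rather than a matter of tightening Step 5. Apart from Step 5, which only touches sets lighter than $\tau$, every final class in your construction is a subset of a single $I_j$. No scheme of that form can give $(1+O(\vare))$-\EQo. Nothing prevents the $(\Delta+1)$-coloring from having a class with $w(I_1)=1.5\,W/k$ made of tiny vertices. Keeping $I_1$ whole gives a class of weight $1.5\,W/k$, while some class lies below the average $W/k$. Splitting it gives a piece of weight at most $0.75\,W/k$, while some class weighs at least $W/k$. Removing one vertex saves only $\vare W/k$, so both options violate $(1+7\vare)$-\EQo once $\vare<0.04$.

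Your own bound $w(I_j)/p_j<\tau(1+1/p_j)$ shows the same thing. The rounding error is $O(\vare)$ only when $p_j\gtrsim 1/\vare$, so all sets with $p_j$ up to about $1/\vare$ are problematic, not just those with $w(I_j)<2\tau$. Step 5's remedy, appending whole sets to classes of other colors ``via a bipartite pairing'', is not well defined. In general it also destroys independence, since nothing forbids edges between $I_j$ and a piece of $I_{j'}$. The interval $[(1-2\vare)\tau,(1+3\vare)\tau]$ asserted in Step 6 is therefore unsupported. The binary search in Step 3 is unjustified as well: recutting all $m$ sets at once can make the piece count jump by more than one.

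The missing idea is to use the degree bound to insert \emph{individual} vertices into classes of other colors while controlling the added weight. Each uncolored vertex has neighbors in at most $\Delta$ of the $k$ classes. Process the uncolored set $U$ in non-increasing weight order, in batches of $k-\Delta$, and match each batch injectively to admissible classes. Then every class receives at most one vertex per batch, hence extra weight at most $\maxw(U)+w(U)/(k-\Delta)$; this is \Cref{lem:filling-up}.

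With this tool no exact count is needed; you undershoot the target instead:
\begin{enumerate}[(i)]
\item Cut each $I_j$ greedily into minimal pieces of weight at least $(1-2\vare)W/k$. Every piece then weighs at most $(1-\vare)W/k$, and at most $(1-2\vare)W/k$ after removing one vertex.
\item Leftovers total at most $(\Delta+1)W/k\le\vare W$, so there are at least $k$ pieces. Keep any $k$ of them and route everything else, of weight at most $2\vare W$, through the fill-in.
\item Since $k\ge(\Delta+1)/\vare\ge 2\Delta$, each class gains at most $4\vare W/k+\vare W/k$. Hence $w(C_a\setminus\{\bar v\})\le \frac{(1-2\vare)W}{k}\bigl(1+\frac{5\vare}{1-2\vare}\bigr)\le(1+7\vare)\,w(C_b)$.
\end{enumerate}
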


\subsection{Overview of Proof Techniques}
\paragraph{Equitability with respect to a partition.} 
To establish equitability results in the weighted setting (Theorems~\ref{thm:d-equitable} and \ref{thm:eps-equitable}), a key idea in this work is to first prove the existence of colorings that are equitable with respect to a given partition $V_1,\dots,V_d$ of the vertices $V$. Formally, we establish the following theorem on partition equitability. 
\begin{restatable}{theorem}{thmpartitionequitability}\label{thm:partition-equitability}
Given a graph $G = (V, E)$ with maximum degree $\Delta \in \mathbb{Z}_+$, let $(V_1, V_2, \ldots, V_d)$ be any partition of the vertex set $V$ satisfying $|V_1| \le |V_2| \le \ldots \le |V_d|$. Also, let $k \in \mathbb{Z}_{+}$ be any integer such that $k \ge (4\eta + 2)\Delta$, where 
\begin{align*}
    \eta \coloneqq \max_{t \in [d]} \left( \frac{\sum_{j = 1}^{t} |V_j|}{|V_t|} \right).
\end{align*}
Then, the graph $G$ admits a $k$-coloring $(C_1,\ldots, C_k)$ with the property that $|C_i \cap V_j| \in \left \{ \left \lfloor  \frac{|V_j|}{k} \right \rfloor, \left \lceil  \frac{|V_j|}{k} \right \rceil \right \}$ for all $i \in [k]$ and $j \in [d]$. Furthermore, such a coloring can be computed in polynomial time.
\end{restatable}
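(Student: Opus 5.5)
We build the coloring one part at a time, in increasing order $V_1,V_2,\ldots,V_d$, while keeping a proper $k$-coloring of $G[V_1\cup\cdots\cup V_t]$ that is equitable on each of $V_1,\ldots,V_t$. When $V_t$ is processed, the earlier parts are fixed. We must place the vertices of $V_t$ so that each class receives $\lfloor |V_t|/k\rfloor$ or $\lceil |V_t|/k\rceil$ of them, and so that no vertex of $V_t$ lands in a class that already contains one of its neighbours. Vertices of later parts are ignored at this stage; they are handled when their own part is processed, and edges are checked only against vertices already colored.

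The key step is an assignment problem on a bipartite graph. Write $|V_t| = qk + r$ with $0 \le r < k$. Create $\lceil |V_t|/k\rceil$ slots for each of $r$ colors and $\lfloor |V_t|/k\rfloor$ slots for each of the remaining colors. Vertex $v\in V_t$ may use color $i$ if $C_i$ contains no neighbour of $v$ among vertices colored so far. We need a perfect matching of $V_t$ into the slots, and we would verify Hall's condition.

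The first thing to check is how many colors each $v$ may use. Since $v$ has at most $\Delta$ neighbours, $v$ is forbidden from at most $\Delta$ colors and so may use at least $k-\Delta$ of them. This counts neighbours in earlier parts and neighbours in $V_t$ alike. Neighbours inside $V_t$ are the difficulty, because their colors are chosen at the same time as $v$'s.

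To handle this, we split $V_t$ first, following the Hajnal--Szemer\'edi theorem (which can be computed in polynomial time by the Kierstead--Kostochka algorithm). Take an equitable $(\Delta+1)$-coloring of $G[V_t]$, giving independent sets $I_1,\ldots,I_{\Delta+1}$, and assign them one after another. For an independent set there are no internal conflicts. Each vertex now forbids at most $\Delta$ colors, counting everything already placed.

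The difficulty is that one round of distribution cannot make the per-color counts for $V_t$ equitable on its own. Rounding errors from the different $I_s$ pile up, and the targets $\lfloor |V_t|/k\rfloor$ and $\lceil |V_t|/k\rceil$ are global. So instead, we would directly prove Hall's condition for the full bipartite graph between $V_t$ and the slots, and then repair conflicts inside $V_t$ afterwards.

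Here is the plan. Find a slot assignment that respects conflicts with earlier parts; by degree counting, Hall holds when $k\ge 2\Delta$. Then fix internal $V_t$ conflicts by swaps. If $u,v\in V_t$ are adjacent and share color $i$, move $v$ to some color $j$ and move a vertex $x\in V_t\cap C_j$ to $i$. This keeps the counts unchanged. We need $j$ and $x$ such that both moves are conflict-free.

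This is where $\eta$ comes in. The part of $C_j$ already colored has size at most about $\sum_{s\le t}|V_s|/k+t$. The number of colors $j$ that are blocked for $v$, or for which every candidate $x\in V_t\cap C_j$ is blocked, is controlled as follows. It is at most $\Delta$ from $v$'s neighbours, plus colors whose $V_t$-vertices all have neighbours in $C_i$. Now $C_i$ has at most $\eta|V_t|/k+d$ vertices, and these have at most $\Delta(\eta|V_t|/k+d)$ neighbours. A color $j$ is ruled out only if all of its roughly $|V_t|/k$ vertices in $V_t$ are among those neighbours. That rules out at most about $\eta\Delta+\Delta$ colors. A potential argument, where each swap strictly decreases the number of monochromatic edges, finishes the step.

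The case $|V_t|<k$ needs separate care. There each class holds at most one vertex of $V_t$, so there are no internal conflicts, and a direct Hall argument works.

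The main obstacle is making these counts precise, including the additive $d$ terms coming from ceilings. I expect the constant $4\eta+2$ to come from summing the blocked colors: $\Delta$ from direct neighbours, plus about $2\eta\Delta$ from each endpoint of the swap, plus slack. If the swap argument fails, I would instead use alternating-path augmentation, in the style of Kierstead--Kostochka, on the auxiliary digraph of colors in which $j\to i$ when some vertex of $C_j\cap V_t$ can move to $C_i$.

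Polynomial time follows because every step is either a bipartite matching or a swap that strictly decreases the potential.
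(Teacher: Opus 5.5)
Your outer structure (process $V_1,\dots,V_d$ in increasing order and never recolor earlier parts) matches the paper's. However, the proposal lacks the invariant that makes the induction work: a bound on the \emph{total} size of every color class. You bound the already-colored part of a class by summing ceilings, $\sum_{s<t}\lceil |V_s|/k\rceil \le |V_{[t-1]}|/k + t$. The additive $t$ is not controlled by $\eta$: part sizes $1,2,4,\dots$ give $\eta<2$ with $d\approx\log_2 k$. This is not a precision issue, because the procedure as described can get stuck.

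Here is a concrete instance. Take $k=2040$ and $\Delta=204$, with parts $V_1,\dots,V_{10}$ of sizes $1,2,\dots,512$ and $V_{11}$ of size $2040$. Then $\eta<2$ and $k\ge(4\eta+2)\Delta$. Choose $a_s\in V_s$ for each $s\le 10$, split $V_{11}$ into ten blocks of $204$, and join $a_s$ to every vertex of block $s$; all other vertices are isolated. Since each $|V_s|<k$, your small-part step may legally put every $a_s$ into $C_1$. But then every vertex of $V_{11}$ is forbidden from color $1$, while equitability on $V_{11}$ requires exactly one of them in $C_1$.

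In particular, your claim that Hall's condition holds ``by degree counting'' once $k\ge 2\Delta$ is false. When $k$ divides $|V_t|$, the slots have total capacity exactly $|V_t|$, so one color blocked for all of $V_t$ already violates Hall. Here, for $A=V_{11}$, only $k-1$ slots are reachable. Ruling this out needs something like $\Delta|C_j|<|V_t|$ for every class, i.e.\ class sizes of order $\eta|V_t|/k$, and that is exactly where $\eta$ must enter.

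The paper builds this into the induction. Its family $\mathcal{E}_t$ requires, besides equitability on $V_1,\dots,V_t$, that $|C_i|\le\lceil 4|V_{[t]}|/k\rceil$ for all $i$. When $|V_t|\le k/2$, it places $V_t$ at most one vertex per class, using only the $|V_t|+\Delta\le 2k/3$ currently smallest classes (via a bipartite matching). This preserves the size bound, and it is the rule for small parts that your proposal is missing.

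When $|V_t|\ge k/2$, the paper reverses your order of repairs. It first extends greedily to a \emph{proper} coloring. It then rebalances $V_t$ by shortest-path moves in the color digraph, choosing a coloring that minimizes $\sum_i |C_i\cap V_t|^2$. The colors that can reach a minimum class number at most $\Delta$. Counting edges from the $V_t$-vertices outside them into the minimum class, whose size is below $4\eta|V_t|/k$ by the invariant, contradicts $k\ge(4\eta+2)\Delta$.

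Your swap-based repair is also not verified even if you grant the invariant. When $k\le |V_t|<2k$, a class may contain only $\lfloor |V_t|/k\rfloor=1$ vertex of $V_t$, while $C_i$ can have about $8\eta$ vertices. Your count of blocked colors is then of order $8\eta\Delta$, which exceeds the $(4\eta+2)\Delta$ available once $\eta$ is moderately large.
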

This theorem can be viewed as a graph-theoretic variant of the discrete necklace-splitting problem~\cite{alon1987splitting}. In this interpretation, each part $V_i$ corresponds to beads of type $i$, and the objective is to partition the vertex set into independent sets $C_1, \ldots, C_k$ that contain the same number of beads of each type.

The proof of \Cref{thm:partition-equitability} proceeds via an inductive path-augmentation argument. At a high level, we show that if a maintained coloring $(C_1,\dots,C_k)$ is equitable with respect to the parts $V_1,\dots,V_t$ but not $V_{t+1}$, then (using that $k$ is sufficiently large) we can modify the coloring of the vertices in $V_{t+1}$ to extend equitability to this part as well. To implement this extension, we consider a directed graph $H=([k],E')$ in which there is an edge from $i$ to $j$ if there exists a vertex in $C_i \cap V_{t+1}$ that can be moved to $C_j$ while preserving independence. We show that, if $(C_1,\dots,C_k)$ is not equitable with respect to $V_{t+1}$, then there exists a path in $H$ such that augmenting along this path improves equitability with respect to $V_{t+1}$.

\paragraph{Reduction from \EQo to partition equitability.}
Theorems~\ref{thm:d-equitable} and \ref{thm:eps-equitable} use Theorem~\ref{thm:partition-equitability} to construct colorings that are (approximately) equitable with respect to vertex weights. To see why this is possible, consider the following partition: let $V_1$ consist of the $k$ highest-weight vertices in $V$, $V_2$ consist of the $k$ highest-weight vertices in $V \setminus V_1$, and so on. Any $k$-coloring that is equitable with respect to this partition is \EQo under the weights. However, for this partition, the $\eta$ parameter in Theorem~\ref{thm:partition-equitability} may be large. Theorem~\ref{thm:d-equitable} is obtained by making minor modifications to this partition to control $\eta$. In contrast, Theorem~\ref{thm:eps-equitable} follows from an intricate and novel bucketing argument.

Specifically, for Theorem~\ref{thm:eps-equitable}, we partition the vertices into buckets $B_1, B_2, \ldots$, where each $B_j$ contains vertices whose scaled weights lie in the interval $\left( \frac{1}{(1+\varepsilon)^j}, \frac{1}{(1+\varepsilon)^{j-1}} \right]$. We then adjust these buckets by removing some vertices so that the bucket sizes grow in a roughly geometric manner. This geometrically increasing sequence of bucket sizes ensures that the $\eta$ parameter in Theorem~\ref{thm:partition-equitability} is small. In addition, we show that the total weight of the removed vertices is relatively small. We construct the final coloring in two steps: first, we apply Theorem~\ref{thm:partition-equitability} to the buckets, and then properly color the remaining vertices. Since the total weight of the removed vertices is small, the second step has only a bounded impact on the weight of each color class. This overall enables us to establish a $(1 + \vare)$-\EQo guarantee. 

\paragraph{Greedy algorithm for graphs with low maximum weight.}
Theorems~\ref{thm:chromatic-number} and \ref{thm:low-max-wt-computable} address graphs with low maximum weight, i.e., $\max_{v \in V} w_v \le \varepsilon \frac{w(V)}{k}$ for small $\varepsilon > 0$. In this setting, we present a greedy algorithm that converts any $\kappa$-coloring $(C'_1,\dots,C'_\kappa)$ into a $(1+c\varepsilon)$-\EQo $k$-coloring for any $k \ge \max \ \left\{\frac{\kappa}{\varepsilon}, 2\Delta \right\}$, where $c$ is a fixed constant.

The conversion is straightforward: starting from the coloring $(C'_1,\dots,C'_\kappa)$, we further partition each color class $C'_i$ into subsets $D_{i,1},\dots,D_{i,\ell_i}$ such that each $D_{i,j}$ (except possibly $D_{i,\ell_i}$) has weight in the interval $\left [(1 -2\vare) \frac{w(V)}{k}, (1 - \vare)\frac{w(V)}{k} \right ]$. This can be achieved via a greedy algorithm, since each vertex has weight at most $\vare \frac{w(V)}{k}$. We apply this procedure to each color class $C'_i$ to obtain a new coloring $(C_1, \dots, C_{k})$ in which every color class has weight in the interval $\left [(1 -2\vare) \frac{w(V)}{k}, (1 - \vare)\frac{w(V)}{k} \right ]$. 

This new coloring may leave some vertices uncolored. We then extend it by properly coloring the remaining vertices, and show that this step does not significantly affect the weight of each color class. Overall, this yields a $(1 + c \vare)$-\EQo $k$-coloring. 

Applying this procedure with different initial colorings $(C'_1,\ldots,C'_\kappa)$ gives us the two stated results. Theorem~\ref{thm:chromatic-number} follows by applying the procedure to a $\chi(G)$-coloring, where $\chi(G)$ denotes the chromatic number of $G$. Theorem~\ref{thm:low-max-wt-computable} follows by applying the procedure to a $(\Delta+1)$-coloring.

\subsection{Additional Related Work}
To the best of our knowledge, the only other works that study graph colorings with vertex weights are \cite{Pemmaraju2008SymmetryBreaking} and \cite{Hummel2022ConflictingItems}. Using the probabilistic method and assuming the weight of each vertex lies in the interval $[0, 1]$, \cite{Pemmaraju2008SymmetryBreaking} show that there exists a $(\Delta + 1)$-coloring $(C_1, \ldots, C_{\Delta + 1})$ such that each color class $C_j$ satisfies
\begin{align*}
w(C_j) \ge \left ( 1 - \frac1e \right )\frac{w(V)}{\Delta + 1} - 5\sqrt{w(V)\log(\Delta + 1)}.
\end{align*}
This result is not directly comparable to ours, since it establishes a lower bound on the weights but does not guarantee that the color-class weights are close to one another. Moreover, in contrast to the main results of this work, the bound from \cite{Pemmaraju2008SymmetryBreaking} is meaningful only when the maximum weight is small relative to the total weight; in particular, when $\max_{v \in V} w_v \le 1  \leq \frac{w(V)}{(\Delta + 1)^2 \log(\Delta + 1)}$.

In Appendix~\ref{apdx:probabilistic-method}, we generalize the randomized coloring method of \cite{Pemmaraju2008SymmetryBreaking} to construct approximately equitable $k$-colorings for any $k \ge \frac{\Delta+1}{\varepsilon}$. We establish the following theorem.

\begin{restatable}{theorem}{thmprobabilisticmethod}\label{thm:probabilistic-method}
Let $G = (V, E, w)$ be a vertex-weighted graph with maximum-degree $\Delta$. For any $\vare \in (0, 0.01]$, if $k \ge \frac{\Delta + 1}{\vare}$ and $\max_{v \in V} \ w_v \le \frac{\vare^2 w(V)}{k^2 \ln{k}}$, then there exists a $(1 + 25\vare)$-\EQo $k$-coloring in the graph $G$. 
\end{restatable}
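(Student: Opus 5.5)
We will follow the randomized coloring scheme of \cite{Pemmaraju2008SymmetryBreaking}, adapted to $k$ colors, and then fix up the result. Order the vertices arbitrarily. Each vertex $v$ picks a color $c(v)$ uniformly at random from $[k]$, independently. A vertex keeps its color if no earlier neighbor chose the same color; otherwise it is left uncolored. The kept vertices form independent sets $C'_1,\dots,C'_k$. For each $v$ and $j$ we have
\[
\Pr[v \in C'_j] = \tfrac1k\left(1-\tfrac1k\right)^{d^-(v)} \ge \tfrac1k\left(1-\tfrac1k\right)^{\Delta} \ge \tfrac{1-\vare}{k}
\]
when $k \ge (\Delta+1)/\vare$. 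Similarly, the expected weight left uncolored is at most $\vare\, w(V)$. Hence $\E[w(C'_j)] \in \left[(1-\vare)\frac{w(V)}{k}, \frac{w(V)}{k}\right]$ for every $j$.

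The key step is concentration of each $w(C'_j)$. The indicator of $v \in C'_j$ depends only on the choices $c(u)$ for $u$ in the closed neighborhood of $v$. Changing a single choice $c(u)$ therefore affects only $u$ and its at most $\Delta$ neighbors, so it changes $w(C'_j)$ by at most $(\Delta+1)w_{\max}$. This bound is too crude, so we use the finer structure. A vertex $v$ joins $C'_j$ only if $c(v)=j$, so the Lipschitz contribution of $c(u)$ matters only through vertices with color $j$. We plan to apply a bounded-differences (McDiarmid) inequality over the $n$ independent choices, with Lipschitz constants $c_u = w_u + \sum_{v\in N(u)} w_v$. We then bound
\[
\sum_u c_u^2 \le (\Delta+1)^2 w_{\max}\, w(V)\cdot O(1).
\]
This yields
\[
\Pr\left[\,\left|w(C'_j)-\E[w(C'_j)]\right| \ge \vare \tfrac{w(V)}{k}\right] \le 2\exp\left(-\Omega\left(\frac{\vare^2 w(V)}{k^2 (\Delta+1)^2 w_{\max}/k^2 \cdot \ldots}\right)\right).
\]
The hypothesis $w_{\max}\le \frac{\vare^2 w(V)}{k^2\ln k}$ is what we will use to make the exponent at least $c\ln k$, so that a union bound over the $k$ classes succeeds. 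Since $\Delta+1\le \vare k$, the factor $(\Delta+1)^2$ is absorbed: it contributes $\vare^2k^2$, which cancels against the $\vare^2$ and $k^2$ in the hypothesis. Getting this bookkeeping to close is the main obstacle. If the plain McDiarmid constant is too lossy, we would instead use a Janson-type bound via the dependency graph (each indicator depends on at most $O(\Delta^2)$ others), or the tail bound of Theorem~\ref{thm:tail-bounds} applied to indicators with ranges $[0,w_v]$ on the square of $G$. Either way, with positive probability every $w(C'_j)\in\left[(1-2\vare)\frac{w(V)}{k},(1+\vare)\frac{w(V)}{k}\right]$. A second concentration argument, or Markov's inequality with a constant factor, shows that the uncolored weight $U$ is at most $O(\vare) w(V)$ on the same event.

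Finally, we color the uncolored vertices one at a time. Each uncolored $v$ has at most $\Delta$ neighbors, so at most $\Delta$ colors are forbidden. Among the remaining $k-\Delta \ge (1-\vare)k$ colors, we put $v$ into the one of currently minimum weight. This is the same extension step as in the greedy method for Theorem~\ref{thm:low-max-wt-computable}. An averaging argument shows that the chosen class weighs at most the average of the $(1-\vare)k$ lightest classes. So every class ends below $(1+O(\vare))\frac{w(V)}{k}$, up to its last added vertex, which is removed in the \EQo comparison. Meanwhile, the minimum stays at least $(1-2\vare)\frac{w(V)}{k}$. Taking ratios gives $(1+O(\vare))$-\EQo, and tracking constants with $\vare\le 0.01$ should give the factor $1+25\vare$.
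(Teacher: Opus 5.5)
There is a genuine gap in the concentration step, which is the heart of the argument, and the bookkeeping you sketch to close it contains an algebraic error. With worst-case Lipschitz constants $c_u = w_u + \sum_{v\in N(u)} w_v$ you correctly get $\sum_u c_u^2 \le (\Delta+1)^2 w_{\max} w(V)$. McDiarmid with deviation $t = C\vare w(V)/k$ then yields only
\[
\frac{2t^2}{\sum_u c_u^2} \;\ge\; \frac{2C^2\vare^2 w(V)}{k^2(\Delta+1)^2 w_{\max}} \;\ge\; \frac{2C^2\ln k}{(\Delta+1)^2},
\]
using $w_{\max}\le \vare^2 w(V)/(k^2\ln k)$. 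The $\vare^2$ and $k^2$ in the hypothesis are already used up in cancelling $t^2=C^2\vare^2w(V)^2/k^2$, so nothing remains to cancel $(\Delta+1)^2$. Substituting $\Delta+1\le\vare k$ only gives the weaker bound $2C^2\ln k/(\vare^2k^2)$. Since $C$ must be a constant to reach a $(1+O(\vare))$ guarantee, the union bound over $k$ classes fails once $\Delta$ exceeds a constant. This route would need the stronger hypothesis $w_{\max} = O\bigl(\vare^2 w(V)/(\Delta^2 k^2\ln k)\bigr)$. Your fallbacks lose the same factor. The dependency graph of the indicators is essentially $G^2$, with maximum degree up to $\Delta^2$, so Janson's Hoeffding-type bound, or Theorem~\ref{thm:tail-bounds} applied on $G^2$, gives an exponent of order $\ln k/\Delta^2$. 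Any argument that charges a change of $c(u)$ to all neighbors of $u$ in the worst case pays this $\Delta^2$.

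The missing idea is to use your own observation, that $v$ joins $C'_j$ only if $c(v)=j$, through a Doob martingale rather than worst-case Lipschitz constants; this is what the paper does. Reveal the tentative colors in the fixed order and apply Azuma (Theorem~\ref{thm:azuma}) to the conditional expectations. Changing $c(u)$, with earlier colors fixed, has three kinds of effect:
\begin{itemize}
\item It changes $u$'s own contribution by at most $w_u$.
\item It does not affect earlier vertices or non-neighbors, since blocking depends only on tentative colors.
\item For a later neighbor $v\in N^+(u)$, whose color is still unrevealed, membership in $C'_j$ can change only if $c(v)=j$, which has conditional probability $1/k$.
\end{itemize}
Hence the martingale differences satisfy $c_u\le w_u+\frac1k\sum_{v\in N^+(u)}w_v<3w_{\max}$ and $\sum_u c_u<3w(V)$. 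So $\sum_u c_u^2<9w_{\max}w(V)$, with no dependence on $\Delta$. Azuma then gives $\Pr\bigl[|w(C'_j)-\E[w(C'_j)]|>6\sqrt{w_{\max}w(V)\ln k}\bigr]\le 2k^{-2}$. The hypothesis bounds this deviation by $6\vare w(V)/k$, and the union bound over the $k$ classes succeeds.

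You must then work with the band $[(1-7\vare),(1+6\vare)]\,w(V)/k$ rather than your $[(1-2\vare),(1+\vare)]\,w(V)/k$, because a deviation of only $\vare w(V)/k$ is not affordable under this hypothesis. The resulting ratio $(1+6\vare)/(1-7\vare)$ still fits within $1+25\vare$ for $\vare\le 0.01$.

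Your greedy completion is a valid alternative to the paper's use of Lemma~\ref{lem:filling-up}, but the averaging claim needs correcting. The lightest of the at least $k-\Delta$ permissible classes weighs at most their average, hence at most $w(V)/(k-\Delta)<w(V)/((1-\vare)k)$. That bound controls every class minus its last-added vertex.
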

This result is strictly weaker than Theorem~\ref{thm:low-max-wt-computable}, highlighting a limitation of randomized coloring for obtaining approximate \EQo guarantees.

\cite{Hummel2022ConflictingItems} study fair allocation with conflicting items, and some of their fair division results provide equitable graph coloring guarantees. Specifically, they show that an \EQo $k$-coloring exists for all $k \ge \Delta + 1$ in the following two specific cases:
\begin{enumerate}[(a)]
    \item each connected component in the graph $G$ has size at most $k$, and
    \item $G$ is a disjoint union of paths. 
\end{enumerate}
They also show that an \EQo $k$-coloring may not exist when $k = \Delta +1$; note that \Cref{thm:lower-bound} provides a stronger impossibility result. The positive results in \cite{Hummel2022ConflictingItems} rely on nontrivial ideas but apply to specific classes of graphs. By contrast, our guarantees hold for general graphs.

\paragraph{Organization.} Section~\ref{sec:delta-plus-one} presents our results on equitable colorings with $\Delta + 1$ colors. Section~\ref{sec:partition-equitability} develops our results on partition equitability and establishes key technical tools used throughout the remainder of the paper. Sections~\ref{sec:exact-eqo} and \ref{section:apx-eqo} provide guarantees for exact \EQo colorings and $(1+\varepsilon)$-\EQo colorings under vertex weights. Finally, Section~\ref{sec:low-max-wt-eqo} presents our results on $(1+\varepsilon)$-\EQo colorings for graphs with low maximum vertex weight.

The proofs of the tail bounds (\Cref{thm:tail-bounds}) and the randomized coloring result (\Cref{thm:probabilistic-method}) are deferred to the appendix, along with a counterexample for equitability up to {\em any} vertex ($\alpha$-EQX).

\section{Preliminaries}\label{sec:prelims}
We use $[k]$ to denote the set $\{1, 2,\ldots, k\}$.
We will, throughout, write $G = (V, E, w)$ to denote a vertex-weighted undirected graph with $n = |V|$ vertices and maximum degree $\Delta \in \mathbb{Z}_+$; here, the vertices $v \in V$ have nonnegative weights $w_v \in \mathbb{Q}$. For any subset of vertices $S \subseteq V$, let $w(S)$ denote the total weight of the set $S$, i.e., $w(S) = \sum_{v \in S} w_v$. Also, let $\minw(S) \coloneqq \min_{v \in S} w_v$ and $\maxw(S) \coloneqq \max_{v \in S} w_v$ respectively denote the minimum and the maximum weight among the vertices in $S$.  

This work studies the problem of partitioning the vertex set $V$ into independent subsets, i.e., subsets in which no two vertices are connected by an edge. 

\begin{definition}[Coloring]
Given a graph $G = (V, E)$, a $k$-partition $(C_1, \ldots, C_k)$ of the set of vertices $V$ is said to be a {\em $k$-coloring} (or simply a coloring) of $G$ if each $C_i$ is an independent set in $G$. 
\end{definition}
Note that the chromatic number of a graph, $\chi(G)$, is the smallest $k$ for which a $k$-coloring exists.

\begin{definition}[Partial Coloring]
Given a graph $G = (V, E)$, a  {\em partial  $k$-coloring} of $G$ is a collection of $k$ pairwise-disjoint independent subsets $C'_1,\ldots, C'_k \subseteq V$. Here, the vertices in $\cup_{i=1}^k C'_i \subset V$ are said to be colored, while the remaining vertices are uncolored. 
\end{definition}

Our overarching goal is to construct $k$-colorings such that the color classes have approximately the same weight. Formally, 

\begin{definition}[Approximate Equitability up to One Vertex --- $\alpha$-\EQo]
\label{defn:eqo}
Given a vertex-weighted graph $G = (V, E, w)$ and parameter $\alpha \ge 1$, a coloring $(C_1, \ldots, C_k)$ is said to be $\alpha$-equitable up to one vertex ($\alpha$-\EQo) if, for each $i, j \in [k]$ with $C_i \neq \emptyset$, there exists a vertex $\bar{v} \in C_i$ such that $w(C _i\setminus \{\bar v\}) \le \alpha w(C_j)$. When $\alpha = 1$, we say that the coloring $(C_1, \ldots, C_k)$ is (exact) \EQo.
\end{definition}

For any graph $G = (V, E)$ and subset of vertices $S \subseteq V$, we use $G[S]$ to denote the subgraph of $G$ induced by $S$. That is, $G[S]$ is a graph with vertex set $S$, where an edge exists between $i, j \in S$ if and only if it exists in $G$.

\section{Equitable Colorings with $\Delta + 1$ Colors}\label{sec:delta-plus-one}
This section shows that every vertex-weighted graph, with maximum degree $\Delta$, admits a $2$-\EQo coloring for any number of colors $k \geq \Delta + 1$.

\begin{algorithm}[h]
\caption{Finding a $2$-\EQo Coloring}
\label{algo:2-eqo}
\begin{algorithmic}[1]
\Require Vertex-weighted graph $G=(V, E, w)$ and integer $k \in \mathbb{Z}_+$.
\Ensure A $2$-\EQo $k$-coloring $(C^*_1, \dots, C^*_k)$. 
\State Initialize set of vertices $U = V$ and coloring $(C_1, \ldots, C_k) = (\emptyset, \ldots, \emptyset)$.
\While{there exists $j \in [k]$ and an independent set $X \subseteq U$ such that $w(C_j) < w(X)$}
    \State Let $I \subseteq U$ be a minimally envied independent set and let $i \in [k]$ be such that $w(C_i) < w(I)$.
    \State Update $C_i = I$ and $U = V \setminus \left( \cup_{a=1}^k C_a \right)$.
\EndWhile
\State Initialize $C^*_i = C_i$ for each $i \in [k]$. 
\While{$V \setminus \left( \cup_{a=1}^k C^*_a \right) \neq \emptyset$}
\State Select any vertex $u \in V \setminus \left( \cup_{a=1}^k C^*_a \right)$ and for the vertex find a color class $C^*_i$ that contains none of its neighbors; since $k \geq \Delta +1$, such a class $C^*_i$ exists. 
\State Update $C^*_i \leftarrow C^*_i \cup \{ u \}$. 
\EndWhile 
\State \Return $k$-coloring $(C^*_1, \dots, C^*_k)$.
\end{algorithmic}
\end{algorithm}

\thmdeltaplusone*
\begin{proof}
We provide a constructive proof showing that the desired coloring can be computed via an adaptation of the feasible EFX algorithm from \cite{Barman2023FeasiblEFX}.

To describe the algorithm, we first introduce some notation. Under a (partial) coloring $(C_1, \ldots, C_k)$, we say that $i \in [k]$ envies a subset of vertices $S \subseteq V$ if $w(C_i) < w(S)$. A subset $S \subseteq V$ is said to be a minimally envied if there exists some $i \in [k]$ that envies $S$, but no proper subset of $S$ is envied by any $j \in [k]$.

The $2$-\EQo algorithm (Algorithm \ref{algo:2-eqo}) is as follows. We begin with the empty partial coloring $(C_1, \dots, C_k)$, where each $C_i = \emptyset$, and the set of uncolored vertices $U = V$. In each iteration, we find a minimally envied independent subset $I \subseteq U$ and swap it with a color class $C_i$ such that $i \in [k]$ envies $I$. We repeat this step until no independent subset of the uncolored vertices $U$ is envied by any $j \in [k]$. This procedure terminates since $\sum_{i = 1}^k w(C_i)$ strictly increases in every iteration. Finally, we complete the coloring by iteratively assigning each uncolored vertex $u \in U$ to a color class containing none of its neighbors. Such a color class always exists for every $u$, since $k \geq \Delta + 1$.
 
Let $(C_1, \dots, C_k)$ be the coloring obtained from this swapping procedure. It is a $k$-coloring of the subgraph $G[V']$, where $V' \coloneqq  \cup_{i=1}^k C_i \subseteq V$; it is indeed a valid coloring since, in each iteration, we swap in a subset $I$ that is independent in $G$. Finally, let $(C^*_1, \dots, C^*_k)$ denote the coloring of $G$ obtained by completing $(C_1, \ldots, C_k)$.

To prove that $(C^*_1, \dots, C^*_k)$ is $2$-\EQo, fix any two color classes $C^*_i$ and $C^*_j$. We have $C_i \subseteq C^*_i$ and $C_j \subseteq C^*_j$. Furthermore, there exists a $\bar{v} \in C_i$ such that $w(C_i \setminus \{\bar v\}) \le w(C_j)$; this follows because $C_i$ was chosen as a minimally envied subset when it was assigned. Moreover, $C^*_i \setminus C_i \subseteq C^*_i$ is an independent subset in $G$ and $C^*_i \setminus C_i$ is contained in $U = V \setminus \left( \cup_{a=1}^k C_a \right)$. Hence, by the termination condition of the algorithm's first while-loop, $w(C_j) \geq w(C^*_i \setminus C_i)$. Combining these observations, we obtain 
\begin{align}
         w(C^*_i \setminus \{\bar v\}) \le w(C_i \setminus \{\bar v\}) + w(C^*_i \setminus C_i) \le w(C_j) + w(C_j) \le 2w(C_j) \le 2w(C^*_j) \label{ineq:subadd} 
\end{align}
Therefore, $(C^*_1, \dots, C^*_k)$ is $2$-\EQo and the theorem stands proved. 
\end{proof}

\begin{remark}
\Cref{thm:delta-plus-one} extends to settings in which the weights assigned to vertex subsets are subadditive rather than additive. Specifically, for any subadditive function $w: 2^V \mapsto \mathbb{R}_{\geq 0}$ and any $k \geq \Delta + 1$, there exists a $k$-coloring $(C_1, \ldots, C_k)$ that satisfies the $2$-\EQo condition under $w$, i.e., for every $i,j \in [k]$, there exists a vertex $v \in C_i$ such that $w(C_i \setminus {v}) \leq 2 w(C_j)$. The proof of this extension is identical to that of \Cref{thm:delta-plus-one}; in particular, inequality~(\ref{ineq:subadd}) continues to hold even when $w$ is subadditive.     
\end{remark}

We complement \Cref{thm:delta-plus-one} with the following lower bound. 
\thmlowerbound*
\begin{proof}
The instance consists of a complete bipartite graph $K_{\Delta, \Delta}$ along with a separate clique $K_{\Delta + 1}$\footnote{This instance is similar to the lower bound instance of \cite{Hummel2022ConflictingItems} who also use complete bipartite graphs. Our improved lower bound is achieved using a careful setting of weights and a slightly more involved parity-based argument.}. Let $L$ and $R$ denote the two parts   of $K_{\Delta, \Delta}$. Each vertex in $L$ has weight $2$, and each vertex in $R$ has weight $\sqrt{2}$. All vertices in the clique $K_{\Delta + 1}$ have weight $0$. 

Assume towards a contradiction that there exists a $k$-coloring $(C_1, \ldots, C_k)$ with $k < \frac{3\Delta}{2}$ which is $\alpha$-\EQo for some $\alpha < \sqrt{2}$. Note that $K_{\Delta + 1}$ cannot be colored with fewer than $\Delta + 1$ colors, and hence $k > \Delta$.

Let $N \coloneqq \left\{i \in [k] \, |\, C_i \cap R \ne \emptyset \right\}$ be the set of color classes that contain vertices from $R$. Also, write $N^c = [k] \setminus N$. Note that $\sum_{i \in R} |C_i \cap R| = |R| = \Delta$ and $\Delta$ is odd. Hence, there must be at least one $i \in N$ for which $|C_i \cap R|$ is odd; let $i^*$ be one such index. 

For the independent set $C_{i^*}$, we have $C_{i^*} \cap R \neq \emptyset$, and hence $C_{i^*}$ contains no vertices from $L$. Furthermore, considering the weights of the vertices in the graph we obtain $w\left(C_{i^*} \right) = \sqrt{2} \ |C_{i^*} \cap R|$. We consider two exhaustive cases, depending on whether  $\left|C_{i^*} \cap R\right|= 1$ or $\left|C_{i^*} \cap R\right|$ is an odd number at least $3$. 

\noindent
\textbf{Case 1:} $|C_{i^*} \cap R| \ge 3$. In this case, even after removing any vertex $\overline{v} \in C_{i^*}$, we have $w \left(C_{i^*} \setminus \{ \overline{v} \} \right) \geq 2\sqrt{2}$. Hence, for the coloring to be $\alpha$-\EQo (for some $\alpha < \sqrt{2}$), all other color classes must have weight strictly greater than two. This implies that all color classes in $N$ must have at least two vertices from $R$, and all color classes in $N^c$ must have at least two vertices from $L$.
These two observations together imply $k \le \Delta$ which is less than the chromatic number of the graph. 

\noindent
\textbf{Case 2:} $|C_{i^*} \cap R| = 1$. In this case, we have $w(C_{i*}) = \sqrt{2}$ and $\alpha w(C_{i^*}) < 2$ (when $\alpha < \sqrt{2}$). If there is some color class $C_j$ that contains at least two vertices in $L$, then $w(C_j \setminus \{ \overline{v} \}) \ge 2$ for any $\overline{v} \in C_j$. Therefore, $w(C_j \setminus \{ \overline{v} \}) > \alpha w(C_{i^*})$, which is a contradiction to the coloring being $\alpha$-\EQo. This implies all color classes in $N^c$ must have at most one vertex from $L$; mathematically, $|N^c| \ge \Delta$. 

In addition, from Case 1, we assume that for all $i \in N$, we have $|C_i \cap R| \le 2$. This implies $|N| \ge \frac{\Delta}{2}$. Combining the two observations $|N^c| \ge \Delta$ and $|N| \ge \frac{\Delta}{2}$, we obtain $k \ge \frac{3\Delta}{2}$, a contradiction. \\

Overall, the case analysis shows that, for any $k < \frac{3 \Delta}{2}$, the graph does not admit a $k$-coloring that is $\alpha$-\EQo for some $\alpha < \sqrt{2}$. This completes the proof. 
\end{proof}

\section{Partition Equitability}\label{sec:partition-equitability}
This section establishes that for any given partition $(V_1, \ldots, V_d)$ of the vertex set $V$, and for any number of colors $k$ sufficiently large relative to $d$, there always exists a $k$-coloring $(C_1, \ldots, C_k)$ that is equitable with respect to the partition: $|C_i \cap V_j| \in \left \{ \left \lfloor  \frac{|V_j|}{k} \right \rfloor, \left \lceil  \frac{|V_j|}{k} \right \rceil \right \}$, for each $i \in [k]$ and $j \in [d]$. We will subsequently use this result to establish equitability under weights.

\thmpartitionequitability*

Our proof will use the following  lemma for extending colorings. 

\begin{lemma}\label{lem:atmost-one-addition}
Given a graph $G = (V, E)$ with maximum degree $\Delta$ and an integer $k \geq \Delta +1$, let vertex subset $V' \subset V$ be such that $|V\setminus V'| \leq k - \Delta$, and let $(C_1, \dots, C_k)$ be a coloring of the subgraph $G[V']$. Then there exists a coloring $(\widetilde{C}_1, \dots, \widetilde{C}_k)$ of $G$ such that  $C_i \subseteq \widetilde{C}_i$ and $|\widetilde{C}_i| \le |C_i| + 1$ for every $i \in [k]$. Furthermore, such a coloring $(\widetilde{C}_1, \dots, \widetilde{C}_k)$ can computed in polynomial time given the coloring $(C_1, \dots, C_k)$.
\end{lemma}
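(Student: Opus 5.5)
We add the uncolored vertices $U \coloneqq V \setminus V'$ one at a time. Throughout we keep track of which color classes have already received a new vertex, and we always place the next vertex into a class that has not. Write $U = \{u_1, \ldots, u_m\}$ with $m \le k - \Delta$. Start from $\widetilde{C}_i = C_i$ for all $i \in [k]$, and let $A \subseteq [k]$ be the set of \emph{used} indices, initially $A = \emptyset$. For $s = 1, \ldots, m$ we look for an index $i \notin A$ such that $\widetilde{C}_i$ contains no neighbor of $u_s$. We then set $\widetilde{C}_i \leftarrow \widetilde{C}_i \cup \{u_s\}$ and $A \leftarrow A \cup \{i\}$.

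If such an index always exists, the final partition has the required properties:
\begin{itemize}
\item each $\widetilde{C}_i$ is independent, because a vertex is only added to a class containing none of its neighbors;
\item $C_i \subseteq \widetilde{C}_i$, because we only add vertices;
\item $|\widetilde{C}_i| \le |C_i| + 1$, because each index enters $A$ at most once.
\end{itemize}

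The key step is showing that a valid index exists at step $s$. When we process $u_s$, we have $|A| = s - 1 \le m - 1 \le k - \Delta - 1$, so at least $\Delta + 1$ indices lie outside $A$. The vertex $u_s$ has at most $\Delta$ neighbors in $G$. These include neighbors in $V'$ and possibly previously placed vertices of $U$. The classes are disjoint, so these neighbors meet at most $\Delta$ distinct classes. Hence among the at least $\Delta + 1$ unused indices, at least one index $i \notin A$ has $\widetilde{C}_i$ free of neighbors of $u_s$.

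One might worry that neighbors of $u_s$ inside $U$ could block extra classes. This is already handled, since all neighbors are counted together against the single degree bound $\Delta$.

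The procedure runs in polynomial time: there are $m \le n$ steps, and each step scans at most $k$ classes and at most $\Delta$ neighbors. If $k > n$, only the nonempty classes, the classes in $A$, and $\Delta+1$ spare empty indices need to be examined, so the running time stays polynomial. I do not see a genuine obstacle beyond the counting in the key step.
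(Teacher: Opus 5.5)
Your proof is correct and uses the same counting as the paper: every uncolored vertex has at least $k-\Delta$ admissible classes, and there are at most $k-\Delta$ uncolored vertices, so the vertices can be sent injectively to admissible classes. The only difference is that the paper obtains this injective assignment as a bipartite matching (via a maximum-matching algorithm), whereas you show that a greedy pass already suffices, which is equally valid and slightly more elementary.
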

\begin{proof}
Each vertex $v \in V \setminus V'$ has a neighbor in at most $\Delta$ color classes among $(C_1, \dots, C_k)$, and hence can be assigned to at least $k-\Delta$ color classes without violating their independence. Since $|V \setminus V'| \le k - \Delta$, there exists a one-to-one mapping (matching) from the vertices in $V \setminus V'$ to the color classes $(C_1, \dots, C_k)$ such that each vertex  $v \in V \setminus V'$ can be included in the color class its mapped to without violating independence. Assigning vertices to color classes according to this mapping gives us a coloring $(\widetilde{C}_1, \dots, \widetilde{C}_k)$ of $G$ that satisfies the lemma's conditions. Additionally, such a mapping can be computed via a maximum matching algorithm, implying that $(\widetilde{C}_1, \dots, \widetilde{C}_k)$ can be computed in polynomial time.
\end{proof}
\subsection{Proof of \Cref{thm:partition-equitability}}
Our proof proceeds as follows: we first establish the existence of a partial coloring $(C_1, \ldots, C_k)$ of $V_1$ that is equitable with respect to this subset, i.e., the partial coloring satisfies $|C_i \cap V_1| \in \left \{ \left \lfloor  \frac{|V_1|}{k} \right \rfloor, \left \lceil  \frac{|V_1|}{k} \right \rceil \right \}$ for each $i \in [k]$. Then, we extend the partial coloring to one that is also equitable for $V_2$ without changing the coloring of the vertices in $V_1$. Subsequently we proceed to $V_3$, and so on.

We define some notation for the proof. Let $V_{[t]}$ denote the set $\bigcup_{j = 1}^t V_j$ and, for any graph $G'$, write $\cal C_k(G')$ to denote the set of all colorings of $G'$ with (exactly) $k$ colors. 

In addition, for each index $t \in [d]$, let $\cal E_t$ denote the set of colorings in $\cal{C}_k(G[V_{[t]}])$ that are 
\begin{inparaenum}[(a)]
    \item equitable with respect to the first $t$ subsets $V_1, V_2, \ldots, V_t$, and
    \item each $C_i$ has size at most $\left \lceil \frac{4|V_{[t]}|}{k} \right \rceil$.
\end{inparaenum}
Formally,  
\begin{align*}
    \cal{E}_t \coloneqq  \left \{(C_1, \ldots, C_k) \in \cal{C}_k(G[V_{[t]}]) \, \middle | \, \text{for each $i \in [k]$}, \ |C_i \cap V_j| \in \left \{ \left \lfloor  \frac{|V_j|}{k} \right \rfloor, \left \lceil  \frac{|V_j|}{k} \right \rceil \right \} \text{ for each } j \in [t] \right .\\
    \left. \text{ and } |C_i| \le \left \lceil \frac{4|V_{[t]}|}{k} \right \rceil \right \}. 
\end{align*}
Similarly, let $\cal F_t$ denote the set of colorings in $\cal{C}_k(G[V_{[t]}])$ that are
\begin{inparaenum}[(a)]
    \item equitable with respect to the first $(t-1)$ subsets $V_1, V_2, \ldots, V_{t-1}$, and
    \item each set $\left( C_i \cap V_{[t-1]} \right)$ has size at most $\left \lceil \frac{4|V_{[t-1]}|}{k} \right \rceil$.
\end{inparaenum}
Specifically, 
\begin{align*}
    \cal{F}_t \coloneqq \left \{(C_1, \ldots, C_k) \in \cal{C}_k(G[V_{[t]}]) \, \middle | \, \text{for each $i \in [k]$}, \ |C_i \cap V_j| \in \left \{ \left \lfloor  \frac{|V_j|}{k} \right \rfloor, \left \lceil  \frac{|V_j|}{k} \right \rceil \right \} \text{ for each } j \in [t-1] \right .
    \\ \left .\text{ and } |C_i \cap [V_{[t-1]}]| \le \left \lceil \frac{4|V_{[t-1]}|}{k} \right \rceil \right \}. 
\end{align*}
Note that both $\cal{E}_t$ and $\cal{F}_t$ consist of $k$-colorings of $G[V_{[t]}]$. 

Our proof has two inductive steps over the indices $t$: \begin{inparaenum}[(a)]
    \item We first show that if $\cal E_{t-1}$ is nonempty, then $\cal F_t$ is also nonempty; and 
    \item we then show that if $\cal F_t$ is nonempty, then $\cal E_t$ is nonempty.
\end{inparaenum}

For the base case, $\cal E_1$ is nonempty by the Hajnal-Szemer\'{e}di Theorem~\cite{HajnalSzemeredi1970}; note that for $\cal E_1$, any coloring that is equitable with respect to $V_1 = V_{[1]}$ satisfies the cardinality constraint 
$|C_i \cap V_{[1]}| \leq \left\lceil \frac{4 |V_{[1]}|}{k} \right\rceil$.

Moreover, the implication $\cal{E}_d \neq \emptyset$, established inductively via steps (a) and (b), ensures that, as desired, there exists a $k$-coloring of the given graph $G$ that is equitable with respect to all $d$ subsets $V_1, \ldots, V_d$.  Therefore, steps (a) and (b) complete the proof. 

The analysis of the steps is divided into two cases based on whether $|V_t| > \frac{k}{2}$. When $|V_t| \le \frac{k}{2}$, the following lemma shows that $\cal E_t$ is nonempty; this lemma notably does not use $\cal F_t$. 

\begin{lemma}\label{lem:vt-small}
For any index $1 < t \leq d$, if $|V_t| \le \frac{k}{2}$ and $\cal E_{t-1} \ne \emptyset$, then $\cal E_{t} \ne \emptyset$.
\end{lemma}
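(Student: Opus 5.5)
Since $|V_t| \le \frac{k}{2}$, we have $\lfloor |V_t|/k \rfloor = 0$ and $\lceil |V_t|/k \rceil \le 1$. Equitability with respect to $V_t$ therefore only requires that each color class receive at most one vertex of $V_t$. The plan is to start from any $(C_1,\ldots,C_k) \in \cal E_{t-1}$ and add the vertices of $V_t$ one per class, leaving $C_i \cap V_{[t-1]}$ untouched. The result is then automatically equitable for $V_1,\dots,V_{t-1}$, and also for $V_t$.

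\textbf{Step 1: restrict to classes that are not too large.} The size bound $|C_i| \le \lceil 4|V_{[t]}|/k\rceil$ has to survive adding one vertex to a class. Since $|V_{[t]}| \ge |V_{[t-1]}| + 1$, a class with $|C_i| \le \lceil 4|V_{[t-1]}|/k\rceil$ still satisfies the bound after receiving one vertex, provided
\[
\lceil 4|V_{[t-1]}|/k\rceil + 1 \le \lceil 4|V_{[t]}|/k \rceil .
\]
This inequality can fail when $|V_t|$ is small. So I would instead restrict to the \emph{light} classes, those with $|C_i| < \lceil 4|V_{[t]}|/k\rceil$, and bound the number of heavy classes by averaging. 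The total size is $|V_{[t-1]}|$, and every heavy class has size at least $\lceil 4|V_{[t]}|/k\rceil \ge 4|V_{[t-1]}|/k$. Hence at most $k/4$ classes are heavy, and at least $3k/4$ classes are light.

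\textbf{Step 2: match $V_t$ into light classes.} Build the bipartite graph between $V_t$ and the light classes, joining $v$ to $C_i$ when $C_i$ contains no neighbor of $v$. Each $v \in V_t$ has at most $\Delta$ neighbors in $V_{[t-1]}$, so it is adjacent to at least $3k/4 - \Delta$ light classes.

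The complication is that $V_t$ may contain internal edges, so matched vertices cannot be assigned independently of one another. I would handle this greedily: process the vertices of $V_t$ in order and put each one into a light class that
\begin{inparaenum}[(i)]
\item contains no neighbor of it, in $V_{[t-1]}$ or among the $V_t$ vertices already placed, and
\item has not yet received a $V_t$ vertex.
\end{inparaenum}
Condition (i) forbids at most $\Delta$ classes. Condition (ii) forbids at most $|V_t| - 1 < k/2$ classes. So a valid class exists as long as
\[
\frac{3k}{4} - \Delta - \frac{k}{2} > 0, \quad\text{that is,}\quad k > 4\Delta .
\]
This holds because $k \ge (4\eta+2)\Delta$ with $\eta \ge 1$. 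If the averaging bound turns out to be tighter than needed, a cleaner variant is to place each neighbor-conflict vertex into any class with no neighbors and to bound the heavy classes more carefully.

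\textbf{Main obstacle.} The main obstacle is the bookkeeping of the size cap $\lceil 4|V_{[t]}|/k\rceil$ in Step 1. The averaging argument (at most $k/4$ heavy classes) is what makes it work, so I would verify it carefully, especially in the regime where $4|V_{[t-1]}|/k < 1$. In that regime the cap is at least $1$, and all classes of size $0$ are light. Since the number of nonempty classes is at most $|V_{[t-1]}| < k/4$, the count of at least $3k/4$ light classes still holds. Everything in the construction is greedy, so it is also polynomial-time.
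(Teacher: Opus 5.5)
Your proof is correct. It follows the same overall plan as the paper: start from a coloring in $\mathcal{E}_{t-1}$, add $V_t$ at most one vertex per class, and control class sizes by averaging. The way you handle the size cap differs, though.

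\textbf{The paper's route.} It sorts the classes by size and uses only the $k' = |V_t| + \Delta$ smallest ones. It places $V_t$ into them via the Hall-type matching of Lemma~\ref{lem:atmost-one-addition}. The augmented classes are then bounded by $|C_{k'}| + 1 \le 3|V_{[t-1]}|/k + 1$, using $k' \le 2k/3$, which follows from $k \ge 6\Delta$. Absorbing that additive $+1$ into $4|V_{[t]}|/k$ requires $|V_{[t]}| \ge k$. So the paper treats $|V_{[t]}| < k$ as a separate case, placing $V_t$ into empty classes.

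\textbf{Your route.} You select classes by the threshold $|C_i| < \lceil 4|V_{[t]}|/k\rceil$. The $+1$ is then absorbed by integrality of the cap, so no case split is needed. Averaging gives at least $3k/4$ eligible classes. This exceeds $|V_t| + \Delta$ once $k > 4\Delta$, and that slack lets a plain greedy placement replace the matching. (Greedy would in fact also work with the paper's $k'$ classes, so the matching is not essential there either.)

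Your averaging is also uniform. Whenever $V_t \neq \emptyset$, the cap is at least $4|V_{[t]}|/k > 0$, so the number of heavy classes is at most $k|V_{[t-1]}|/(4|V_{[t]}|) \le k/4$ in every regime. The separate check for $4|V_{[t-1]}|/k < 1$ and the hedged ``cleaner variant'' are therefore unnecessary. Condition (i)'s clause about already-placed $V_t$ vertices is likewise implied by condition (ii).

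On balance, the paper's version reuses a lemma it needs anyway for Lemma~\ref{lem:filling-up}. Yours is self-contained, avoids the case split, and only needs $k > 4\Delta$.
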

\begin{proof}
Let $(C_1, \dots, C_k)$ be a coloring in $\cal E_{t-1}$. Re-index the color classes so that $|C_1| \le |C_2| \le \dots \le |C_k|$. 

If $|V_{[t]}| = \left|V_{[t-1]} \cup V_t \right| < k$, then the first $|V_t|$ color classes $C_1, \dots, C_{|V_{t}|}$ are empty. We assign each vertex from $V_t$ to these color classes such that each class receives exactly one vertex. This yields a coloring of the graph $G[V_{[t]}]$ that lies in $\cal E_{t}$. 

Hence, for the remainder of the proof, assume that that $|V_{[t]}| \geq k$. The parameter $\eta$ (as specified in the theorem statement) satisfies $\eta \ge 1$, and hence $k \ge 6\Delta$. Define $k' \coloneq |V_t| + \Delta$. Since $|V_t| \leq k/2$ and $\Delta \leq k/6$, we have  
$k' = |V_t|  + \Delta \leq \frac{2k}{3}$. 

Moreover, by applying Lemma \ref{lem:atmost-one-addition} with the set $V_t$ $(|V_t| = k' - \Delta)$, we can assign the vertices of $V_t$ to color classes $C_1, \dots, C_{k'}$ such that each color class here receives at most one vertex from $V_t$. This extension yields  a coloring $(\widetilde{C}_1, \dots, \widetilde{C}_{k})$ of the graph $G[V_{[t]}]$ that satisfies the equitability constraints of $\cal E_{t}$; in particular, $\left| \widetilde{C}_i \cap V_j \right| \in \{0,1\}$ for every $i \in [k]$ and $j \in [t]$.\footnote{Recall that in the current lemma $|V_j| \leq k/2$ for every $j \in [t]$. Hence, $0 \leq |V_j|/k \leq 1$.}

We next show that $(\widetilde{C}_1, \dots, \widetilde{C}_{k})$ also satisfies the cardinality constraint required for membership in $\cal E_t$. Since $\sum_{i = 1}^k |C_i| = V_{[t-1]}$ and the color classes are indexed in nondecreasing order of size, we have 
\begin{align}
    |V_{[t-1]}| \ge (k - k') |C_{k'}|  \ge \frac{k}{3} |C_{k'}|  \tag{since $k' \leq \frac{2k}{3}$} 
\end{align}
Therefore, for all indices $1 \le i \le k'$,   
\begin{align*}
    |\widetilde{C}_i| \le |C_i| + 1 \le |C_{k'}| + 1 \le \frac{3 |V_{[t-1]}|}{k} + 1 \leq \frac{3 |V_{[t-1]}|}{k}  + \frac{|V_{[t]}|}{k} \le \frac{4|V_{[t]}|}{k}.
\end{align*}
Here, the penultimate inequality uses $|V_{[t]}| \geq k$. For the remaining indices $k' + 1 \leq i \leq k$, since the color classes are unchanged, we have  
\begin{align*}
    |\widetilde{C}_i| = |C_i| \le \left \lceil \frac{4|V_{[t-1]}|}{k} \right \rceil \le \left \lceil \frac{4|V_{[t]}|}{k} \right \rceil.
\end{align*}

Overall, we conclude that the the coloring  $(\widetilde{C}_1, \dots, \widetilde{C}_k)$ lies in $\cal E_t$, and hence $\cal E_t \ne \emptyset$. This completes the proof of the lemma. 
\end{proof}

\begin{lemma}\label{lem:ft-nonempty}
For any integer $1 < t \leq d$, if $\cal E_{t-1} \ne \emptyset$, then $\cal F_{t} \ne \emptyset$.
\end{lemma}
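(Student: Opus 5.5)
Starting from a coloring $(C_1,\dots,C_k)\in\cal E_{t-1}$, the plan is to extend it to a $k$-coloring of $G[V_{[t]}]$ while leaving the classes $C_i\cap V_{[t-1]}$ untouched. The first two conditions of $\cal F_t$ (equitability on $V_1,\dots,V_{t-1}$ and the bound $|C_i\cap V_{[t-1]}|\le\lceil 4|V_{[t-1]}|/k\rceil$) are then inherited from membership in $\cal E_{t-1}$. So it only remains to produce some proper $k$-coloring of $G[V_{[t]}]$ extending the given partial coloring. Since $\cal F_t$ imposes no constraint on how $V_t$ is distributed, any proper extension suffices.

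The extension is done greedily. Process the vertices of $V_t$ one at a time in arbitrary order. When a vertex $v\in V_t$ is processed, its neighbours in the current partial coloring occupy at most $\deg(v)\le\Delta$ color classes. We have $k\ge(4\eta+2)\Delta\ge 6\Delta>\Delta$, because $\eta\ge1$: the term $t=1$ in the definition of $\eta$ equals $1$. Hence some class contains no neighbour of $v$, and we add $v$ to it. Independence of every class is preserved at each step, so after all of $V_t$ is processed we have a $k$-coloring of $G[V_{[t]}]$. The number of classes is exactly $k$, possibly with some empty, which matches the convention for $\cal C_k$ used in $\cal E_1$. Every step is a polynomial-time operation, which is consistent with the algorithmic claim of Theorem~\ref{thm:partition-equitability}.

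Finally, verify membership in $\cal F_t$. For $j\le t-1$ we have $\widetilde C_i\cap V_j=C_i\cap V_j$, so the floor/ceiling condition holds. Likewise $\widetilde C_i\cap V_{[t-1]}=C_i\cap V_{[t-1]}$, and this set has size at most $|C_i|\le\lceil 4|V_{[t-1]}|/k\rceil$ by the definition of $\cal E_{t-1}$. I do not expect a genuine obstacle here. The only point needing care is to read the condition written as $|C_i\cap[V_{[t-1]}]|$ in $\cal F_t$ as $|C_i\cap V_{[t-1]}|$, and to confirm that it is stated for the restriction to $V_{[t-1]}$ and not for the whole class, since the new vertices of $V_t$ may be unevenly distributed. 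The substantive balancing of $V_t$ is deferred to the next step, $\cal F_t\Rightarrow\cal E_t$, which will use the path-augmentation argument.
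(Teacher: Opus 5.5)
Your proposal is correct and matches the paper's proof. Both keep the classes restricted to $V_{[t-1]}$ unchanged, so the $\mathcal{F}_t$ conditions are inherited from $\mathcal{E}_{t-1}$. Both then greedily place each vertex of $V_t$ into a class containing none of its neighbours, which is always possible because $k \ge 6\Delta > \Delta$.
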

\begin{proof}
Start with any coloring $(C_1, \ldots, C_k) \in \cal E_{t-1}$. 
We extend $(C_1, \ldots, C_k)$ to a $k$-coloring of $G[V_{[t]}]$ by assigning colors to the vertices $v \in V_t$. In particular, we add each vertex $v$ to a color class that does not contain any of its neighbors in $G[V_{[t]}]$. This is always possible since $k \ge 6\Delta$ and $v$ has at most $\Delta$ neighbors in $G[V_{[t]}]$. The resulting coloring lies in $\cal F_{t}$. 
\end{proof}

\begin{lemma}\label{lem:et-nonempty}
For any integer $1 \leq t \leq d$, if $|V_t| \ge \frac{k}{2}$ and $\cal F_{t} \ne \emptyset$, then $\cal E_{t} \ne \emptyset$.
\end{lemma}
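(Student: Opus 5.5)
We plan a potential-function and path-augmentation argument, as outlined in the introduction. Fix a coloring $(C_1,\ldots,C_k)\in\cal F_t$ chosen to minimize the potential $\Phi \coloneqq \sum_{i=1}^k \max\{0, |C_i\cap V_t| - \lceil |V_t|/k\rceil\} + \sum_{i=1}^k \max\{0, \lfloor |V_t|/k\rfloor - |C_i\cap V_t|\}$. Only vertices of $V_t$ are ever recolored, so membership in $\cal F_t$ is preserved automatically: the classes restricted to $V_{[t-1]}$ never change. We will show $\Phi=0$. We will also show that the final coloring meets the cardinality bound $|C_i|\le\lceil 4|V_{[t]}|/k\rceil$. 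Together these give membership in $\cal E_t$.

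\textbf{Step 1 (auxiliary digraph).} Suppose $\Phi>0$. Then some class is ``big'', with $|C_i\cap V_t|>\lceil|V_t|/k\rceil$, or some class is ``small'', with $|C_i\cap V_t|<\lfloor |V_t|/k\rfloor$. By counting, in the first case a class with $|C_j\cap V_t|\le \lfloor|V_t|/k\rfloor$ exists, and in the second a class with at least $\lceil|V_t|/k\rceil$ exists. Build $H=([k],E')$ with an arc $i\to j$ whenever some $v\in C_i\cap V_t$ has no neighbor in $C_j$. Augmenting along a shortest path from a big class to a deficient class, or from a surplus class to a small class, moves one vertex of $V_t$ along each arc. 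This keeps all classes independent, changes intermediate counts by zero, and strictly decreases $\Phi$. Shortest paths matter here: with no shortcuts, each moved vertex's non-adjacency to its target class is unaffected by the other moves. This contradicts minimality, so it suffices to show such a path exists.

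\textbf{Step 2 (reachability, the main obstacle).} Let $R$ be the set of classes reachable from a big class $i_0$, and suppose no deficient class lies in $R$. Then every class in $R$ has at least $\lfloor |V_t|/k\rfloor+1$ vertices of $V_t$, and each such vertex has a neighbor in every class outside $R$. Counting edges between $V_t\cap\bigcup_{i\in R}C_i$ and the classes outside $R$ gives $|R|\,(|V_t|/k)\,(k-|R|)\le \Delta\,|V_t\cap \bigcup_R C_i|$. The right side should be at most $\Delta(|V_t|+\ldots)$. We must bound $|R|$ away from $k$: classes in $R$ hold more than $|V_t|/k$ each, so $|R|(|V_t|/k)\le|V_t|$ up to the one excess. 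This is the delicate step. The point of $|V_t|\ge k/2$ is that $\lfloor|V_t|/k\rfloor$ and $\lceil |V_t|/k\rceil$ are comparable to $|V_t|/k\ge 1/2$, so each class in $R$ carries $\Omega(|V_t|/k)$ vertices. Combining this with $k\ge(4\eta+2)\Delta\ge 6\Delta$ should force $k-|R|\le 2\Delta$ or so. Then a vertex of the big class $i_0$, which has at most $\Delta$ neighbors, sees fewer than $k-|R|$ outside classes, a contradiction. I would handle the symmetric case, reaching a small class, with the reverse digraph by the same count.

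\textbf{Step 3 (cardinality).} Among the minimizers we also prefer, as a tiebreak, colorings minimizing $\sum_i |C_i|^2$. Once $\Phi=0$, $|C_i\cap V_t|\le\lceil |V_t|/k\rceil$ and $|C_i\cap V_{[t-1]}|\le \lceil 4|V_{[t-1]}|/k\rceil$. Their sum is at most $\lceil 4|V_{[t]}|/k\rceil+1$, which is slightly too weak. To absorb the $+1$, we use $\eta$: $|V_{[t-1]}|\le \eta|V_t|$. We then exploit the slack between $4|V_{[t-1]}|/k$ and the actual typical class size $\approx(|V_{[t-1]}|+|V_t|)/k$: during augmentations we route surplus $V_t$-vertices preferentially into classes with small $V_{[t-1]}$-part, again via the digraph. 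The count of Step 2 is carried out with $k\ge(4\eta+2)\Delta$ so that enough light classes remain reachable. I expect Step 2's counting, and getting the constant $4\eta+2$ to suffice simultaneously for reachability and the size bound, to be the hardest part.
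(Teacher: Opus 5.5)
The gap is in Step 2, which carries the whole content of the lemma. With $R$ the set of classes reachable from the big class $i_0$, your edge count says only that a $V_t$-vertex in an $R$-class has a neighbor in each of the $k-|R|$ classes outside $R$, i.e.\ $k-|R|\le\Delta$. That is consistent, not contradictory. To make a vertex of $i_0$ ``see fewer than $k-|R|$ outside classes'' you need $k-|R|>\Delta$, which is an \emph{upper} bound on $|R|$. You instead aim to force $k-|R|\le 2\Delta$, and your bound $|R|\cdot|V_t|/k\le|V_t|$ (up to one excess) gives nothing better than $|R|\le k$.

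More fundamentally, Step 2 uses only $k\ge 6\Delta$ and the structure of $H$. It never uses the constraint $|C_i\cap V_{[t-1]}|\le\lceil 4|V_{[t-1]}|/k\rceil$ from $\mathcal{F}_t$ or the parameter $\eta$, and without those the conclusion fails. Here is a configuration:
\begin{itemize}
\item let $|V_t|=2k-1$;
\item let $C_k$ contain $\lceil(2k-1)/\Delta\rceil$ vertices of $V_{[t-1]}$ whose neighborhoods partition $V_t$;
\item let these be the only neighbors of vertices in $V_t$.
\end{itemize}
No vertex of $V_t$ can ever enter $C_k$, although equitability needs $|C_k\cap V_t|\ge 1$. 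A $\Phi$-minimizing extension puts $3$ vertices of $V_t$ in one class, $2$ in each of the other $k-2$ classes besides $C_k$, and none in $C_k$. The big class reaches every class except $C_k$, and none of those is deficient, so $k-|R|=1$ and your count yields no contradiction. The real obstruction is that $V_{[t-1]}$-vertices sitting in a deficient class can block many $V_t$-vertices from entering it. So one must count edges \emph{into} that class and bound its total size.

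That is exactly what the paper does, orienting reachability toward the minimum.
\begin{itemize}
\item Let $\ell=|C_k\cap V_t|$ be the minimum, with $|C_1\cap V_t|\ge\ell+2$. Let $S$ be the classes attaining $\ell$, and $R$ the classes with a directed path to $S$ (together with $S$).
\item Minimality of $\sum_i|C_i\cap V_t|^2$ forces $|C_i\cap V_t|\le\ell+1$ on $R$, so $1\notin R$.
\item A vertex of $C_1\cap V_t$ would have an arc into $R$ if $|R|\ge\Delta+1$, hence $|R|\le\Delta$.
\item Every $V_t$-vertex in a class outside $R$ then has a neighbor in $C_k$, so $|\bigcup_{i\notin R}(C_i\cap V_t)|\le\Delta|C_k|$. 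The left side is roughly $(k-\Delta)|V_t|/k$, while $|C_k|\le\lceil 4|V_{[t-1]}|/k\rceil+\ell<4|V_{[t]}|/k\le 4\eta|V_t|/k$. This contradicts $k\ge(4\eta+2)\Delta$.
\end{itemize}

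Your forward set can be rescued with the same ingredient. Your inequality $k-|R|\le\Delta$ becomes the useful lower bound on $|R|$. Every $V_t$-vertex in an $R$-class is adjacent to the unreachable deficient class $C_j$, so $(k-\Delta)\lceil|V_t|/k\rceil\le\Delta|C_j|$. Bounding $|C_j|$ via $\mathcal{F}_t$ and $|V_{[t-1]}|\le(\eta-1)|V_t|$ then contradicts $k\ge(4\eta+2)\Delta$.

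Two smaller points. First, Step 3 is unnecessary. Since $|V_t|\ge k/2$, you have $\lceil|V_t|/k\rceil\le 2|V_t|/k$ and $1\le 2|V_t|/k$, hence $\lceil 4|V_{[t-1]}|/k\rceil+\lceil|V_t|/k\rceil\le 4|V_{[t]}|/k$ directly. Second, your $\Phi$ does not drop when $|V_t|/k$ is an integer and the path ends at a class with exactly $|V_t|/k$ vertices of $V_t$. You avoid this by defining ``deficient'' as at most $\lceil|V_t|/k\rceil-1$, or by using the paper's strictly convex potential $\sum_i|C_i\cap V_t|^2$.
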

\begin{proof}
We show that there exists a coloring in $\cal F_t$ that is also in $\cal E_t$. Specifically, select a coloring $(C_1, \ldots, C_k) \in \cal F_t$ that minimizes $\sum_{i \in [k]} |C_i \cap V_t|^2$.

If the selected coloring satisfies the $t^{\text{th}}$ equitability constraint for $\cal E_t$ (i.e., $|C_i \cap V_t| \in \left \{ \left \lfloor  \frac{|V_t|}{k} \right \rfloor, \left \lceil  \frac{|V_t|}{k} \right \rceil \right \}$ for every $i \in [k]$), then 
$(C_1, \ldots, C_k)$ satisfies $\cal E_t$'s cardinality constraint as well:
\begin{align}
    |C_i| & = |C_i \cap V_{[t-1]}| + |C_i \cap V_t| \nonumber \\ 
    & \le  |C_i \cap V_{[t-1]}| + \left \lceil \frac{|V_{t}|}{k} \right \rceil \nonumber \\
    & \le \left \lceil \frac{4|V_{[t-1]}|}{k} \right \rceil + \left \lceil \frac{|V_{t}|}{k} \right \rceil \tag{since $(C_1, \ldots, C_k) \in \cal F_t$} \\ 
    & \le \left( \frac{4|V_{[t-1]}|}{k} + 1 \right) + \frac{2|V_{t}|}{k} \tag{since $|V_t| \geq k/2$} \\ 
    & \le \frac{4|V_{[t]}|}{k} \label{eq:cardinality-constraints}
\end{align}
Hence, to establish that the selected coloring $(C_1, \ldots, C_k)$ lies in $\cal E_t$, it suffices to show that $|C_i \cap V_t| \in \left \{ \left \lfloor  \frac{|V_t|}{k} \right \rfloor, \left \lceil  \frac{|V_t|}{k} \right \rceil \right \}$ for every $i \in [k]$. Assume, towards a contradiction, that this does not hold, and that $|C_1 \cap V_t| \ge |C_2 \cap V_t| \ge \ldots \ge |C_k \cap V_t| = \ell$ with $|C_1 \cap V_t| \ge \ell + 2$. Indeed, if $|C_1 \cap V_t| \leq \ell + 1$, then both $|C_1 \cap V_t|$ and $|C_k \cap V_t|$ must lie in $\left \{ \left \lfloor  \frac{|V_t|}{k} \right \rfloor, \left \lceil  \frac{|V_t|}{k} \right \rceil \right \}$. Now, write $S \subset [k]$ to denote the set of indices $i \in [k]$ such that $|C_i \cap V_t| = \ell$. That is, $S$ contains the color classes with the smallest intersection with $V_t$. 

Define the directed graph $H = ([k], E')$, where there is a directed edge from $i \in [k]$ to $i' \in [k]$ if there exists a vertex in $C_i \cap V_t$ that can be moved to $C_{i'}$ without violating the independence of $C_{i'}$ in the subgraph $G[V_{[t]}]$. Also, let $R \subset [k]$ denote the set of vertices of $H$ that have a directed path to some vertex in $S$; include $S$ into $R$ as well. Formally,
\begin{align*}
    R \coloneqq  \left\{i \in [k] \mid \text{ there is a directed path from $i$ to $S$ in $H$} \right\} \cup S.
\end{align*}

We establish the following two claims about $R$. 

\begin{claim}\label{claim:r-classes-size}
If $i \in R$, then $|C_i \cap V_t| \le \ell + 1$.
\end{claim}
\begin{proof}
Assume, towards a contradiction, that for $i \in R$ we have $|C_i \cap V_t| \ge \ell + 2$. Since there is a path from $i$ to some $i'$ in $S$, we can move vertices from $V_t$ along this path to reduce $|C_i \cap V_t|$ by $1$ and increase $|C_{i'} \cap V_t|$ by $1$, while retaining the property that $(C_1, \ldots, C_k)$ is still a coloring in $\cal F_t$. 

Specifically, let $(i, i_1, \ldots, i_r, i')$ be the shortest (directed) path from $i$ to $i'$ in $H$. For each edge along this path, say $(i_x, i_{x+1})$, there is a vertex $q$ in $C_{i_x} \cap V_t$ that can be moved to $C_{i_{x+1}}$ without violating the independence of $C_{i_{x+1}}$ (in $G[V_{[t]}]$). We move this vertex $q$ from $C_{i_x}$ to $C_{i_{x+1}}$ and repeat this step for every edge in the path. Let $(C'_1, \ldots, C'_k)$ denote the resulting coloring. Since we chose the shortest path, each color class has at most one vertex added and at most one vertex removed. Therefore, $(C'_1, \ldots, C'_k)$ is still a coloring of $G[V_{[t]}]$.

Note that only the vertices in $V_t$ change color classes, and hence $(C'_1, \ldots, C'_k)$ remains in $\cal F_{t}$. Furthermore, since $|C_i \cap V_t| \ge |C_{i'} \cap V_t| + 2$, the new coloring $(C'_1, \ldots, C'_k)$ satisfies $\sum_{i \in [k]} |C'_i \cap V_t|^2 < \sum_{i \in [k]} |C_i \cap V_t|^2$. This contradicts our choice of coloring $(C_1, \ldots, C_k)$ as a coloring in $\cal F_t$ that minimizes $\sum_{i \in [k]}|C_i \cap V_t|^2$. Hence, by way of contradiction, the claim follows. 
\end{proof}

\begin{claim}\label{claim:r-upper-bound}
$|R| \le \Delta$.   
\end{claim}
\begin{proof}
Assume, towards a contradiction, that $|R| \ge \Delta + 1$. Let $v$ be any vertex in $C_1 \cap V_t$; recall that $|C_1 \cap V_t| \geq \ell +2 > 0$, and hence such a vertex exists. 

Now, since the degree of vertex $v$ in $G$ (and hence in $G[V_{[t]}]$) is at most $\Delta$, there is at least one color class in $R$ that does not contain any neighbor of $v$. Therefore, $v \in C_1 \cap V_t$ can be moved to that color class, implying that color class $1$ has an outgoing edge in $H$ to a vertex in $R$. Hence, $1 \in R$. This containment, however, contradicts Claim \ref{claim:r-classes-size}, since $|C_1 \cap V_t| \geq \ell+2$. Hence, by way of contradiction, the claim stands proved. 
\end{proof}

For the color class $C_k$ in the coloring $(C_1, \ldots, C_k) \in \cal{F}_t$ we have 
\begin{align}
    |C_k| & = |C_k \cap V_{[t-1]}| + |C_k \cap V_t| \nonumber \\ 
    & \le \left \lceil \frac{4 |V_{[t-1]}|}{k} \right \rceil + \ell \tag{since $(C_1,\ldots,C_k) \in \cal{F}_t$ and $|C_k \cap V_t| = \ell$} \\ 
    & \le \left( \frac{4 |V_{[t-1]}|}{k} + 1 \right) + \frac{|V_t|}{k} \nonumber \\ 
    & < \frac{4|V_{[t]}|}{k} \tag{since $|V_t| \geq k/2$} \\
    & \le \frac{4\eta|V_{t}|}{k} \label{ineq:size-of-Ck}
\end{align}
The last inequality follows from the definition of $\eta$ in the theorem statement.

Furthermore, we lower bound the number of vertices in $\bigcup_{i \notin R} (C_i \cap V_t)$ as follows 
\begin{align}
\left| \bigcup_{i \notin R} ( C_i \cap V_t ) \right| = \sum_{i \notin R} \left| C_i \cap V_t \right| = |V_t| - \sum_{i \in R} |C_i \cap V_t| \underset{\text{\Cref{claim:r-classes-size}}}{\geq} |V_t| - |R| \frac{|V_t|}{k} = (k - |R|)\frac{|V_t|}{k} \label{ineq:comp-size}
\end{align}
Recall that $k > (4\eta + 2) \Delta$ and $|R| \leq \Delta$ (\Cref{claim:r-upper-bound}). Hence, inequality \eqref{ineq:comp-size} extends to 
\begin{align}
\left| \bigcup_{i \notin R} ( C_i \cap V_t ) \right| \ > 4\eta \Delta \frac{|V_t|}{k} \underset{\text{via ineq.~\eqref{ineq:size-of-Ck}}}{\geq} \Delta |C_k|
\label{ineq:comparison}
\end{align}
Furthermore, note that each vertex in $\bigcup_{i \notin R} (C_i \cap V_t)$ has an edge to some vertex in $C_k$; otherwise, there would be an edge from some $i \notin R$ to $k$ in the graph $H$, contradicting the fact that $i \notin R$. Therefore, considering the total number of edges incident to the vertices in $C_k$, we obtain $\Delta |C_k| \geq \left| \bigcup_{i \notin R} ( C_i \cap V_t ) \right|$. This bound, however, contradicts inequality~\eqref{ineq:comparison}.

Hence, the assumption that in the selected coloring $(C_1, \ldots, C_k) \in \cal F_t$ we have $|C_1 \cap V_t| \ge |C_k \cap V_t| + 2$ does not hold. Therefore, the coloring $(C_1, \ldots, C_k)$ is equitable with respect to $V_t$ as well and it is satisfies the cardinality constraints of $\cal E_t$ due to inequality \eqref{eq:cardinality-constraints}; this implies $(C_1, \ldots, C_k) \in \cal E_t$. Therefore $\cal E_t \neq \emptyset$, completing the proof of the lemma.
\end{proof}

Inductively applying Lemmas~\ref{lem:vt-small},~\ref{lem:ft-nonempty}, and~\ref{lem:et-nonempty} shows that $\cal E_d$ is nonempty, which establishes the existence of the desired coloring that is equitable with respect to the given partition $(V_1, \ldots, V_d)$.

This existence proof also yields a polynomial-time algorithm for computing an equitable coloring. Specifically, the Lemmas~\ref{lem:vt-small},~\ref{lem:ft-nonempty}, and~\ref{lem:et-nonempty} can be converted into efficient algorithms. This conversion is straightforward for Lemma~\ref{lem:ft-nonempty}, since it constructs a coloring in $\cal{F}_t$ by arbitrarily assigning valid colors to the vertices in $V_t$. Lemma~\ref{lem:vt-small} invokes Lemma~\ref{lem:atmost-one-addition} to find a valid coloring of the vertices in $V_t$; as stated in Lemma~\ref{lem:atmost-one-addition}, this coloring can be found in polynomial time.

To obtain an algorithm for Lemma~\ref{lem:et-nonempty}, it suffices to construct a coloring in $\cal{E}_t$ given one in $\cal{F}_t$. Notably, the proof of Lemma~\ref{lem:et-nonempty} shows that any coloring in $\cal{F}_t$ that satisfies Claim~\ref{claim:r-classes-size} lies in $\cal{E}_t$. Furthermore, for any given coloring $(C_1, \ldots, C_k) \in \cal{F}_t$ that violates Claim~\ref{claim:r-classes-size}, the proof of the claim describes a polynomial-time path augmentation procedure that constructs a new coloring $(C'_1, \ldots, C'_k)$ such that $\sum_{i \in [k]} |C'_i \cap V_t|^2 < \sum_{i \in [k]} |C_i \cap V_t|^2$.

Therefore, repeatedly applying this path augmentation procedure eventually yields a coloring $(C'_1, \ldots, C'_k)$ that satisfies Claim~\ref{claim:r-classes-size}. The procedure can be applied only a polynomial number of times since $\sum_{i \in [k]} |C_i \cap V_t|^2$ is an integer between $1$ and $n^3$. Overall, we obtain a polynomial-time algorithm for finding an equitable coloring with respect to the given partition $(V_1, \ldots, V_d)$. This completes the proof of the theorem.


\subsection{Implications of \Cref{thm:partition-equitability}}
The corollary below shows that in any graph with maximum degree $\Delta$, with $k=O(d \Delta)$ colors 
we can achieve equitability for any partition with $d$ parts.


\begin{corr}\label{cor:partition-equitability-multiple}
Given a graph $G = (V, E)$ with maximum degree $\Delta \in \mathbb{Z}_+$, let $(V_1,\ldots,V_d)$ be any partition of the vertex set $V$ satisfying $|V_1| \le |V_2| \le \ldots \le |V_d|$. Also, let $k \in \mathbb{Z}_{+}$ be an integer such that $k \ge (4d + 2)\Delta$ and $|V_j|$ is a multiple of $k$ for each $j \in [d]$. Then $G$ admits a $k$-coloring $(C_1,\ldots, C_k)$ satisfying $|C_i \cap V_j| =\frac{|V_j|}{k}$ for all $i \in [k]$ and $j \in [d]$. Furthermore, such a coloring can be computed in polynomial time.
\end{corr}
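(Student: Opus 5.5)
The corollary follows directly from \Cref{thm:partition-equitability}. We only need to bound the parameter $\eta$ by $d$ and use the divisibility hypothesis to collapse the floor and ceiling.

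First we bound $\eta$. Fix any $t \in [d]$. The parts are ordered so that $|V_j| \le |V_t|$ for every $j \le t$. Hence
\[
\frac{\sum_{j=1}^t |V_j|}{|V_t|} \le \frac{t\,|V_t|}{|V_t|} = t \le d.
\]
Taking the maximum over $t$ gives $\eta \le d$.

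One degenerate case needs a remark. If some $|V_t| = 0$, the ratio is undefined. The simplest fix is to discard all empty parts before applying the theorem. Discarding them preserves the ordering, decreases the number of parts $d$, and the conclusion for an empty part, $|C_i \cap V_j| = 0 = |V_j|/k$, holds trivially for any coloring.

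With $\eta \le d$, the hypothesis $k \ge (4d+2)\Delta$ gives $k \ge (4\eta+2)\Delta$. So \Cref{thm:partition-equitability} applies to the nonempty parts. It produces, in polynomial time, a $k$-coloring $(C_1,\ldots,C_k)$ with
\[
|C_i \cap V_j| \in \left\{ \left\lfloor \tfrac{|V_j|}{k} \right\rfloor, \left\lceil \tfrac{|V_j|}{k} \right\rceil \right\}
\]
for all $i \in [k]$ and $j \in [d]$. Since $k$ divides each $|V_j|$, the floor and the ceiling both equal $|V_j|/k$. Therefore $|C_i \cap V_j| = |V_j|/k$ exactly for every $i$ and $j$.

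There is no real obstacle here. The only point needing care is the bound $\eta \le d$, including the handling of empty parts just described. Polynomial-time computability is inherited directly from \Cref{thm:partition-equitability}.
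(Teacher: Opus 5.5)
Your proof is correct and follows the paper's argument: you bound $\eta \le d$ using the nondecreasing ordering of the parts, use divisibility by $k$ to make the floor and ceiling coincide, and then invoke \Cref{thm:partition-equitability}. Your remark about discarding empty parts, where the ratio defining $\eta$ is undefined, is a small piece of care that the paper's proof omits.
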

\begin{proof}
Since $|V_1| \leq \ldots \leq |V_d|$, the parameter $\eta$ in the statement of \Cref{thm:partition-equitability} satisfies 
\begin{align}
\eta = \max_{t \in [d]} \left(  \frac{\sum_{j = 1}^{t} |V_j|}{|V_t|} \right) \leq \max_{t \in [d]} t = d
\end{align}
In addition, since each $|V_j|$ is a multiple of $k$, we have $\left \lfloor  \frac{|V_t|}{k} \right \rfloor = \left \lceil  \frac{|V_t|}{k} \right \rceil = \frac{|V_t|}{k}$. 
 
Therefore, we can instantiate \Cref{thm:partition-equitability}, with $k \geq (4d+2)\Delta \geq (4 \eta + 2) \Delta$, to obtain a $k$-coloring $(C_1,\ldots, C_k)$ with $|C_i \cap V_j| =\frac{|V_j|}{k}$ for all $i \in [k]$ and $j \in [d]$. This completes the proof of the corollary.  
\end{proof}

We also present a method for extending a given partial coloring into a complete coloring that satisfies certain desirable properties.

\begin{lemma}\label{lem:filling-up}
Let $G = (V, E, w)$ be a vertex-weighted graph and let $V' \subseteq V$. Suppose $(C'_1, \ldots, C'_k)$ is a $k$-coloring of $G[V']$ with $k \geq \Delta+1$. Then there exists a coloring $(C_1, \ldots, C_k)$ of $G$ such that, for all $i \in [k]$: 
\begin{enumerate}[(a)]
    \item $C'_i \subseteq C_i$ and
    \item $w(C_i \setminus C'_i) - \normalfont{\maxw}(C_i \setminus C'_i) \le \frac{w(V \setminus V')}{(k - \Delta)}$.
\end{enumerate}
Furthermore, such a coloring $(C_1, \ldots, C_k)$ can be computed in polynomial time.
\end{lemma}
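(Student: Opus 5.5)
Sort the uncolored vertices $U \coloneqq V \setminus V'$ in nonincreasing order of weight, $u_1, u_2, \ldots, u_m$ with $w_{u_1} \ge w_{u_2} \ge \cdots \ge w_{u_m}$. Split this sequence into consecutive blocks $B_1, B_2, \ldots$, each of size $k - \Delta$ except possibly the last. Then process the blocks in order, extending the current coloring one block at a time via Lemma~\ref{lem:atmost-one-addition}. Concretely, suppose the current coloring covers $V' \cup B_1 \cup \cdots \cup B_{r-1}$. Apply Lemma~\ref{lem:atmost-one-addition} to the graph $G[V' \cup B_1 \cup \cdots \cup B_r]$, whose maximum degree is at most $\Delta$, with $|B_r| \le k - \Delta$. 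This adds the vertices of $B_r$ so that each color class receives at most one vertex of $B_r$. Each application is a bipartite matching, so the whole procedure runs in polynomial time, and property (a) holds by construction.

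For property (b), fix a color class $i$. The set $C_i \setminus C'_i$ contains at most one vertex from each block; write its vertices as $x_1, x_2, \ldots$, where $x_r \in B_{j_r}$ and $j_1 < j_2 < \cdots$. The maximum-weight vertex among them is $x_1$, since earlier blocks contain heavier vertices. The key charging step is that every vertex in block $B_{j}$ weighs at most every vertex in block $B_{j-1}$, so
\[
w_{x_r} \le \min_{u \in B_{j_r - 1}} w_u \le \frac{w(B_{j_r - 1})}{k - \Delta} \qquad \text{for each } r \ge 2,
\]
using that every block other than the last has exactly $k - \Delta$ vertices and that $B_{j_r-1}$ is not the last block. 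Because the indices $j_r - 1$ are distinct, summing over $r \ge 2$ gives
\[
w(C_i \setminus C'_i) - \maxw(C_i \setminus C'_i) \le \sum_{r \ge 2} w_{x_r} \le \frac{\sum_j w(B_j)}{k-\Delta} = \frac{w(V \setminus V')}{k-\Delta}.
\]

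The main point needing care is this charging argument: the blocks must be full, of size exactly $k-\Delta$, except the last, which is only ever used as the block being charged and never as the one charged against. Since $j_r - 1 \le j_{\text{last}} - 1$, this is automatically fine. The remaining concern is degenerate cases, such as $C_i \setminus C'_i$ being empty or a singleton, where the left-hand side is simply $0$.
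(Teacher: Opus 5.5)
Your proposal is correct and follows essentially the same route as the paper: sort $V \setminus V'$ by nonincreasing weight, cut it into consecutive blocks of size $k-\Delta$, and extend block by block via Lemma~\ref{lem:atmost-one-addition}. You then charge each non-maximal added vertex of a class to the minimum (hence at most the average) weight of the preceding full block. The only cosmetic difference is that you handle a short final block directly, noting it is never charged against, whereas the paper pads $V \setminus V'$ with weight-$0$ isolated dummy vertices.
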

\begin{proof}
Let $V \setminus V'= \{1, 2, \ldots, m\}$ and assume that $w_1 \ge w_2 \ge \ldots \ge w_m$. We can also assume, without loss of generality, that $(k-\Delta)$ divides $m$; otherwise, we add isolated dummy vertices of weight $0$ to $V \setminus V'$. We partition the set $V \setminus V'$ into subsets of size $(k - \Delta)$. Specifically, write $\widetilde{V}_j = \{(j-1)(k - \Delta) + 1, \ldots, j(k - \Delta)\}$ for each $j \in \left\{1, 2, \ldots, \frac{m}{k - \Delta} \right\}$.

We construct the desired coloring by inductively processing the subsets $\widetilde{V}_1, \ldots, \widetilde{V}_{m/(k-\Delta)}$. In particular, for indices $1 \leq j \leq \frac{m}{k - \Delta}$, we do the following: given a $k$-coloring $(\widetilde{C}_1, \ldots, \widetilde{C}_k)$ of the subgraph $G \left[V' \cup \bigcup_{t = 1}^{j-1} \widetilde{V}_t \right]$, we construct a coloring $(\widehat{C}_1, \ldots, \widehat{C}_k)$ of $G \left[V' \cup \bigcup_{t = 1}^{j} \widetilde{V}_t \right]$ such that $\widetilde{C}_i \subseteq \widehat{C}_i$ for all $i \in [k]$, and each vertex in $\widetilde{V}_j$ is assigned a different color. This can be done using Lemma \ref{lem:atmost-one-addition}. 

Starting with the given coloring $(C'_1, \ldots, C'_k)$ of $G[V']$ and repeatedly applying the above extension, we obtain the desired coloring $(C_1, \ldots, C_k)$ of the graph $G$. By construction, $C'_i \subseteq C_i$ for all $i \in [k]$, and hence property (a) in the lemma statement holds. In addition, note that $|C_i \cap \widetilde{V}_j| \leq 1$ for all $i \in [k]$ and $j \in \left[ \frac{m}{k - \Delta} \right]$. We will use this bound to establish property (b) for each $i \in [k]$: 
\begin{align*}
    w(C_i) & = w(C'_i) + \sum_{j = 1}^{\frac{m}{k-\Delta}} w \left(C_i \cap \widetilde{V}_j \right) \\
    &\le w(C'_i) + w \left(C'_i \cap \widetilde{V}_1 \right) + \sum_{j = 2}^{\frac{m}{k-\Delta}}\maxw \left( \widetilde{V}_j \right)  \tag{since $|C_i \cap \widetilde{V}_j| \leq 1$} \\
    &\le w(C'_i)  + \maxw(C'_i \setminus C_i) +  \sum_{j = 2}^{\frac{m}{k-\Delta}}\maxw \left( \widetilde{V}_j \right) \\ 
    &\le w(C'_i)  + \maxw(C'_i \setminus C_i) + \sum_{j = 2}^{\frac{m}{k-\Delta}} \frac{1}{k-\Delta} w \left(\widetilde{V}_{j-1} \right) \tag{Wts.~are indexed in non-increasing order}\\ 
    &\le w(C'_i) + \maxw(C'_i \setminus C_i) + \frac{w(V \setminus V')}{k-\Delta}.
\end{align*}
This gives us property (b). The coloring $(C_1, \dots, C_k)$ is constructed by applying Lemma \ref{lem:atmost-one-addition} at most $n$ times. Therefore, the coloring $(C_1, \dots, C_k)$ can be computed in polynomial time. This completes the proof.
\end{proof}

\section{\EQo under Weights}\label{sec:exact-eqo}
\label{section:exact-eqo}
This section develops two results on the existence of exact \EQo colorings (\Cref{defn:eqo}). The theorem below shows that for vertex-weighted graphs with $d \in \mathbb{Z}_+$ distinct weights, $(8d+10) \Delta$ colors suffice to achieve exact equitability. Furthermore, \Cref{cor:simple-round-robin} (stated below) shows that $4\sqrt{n \Delta}$ colors suffice for exact \EQo; note that this bound is independent of the number of distinct weights in the instance.

\thmdequitable*

\begin{proof}
We index the vertices $V = \{1, 2, \ldots, n\}$ in non-increasing order of weight: $w_1 \ge w_2 \ge \ldots \ge w_n$. We also assume that $k$ divides $n$; otherwise, we add a few isolated dummy vertices of weight $0$ so that $k$ divides $n$. Note that any \EQo coloring of this modified instance remains an \EQo coloring after removing the dummy vertices. Furthermore, this increases the number of distinct weights from $d$ to at most $d+1$. 

We start with an equipartition $(V'_1, \ldots, V'_{n/k})$ of the vertex set $V$ and modify it to reduce the number of parts. Specifically, set $V'_j = \{(j-1)k + 1, \ldots, jk\}$ for each $j \in [n/k]$. Note that the ordering of the weights implies $\minw(V'_j) \ge \maxw(V'_{j+1})$ for each $j \in [\frac{n}{k}-1]$.

We start with this equipartition, and apply the following {\em merging procedure} to reduce the number of parts: for any index $i$ such that all vertices in $V'_i$ and $V'_{i+1}$ have the same weight, merge the two subsets to create a new partition $(V''_1, \ldots, V''_{n/k-1})$. Specifically, for each $j \in [\frac{n}{k}-1]$ set 
\begin{align*}
    V''_j = 
    \begin{cases}
        V'_j & j < i \\
        V'_i \cup V'_{i + 1} & j = i \\
        V'_{j + 1} & j > i
    \end{cases}
\end{align*}
We repeat the procedure until no further merges are possible. Write $(\widetilde{V}_1, \ldots, \widetilde{V}_{\widetilde{d}})$ to denote the resulting partition. By construction, this partition also satisfies 
\begin{align}
\minw( \widetilde{V}_j) \ge \maxw( \widetilde{V}_{j+1}) \qquad \text{for each $j \in [\widetilde{d}-1]$} \label{ineq:wt-sort}
\end{align}

Furthermore, the merging ensures that $|\widetilde{V}_j|$ is divisible by $k$ for every $j$. The following claim shows that in the partition $(\widetilde{V}_1, \ldots, \widetilde{V}_{\widetilde{d}})$ the number of parts $\widetilde{d}$ is at most $2(d + 1)$; recall that $(d+1)$ is an upper bound on the number of distinct weights in the given graph.

\begin{claim}
\label{claim:d-dtilde}
$\widetilde{d} \leq 2(d + 1)$.
\end{claim}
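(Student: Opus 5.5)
The plan is a counting argument: charge each final part $\widetilde{V}_j$ either to a distinct weight value or to a ``transition'' between consecutive distinct weight values in the sorted order.

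Let $x_1 > x_2 > \cdots > x_{d'}$ be the distinct weights of the modified instance, with $d' \le d+1$. Since vertices are sorted by non-increasing weight, each weight class $W_a = \{v : w_v = x_a\}$ is a contiguous interval of indices. I classify each final part $\widetilde{V}_j$ as follows.
\begin{enumerate}[(i)]
\item \emph{Uniform}: all its vertices share a single weight $x_a$.
\item \emph{Mixed}: it contains at least two distinct weights.
\end{enumerate}

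The first step is to show that each weight value $x_a$ has at most one uniform part. Parts are unions of consecutive blocks $V'_j$, so they are contiguous intervals, and so is $W_a$. Hence the uniform parts of weight $x_a$ are consecutive in the ordering $\widetilde{V}_1, \ldots, \widetilde{V}_{\widetilde{d}}$. Suppose two such parts $\widetilde{V}_j$ and $\widetilde{V}_{j+1}$ were adjacent. Each is a merged union of original blocks, all uniform of weight $x_a$, so the last original block of $\widetilde{V}_j$ and the first original block of $\widetilde{V}_{j+1}$ are adjacent blocks $V'_i, V'_{i+1}$ sharing a common weight. The merging procedure would then have merged them, contradicting termination. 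One subtlety: I take the merge criterion as ``all vertices in the two current parts have the same weight,'' applied to current parts, which is how the procedure is iterated. Therefore the number of uniform parts is at most $d' \le d+1$.

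The second step bounds the mixed parts. A mixed part is never merged, since merging requires uniformity, so it is an original block $V'_j$. Such a block contains consecutive indices $m, m+1$ with $w_m > w_{m+1}$, that is, a boundary between $W_a$ and $W_{a+1}$ for some $a$. Blocks are disjoint intervals, so a boundary position $(m, m+1)$ lies inside at most one block. There are exactly $d'-1$ boundaries, so there are at most $d'-1 \le d$ mixed parts.

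Summing the two bounds gives $\widetilde{d} \le (d+1) + d \le 2(d+1)$. The only delicate point, which I expect to be the main obstacle, is arguing carefully that no two adjacent uniform parts of equal weight survive the merging. This follows directly from the termination condition once contiguity is noted.
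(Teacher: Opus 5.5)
Your proof is correct, but it takes a different route from the paper's.

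\textbf{The paper's argument.} It uses pairing. Because the parts are sorted by weight,
\[
\maxw(\widetilde{V}_{2j+1}) \le \minw(\widetilde{V}_{2j}) \le \maxw(\widetilde{V}_{2j}) \le \minw(\widetilde{V}_{2j-1}).
\]
So an equality $\maxw(\widetilde{V}_{2j-1}) = \maxw(\widetilde{V}_{2j+1})$ would make $\widetilde{V}_{2j-1}$ and $\widetilde{V}_{2j}$ uniform of a common weight, and they would have been merged. Hence the odd-indexed parts have strictly decreasing maximum weights, and $\lceil \widetilde{d}/2 \rceil \le d+1$.

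\textbf{Your argument.} You charge each uniform part to its weight value; there is at most one per value, by contiguity and the termination condition. You charge each mixed part to one of the $d'-1$ boundaries between consecutive weight classes. This works because a mixed part is necessarily a single unmerged size-$k$ block, and each boundary lies inside at most one block.

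\textbf{Comparison.} The paper's version is shorter and needs only sortedness plus the termination condition. Yours involves a bit more bookkeeping but gives the sharper bound $\widetilde{d} \le 2d'-1 \le 2d+1$. That bound would let the theorem's hypothesis be relaxed to $k \ge (8d+6)\Delta$. Your observation that every non-uniform final part is an original block is also exactly the fact the paper uses later to get $\maxw(\widetilde{V}_j) = \minw(\widetilde{V}_j)$ whenever $|\widetilde{V}_j| > k$.

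\textbf{A small simplification.} In your first step you do not need to pass to the last original block of $\widetilde{V}_j$ and the first original block of $\widetilde{V}_{j+1}$. The merging procedure acts on the current partition, so the two adjacent uniform parts of equal weight already satisfy the merge condition themselves.
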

\begin{proof}
The subsets in the partition $(\widetilde{V}_1, \ldots, \widetilde{V}_{\widetilde{d}})$ with odd indices have distinct maximum weights. In particular, for each integer  $1 \leq j  < \frac{\widetilde{d}}{2}$, we have $\maxw( \widetilde{V}_{2j-1}) >  \maxw( \widetilde{V}_{2j+1})$. Indeed, if $\maxw( \widetilde{V}_{2j-1}) =  \maxw( \widetilde{V}_{2j+1})$, then all vertices in the subsets $\widetilde{V}_{2j-1}$ and $\widetilde{V}_{2j}$ would have the same weight, and these two subsets would have been merged in the merging procedure. Since the total number of distinct weights is at most $(d+1)$ it follows that $\frac{\widetilde{d}}{2} \leq d+1$. This establishes the claim.   
\end{proof}

We now invoke \Cref{cor:partition-equitability-multiple} with graph $G=(V,E)$ and partition $(\widetilde{V}_1, \ldots, \widetilde{V}_{\widetilde{d}})$ to obtain coloring $(C_1, \ldots, C_k)$. Note that the corollary applies here since each $|\widetilde{V}_j|$ is a multiple of $k$, and the number of colors satisfies $k \geq (8d + 10)\Delta \geq (4 \widetilde{d}+2)\Delta$; the last inequality follows from \Cref{claim:d-dtilde}. We complete the proof next by showing that the obtained coloring $(C_1, \ldots, C_k)$ is \EQo with respect to the given weights.

Consider any two color classes $C_i$ and $C_{i'}$. \Cref{cor:partition-equitability-multiple} ensures that $|C_i \cap \widetilde{V}_j | = |C_{i'} \cap \widetilde{V}_j| = \frac{|\widetilde{V}_j|}{k}$ for every $j  \in [\widetilde{d}]$. Let $J = \{j_1, j_2, \ldots, j_t\}$, with $j_1 < j_2 <\ldots < j_t$, denote the indices of the subsets $\widetilde{V}_j$ for which $\frac{|\widetilde{V}_j|}{k} = 1$. Furthermore, let $\bar{v}$ be the unique element in $C_i \cap \widetilde{V}_{j_1}$. Note that any subset $\widetilde{V}_j$ with size greater than $k$ (i.e., $\frac{|\widetilde{V}_j|}{k} > 1$) must have been obtained by merging two subsets in the merging procedure described above. Hence, 
\begin{align}
\maxw(\widetilde{V}_j) = \minw(\widetilde{V}_j) \qquad \text{for each $j \notin J$} \label{ineq:max-min}
\end{align}
These observations imply that \EQo holds between color classes $C_i$ and $C_{i'}$: 
{\allowdisplaybreaks
\begin{align*}
    w(C_i \setminus \{ \bar{v} \}) &\le \sum_{j \in J \setminus \{j_1\}} \maxw(\widetilde{V}_j) + \sum_{j \notin J}  \frac{|\widetilde{V}_j|}{k} \ \maxw(\widetilde{V}_j) \\
    &= \sum_{j \in J \setminus \{j_1\}} \maxw(\widetilde{V}_j) + \sum_{j \notin J}  \frac{|\widetilde{V}_j|}{k} \ \minw(\widetilde{V}_j) \tag{via \eqref{ineq:max-min}} \\
    & \leq \sum_{j \in J \setminus \{j_t\}} \minw(\widetilde{V}_j) + \sum_{j \notin J}  \frac{|\widetilde{V}_j|}{k} \ \minw(\widetilde{V}_j) \tag{via \eqref{ineq:wt-sort}}\\
    &\le w(C_{i'}).
\end{align*}
}
This proves that the coloring $(C_1, \ldots, C_k)$ is \EQo. There are three steps used to construct the coloring $(C_1, \dots, C_k)$:
\begin{inparaenum}[(a)]
    \item create the initial partition $(V'_1, \dots, V'_{n/k})$, 
    \item apply the merging procedure, and
    \item invoke Corollary \ref{cor:partition-equitability-multiple}.
\end{inparaenum}
All three of these steps can be done in polynomial time. Therefore the \EQo coloring $(C_1, \ldots, C_k)$ can be computed in polynomial time
\end{proof}

We obtain another implication by applying \Cref{cor:partition-equitability-multiple} directly to the equipartition $(V'_1, \ldots, V'_{n/k})$ of the vertex set $V$; that is, we do not use the merging procedure in the above proof. In this case, we require the number of colors to satisfy 
\begin{align*}
    k \ge \left ( 4\left \lceil \frac{n}{k} \right \rceil + 2 \right )\Delta.
\end{align*}
Formally,  

\begin{corr}\label{cor:simple-round-robin}
Let $G = (V, E, w)$ be a vertex-weighted graph with maximum degree $\Delta$. Then $G$ admits an \EQo $k$-coloring for every integer $k \ge 4\sqrt{n\Delta}$. Furthermore, such a coloring can be computed in polynomial time.
\end{corr}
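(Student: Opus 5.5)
We follow the proof of \Cref{thm:d-equitable} but skip the merging procedure, and apply \Cref{cor:partition-equitability-multiple} directly to the equipartition into blocks of $k$ consecutive heaviest vertices. Concretely, we index $V=\{1,\dots,n\}$ so that $w_1\ge\dots\ge w_n$. We pad with isolated dummy vertices of weight $0$ until $k$ divides the number of vertices $n'$, which gives $n'=k\lceil n/k\rceil$. Padding preserves the maximum degree and the \EQo property once dummies are removed; we treat $k\le n$, since otherwise the padded blocks already force each class to contain at most one real vertex and the statement is trivial. We then set $V'_j=\{(j-1)k+1,\dots,jk\}$ for $j\in[n'/k]$. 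These parts all have size exactly $k$, so they are trivially sorted by size, and their number is $d=\lceil n/k\rceil$.

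Next we invoke \Cref{cor:partition-equitability-multiple} with this partition. This requires $k\ge(4\lceil n/k\rceil+2)\Delta$. Given the corollary, we get a $k$-coloring with $|C_i\cap V'_j|=1$ for every $i,j$, computed in polynomial time. For \EQo, fix classes $C_i$ and $C_{i'}$, and let $\bar v$ be the element of $C_i\cap V'_1$. Then
\[
w(C_i\setminus\{\bar v\})\le\sum_{j\ge2}\maxw(V'_j)\le\sum_{j\ge 2}\minw(V'_{j-1})\le w(C_{i'}),
\]
using $\minw(V'_{j-1})\ge\maxw(V'_j)$. Dropping the dummy vertices then preserves the inequality.

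The only real work is checking that $k\ge4\sqrt{n\Delta}$ implies $k\ge(4\lceil n/k\rceil+2)\Delta$. We have $\lceil n/k\rceil\le n/k+1$, so it suffices to show $k\ge 4n\Delta/k+6\Delta$, that is, $k^2-6\Delta k-4n\Delta\ge0$.

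This is the step most likely to fail as stated. From $k^2\ge16n\Delta$ we get $4n\Delta\le k^2/4$, so it is enough that $\tfrac34k^2\ge6\Delta k$, i.e.\ $k\ge8\Delta$. When $n\ge4\Delta$ this holds, since then $k\ge4\sqrt{n\Delta}\ge8\Delta$. When $n<4\Delta$, we instead handle the small-$n$ regime separately. If $k\ge n$, every part of the padded partition is a single block with one real vertex per class at most, so any proper coloring assigning distinct colors to all vertices is \EQo: after removing its vertex, each class has weight $0$. If instead $4\sqrt{n\Delta}\le k<n<4\Delta$, then $16n\Delta<n^2$, which forces $n>16\Delta$ and contradicts $n<4\Delta$, so this case is vacuous. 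Combining these cases gives the corollary, with polynomial-time computability inherited from \Cref{cor:partition-equitability-multiple}.
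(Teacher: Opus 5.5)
Your proposal is correct and follows the paper's own route: apply \Cref{cor:partition-equitability-multiple} directly to the size-$k$ equipartition of the weight-sorted vertices, skipping the merging step, and check that $k\ge 4\sqrt{n\Delta}$ gives $k\ge(4\lceil n/k\rceil+2)\Delta$. Your explicit arithmetic and your separate treatment of $k\ge n$ make rigorous a step the paper leaves implicit, since that inequality can fail when $k \ge n$ (e.g.\ $n=4$, $\Delta=3$, $k=14$), where singleton color classes trivially give an \EQo coloring.
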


\section{$(1+\varepsilon)$-\EQo under Weights}
\label{section:apx-eqo}
This section shows that for any $\varepsilon >0$ and any vertex-weighted graph with maximum degree $\Delta$, we can achieve equitability up to one vertex within a $(1+\varepsilon)$ multiplicative factor using $k = \widetilde{O}\left( \Delta/\varepsilon^2 \right)$ colors. Notably, here for $(1+\varepsilon)$-\EQo (\Cref{defn:eqo}), and in contrast to the bounds obtained in \Cref{section:exact-eqo}, $k$ does not explicitly depend on either the number of distinct weights or the number of vertices.

\thmepsequitable*

\subsection{Proof of \Cref{thm:eps-equitable}}\label{sec:apx-eqo-proof}

We prove this theorem by invoking \Cref{thm:partition-equitability}. To this end, we construct a partition such that (i) \EQo with respect to the partition implies  $(1+\varepsilon)$-\EQo with respect to the given vertex weights, and (ii) the parameter $\eta$ (as defined in \Cref{thm:partition-equitability}) for the partition is $\widetilde{O}(1/\varepsilon^2)$. Applying \Cref{thm:partition-equitability} then yields a $(1+\varepsilon)$-approximate \EQo $k$-coloring with $k = \widetilde{O} \left( \Delta/\varepsilon^2 \right)$.

We index the vertices $V=\{1,2, \ldots, n\}$ such that  $w_1 \ge w_2 \ge \ldots \ge w_n$. Also assume, without loss of generality, that $n \ge \frac{2}{\vare^2}$; otherwise, we can apply \Cref{cor:simple-round-robin}.

Our construction of the desired partition is parameterized by two integers: 
\begin{align}
    x \coloneqq \left\lceil \frac{1}{\vare^2} \right\rceil \quad \text{ and } \quad y \coloneqq \left \lceil \frac{1 + \vare}{\vare}\ln{\frac1{\vare}} + 1 \right \rceil
    \label{eq:defnxy}
\end{align}
We create the first $x$ vertex subsets (parts), $T_1, \ldots, T_x$, of the partition as size-$k$ subsets consisting of the $kx$ highest-weight vertices. Specifically, for each index $j \in [x]$, let $T_j = \{k(j-1) + 1, \ldots, kj\}$. 

We normalize the weights so that $w_{kx+1}=1$; note that this normalization does not affect the approximate \EQo guarantee. Under this normalization and the ordering of the weights, the remaining vertices, $\{kx +1, \ldots, n\}$, have weights at most $1$.   

We partition the remaining vertices dyadically: create  $\ell \in \mathbb{Z}_+$ subsets $B_1, B_2, \ldots, B_\ell$, where $B_j$ contains the vertices in $\{kx +1, \ldots, n\}$ with weights in $\left (\frac{1}{(1 + \vare)^{j}}, \frac{1}{(1 + \vare)^{j-1}} \right ]$.
With sufficiently large $\ell$, each vertex in $\{kx +1, \ldots, n\}$ belongs to exactly one subset $B_j$; some subsets may be empty. 

We start with the partition $(T_1, \ldots, T_x, B_1, \ldots, B_{\ell})$ and process it before invoking \Cref{thm:partition-equitability}. This processing will ensure that \begin{inparaenum}[(a)]
    \item each $|B_i|$ is a multiple of $k$,\footnote{By construction, we already have that $|T_j| = k$ for every $j \in [x]$.}
    and 
    \item the resulting parameter $\eta = \widetilde{O}(1/\varepsilon^2)$.
\end{inparaenum}

We obtain both properties (a) and (b) by setting aside a few vertices from some of the subsets $B_j$. We will show that the total weight of these deferred vertices is small enough, and we can apply Lemma~\ref{lem:filling-up} to color them later equitably.

Formally, we create a deferred set $D$, initialized to $\emptyset$. When we modify the subsets $B_j$, we move some vertices from $B_j$ to $D$; we refer to this operation as deferring a vertex.

To ensure property (a), i.e., that each $|B_j|$ is divisible by $k$, we can defer fewer than $k$ vertices from each $B_j$ so that the updated cardinality of $B_j$ is divisible by $k$. Hence, after moving these vertices into $D$, each $|B_j|$ is divisible by $k$. 
We can upper bound the total weight of the deferred vertices here---denoted as $W'_{\text{def}}$---as follows: 
\begin{align}
    W'_{\text{def}} \le \sum_{j \geq 1} k \  \frac{1}{(1+\vare)^{j-1}} \le k \frac{(1+\vare)}{\vare} \label{eq:discard-multiple}
\end{align}

We further defer vertices into the set $D$ to ensure that, for the remaining subsets $(T_1, \ldots, T_x, B_1, \ldots, B_\ell)$, the parameter $\eta$ is sufficiently small. To achieve this, we consider collections $\cal E_1, \cal E_2, \ldots, \cal E_y$ of the subsets $B_j$; recall that integer $y = \left \lceil \frac{1 + \vare}{\vare}\ln{\frac1{\vare}} + 1 \right \rceil$ (see equation~\eqref{eq:defnxy}). Here, each collection $\cal E_h = \{B_h, B_{y + h}, B_{2y + h}, \ldots\}$ contains all the subsets whose indices are congruent to $h~\mod y$. 

Fix a collection $\cal E_h$. We define an iterative procedure over the subsets in $\cal E_h$ that defers vertices from these subsets  into the set $D$. Specifically, we iterate over the subsets in $\cal E_h$ in increasing order of index and perform the following operation: if we are at subset $B_{ty + h}$ in the current iteration and it is nonempty, we check whether there exist nonempty subsets $B_{t'y + h}$ with $t' > t$ such that 
\begin{align}
    |B_{t'y + h}| \le (1 + \vare)^{t' - t} |B_{ty + h}|
\end{align} 
If so, we defer all vertices from such subsets $B_{t'y  + h}$ to $D$. When this transfer occurs, we say that $B_{t'y + h}$ has been deferred  {\em due to} $B_{ty + h}$. We repeat this procedure for all collections $\cal E_1, \ldots, \cal E_y$. 

The implication of this procedure is that, within each collection,  the nonempty subsets have geometrically increasing sizes. Specifically, the following claim holds after the processing. 
\begin{claim}\label{claim:bucket-size-upper-bound}
Within any collection $\cal E_h$, for any nonempty subset $B_{ty + h}$ we have 
\begin{align*}
    \sum_{s = 0}^t |B_{sy + h}| \le \left(\frac{1 + \vare}{\vare} \right) |B_{ty + h}|.
\end{align*}
\end{claim}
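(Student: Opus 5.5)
The key structural fact I will extract from the deferral procedure is that, within a fixed collection $\mathcal{E}_h$, the surviving nonempty subsets grow geometrically. Once that is in hand, the claim follows from summing a geometric series whose ratio is $\frac{1}{1+\varepsilon}$.

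\textbf{Step 1: pairwise growth.} I would first show that if $s<t$ and both $B_{sy+h}$ and $B_{ty+h}$ are nonempty after processing, then
\begin{align*}
|B_{ty+h}| > (1+\varepsilon)^{t-s}\,|B_{sy+h}|.
\end{align*}
The argument runs as follows.
\begin{itemize}
\item The procedure scans $\mathcal{E}_h$ in increasing order of index. A subset is only ever deferred entirely, so any subset that is nonempty at the end was never touched.
\item In particular, $B_{sy+h}$ was nonempty when the scan reached it. At that iteration, every later nonempty $B_{t'y+h}$ with $|B_{t'y+h}| \le (1+\varepsilon)^{t'-s}|B_{sy+h}|$ was deferred.
\item Since $B_{ty+h}$ survived, it violated this inequality at that moment, which gives the strict bound above.
\end{itemize}
Two points need care here. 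First, the sizes of the $B_j$ do not change during this procedure: the divisibility deferral happened beforehand, and afterwards subsets are only emptied, never shrunk. So the sizes compared at iteration $s$ equal the final sizes. Second, a subset that is empty at the end contributes $0$ to the sum regardless of why it is empty.

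\textbf{Step 2: summation.} Fix a nonempty $B_{ty+h}$. Only the nonempty terms contribute to the sum, and every nonempty $B_{sy+h}$ with $s \le t$ satisfies $|B_{sy+h}| \le (1+\varepsilon)^{-(t-s)}|B_{ty+h}|$ by Step 1 (with equality when $s=t$). Summing over $s=0,\dots,t$ and bounding by the infinite geometric series gives
\begin{align*}
\sum_{s=0}^t |B_{sy+h}| \le |B_{ty+h}|\sum_{r\ge 0}(1+\varepsilon)^{-r} = \frac{1+\varepsilon}{\varepsilon}\,|B_{ty+h}|.
\end{align*}

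\textbf{Main obstacle.} The delicate step is Step 1. I must argue that the ordering of the scan guarantees each surviving earlier subset was itself "active" when its turn came, and that the comparison it made used the same sizes as the final configuration. Both facts follow from the observation that deferrals remove whole subsets and only affect indices strictly later than the current one.

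One indexing detail: if the index $s=0$ gives $B_h$ with $h \ge 1$, the indices in $\mathcal{E}_h$ are exactly $sy+h$ for $s \ge 0$, so nothing changes.
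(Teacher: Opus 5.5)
Your proposal is correct and matches the paper's argument: the deferral rule gives $(1+\varepsilon)^{t-s}|B_{sy+h}| < |B_{ty+h}|$ for every $s<t$, and summing the resulting geometric series yields the factor $\frac{1+\varepsilon}{\varepsilon}$. Your extra justification makes explicit what the paper asserts in one line: surviving subsets were never touched, so each was active when the scan reached it and its size is unchanged.
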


\begin{proof}
Fix a collection $\cal E_h = \{B_h, B_{y + h}, B_{2y + h}, \ldots\}$ and a nonempty subset $B_{ty + h}$ in it. For any subset $B_{sy + h} \in \cal E_h$ with $s < t$, the deferring procedure executed within $\cal E_h$ ensures that $(1 + \vare)^{t-s} |B_{sy + h}| < |B_{ty + h}|$. Therefore, 
\begin{align}
    \sum_{s=0}^t |B_{sy + h}| \leq \sum_{s=0}^t \frac{1}{(1 + \vare)^{t-s}} |B_{ty + h}| \leq \sum_{s' \geq 0} \frac{1}{(1 + \vare)^{s'}} |B_{ty + h}| \leq \left (\frac{1 + \vare}{\vare} \right) |B_{ty + h}|
\end{align}
The claim stands proved.  
\end{proof}
Next, we bound the weight of the vertices deferred while processing of the collections $\cal E_h$. Consider any nonempty subset $B_j$ and let $W_{\text{def}}(B_j)$ denote the total weight of the vertices deferred \emph{due to} $B_j$.\footnote{Since the deferrals are performed in increasing order of subset indices, $B_j$ remains nonempty at the end of the procedure.} 

Let $B_{t'y + j}$ be any subset whose vertices are deferred due to $B_j$. Then $|B_{t'y + j} | \leq (1+\vare)^{t'} |B_j|$. Also, the vertices in $B_{t'y + j}$ have weight at most $\frac1{(1 + \vare)^{j + t'y - 1}}$. These bounds give us
{
\allowdisplaybreaks
\begin{align}
    W_{\text{def}}(B_j) & \leq \sum_{t'\geq 1} \left( 1 + \vare \right)^{t'} |B_j|  \left( \frac1{(1 + \vare)^{j + t'y - 1}} \right) \nonumber \\
    &\le \frac{|B_j|}{(1 + \vare)^{(j - 1)}} \sum_{t' \geq 1} \left (\frac1{(1 + \vare)^{t'(y - 1)}} \right ) \notag \\   
    &\le \frac{|B_j|}{(1 + \vare)^{(j - 1)}} \sum_{t' \geq 1}  \left (\frac1{(1 + \vare)^{t'\left ( \frac{1 + \vare}{\vare} \ln{\frac{1}{\vare}}\right )}} \right ) \tag{see eqn.~\eqref{eq:defnxy}} \\  
    &\le \frac{|B_j|}{(1 + \vare)^{(j - 1)}} \sum_{t' \geq 1} \vare^{t'} \tag{since $1 + \vare \ge e^{\frac{\vare}{1 + \vare}}$} \\
    &\le \frac{|B_j|}{(1 + \vare)^{(j - 1)}} \left ( \frac{\vare}{1-\vare} \right ) \label{eq:discard-upper-bound}
\end{align}
 }
We collect the nonempty subsets from $T_1,\ldots, T_x, B_1, \ldots, B_\ell$ and order them by nondecreasing size to form the partition $\left( \widetilde{V}_1, \ldots, \widetilde{V}_{\widetilde{\ell}} \right)$ of $V \setminus D$. The above-mentioned construction ensures that each $|\widetilde{V}_j|$ is a multiple of $k$. 

We first bound the parameter $\eta = \max_{t \in [\widetilde{\ell}]} \ \frac{1}{ |\widetilde{V}_t|} \sum_{j=1}^t |\widetilde{V}_j|$ for this partition. The bound will give us that the number of colors  $k \ge \frac{4(1 + \varepsilon)^2\Delta}{\varepsilon^2}\left (\ln{\frac1{\varepsilon}} + 4 \right )$  (as specified in the theorem statement) is sufficient to apply \Cref{thm:partition-equitability}. Hence, invoking \Cref{thm:partition-equitability} with the partition $\left( \widetilde{V}_1, \ldots, \widetilde{V}_{\widetilde{\ell}} \right)$, we obtain a coloring $(\widetilde{C}_1, \ldots, \widetilde{C}_k)$ of the graph $G[V \setminus D]$.

To bound $\eta$, consider any subset $\widetilde{V}_{t}$ in the partition  $\left( \widetilde{V}_1, \ldots, \widetilde{V}_{\widetilde{\ell}} \right)$. Our analysis proceeds via the following two exhaustive cases:

\noindent\textbf{Case 1:} $\widetilde{V}_{t} =T_z$ for some $z$. In this case, $|\widetilde{V}_{t}| = k$ --- the smallest possible size  for a nonempty set in the partition. Hence, all the subsets $\widetilde{V}_{1}, \ldots, \widetilde{V}_{t}$ are of size $k$. Moreover, the index $t$ is at most the number of size-$k$ subsets in the partition. Since there are $x$ subsets $T_1, \ldots, T_x$ (each of size $k$) and every collection $\cal E_1, \ldots, \cal E_y$ contains at most one subset of size $k$,\footnote{Recall that the subset sizes increase geometrically in each collection; see \Cref{claim:bucket-size-upper-bound}.} we have that $t \leq x + y$.  
Therefore, in the current case
\begin{align}
\frac{1}{|\widetilde{V}_{t} |}\sum_{j = 1}^{t} |\widetilde{V}_{j}| = t \le x + y \label{ineq:eta-1}
\end{align}

\noindent\textbf{Case 2:} $\widetilde{V}_{t} = B_{j}$ for some $j$. Let $\widetilde{k} \coloneqq |\widetilde{V}_{t}| \geq k$. To bound the total size of the subsets preceding $\widetilde{V}_{t}$ in the partition $\left( \widetilde{V}_1, \ldots, \widetilde{V}_{\widetilde{\ell}} \right)$, we consider each collection $\cal E_h$ individually: Let $B_{\tau y + h} \in \cal E_h$ be the largest nonempty subset in $\cal E_h$ with $|B_{\tau y+h}| \leq \widetilde{k}$. We consider $B_{\tau y+h}$ in the analysis only if it exists, since such a subset may not exist for every collection. 

We bound the total size of the subsets that lie in $\cal E_h$ and precede $\widetilde{V}_{t}$ in the partition (i.e., subsets in $\{\widetilde{V}_{1}, \ldots, \widetilde{V}_{t} \} \cap \cal E_h$) using \Cref{claim:bucket-size-upper-bound}. Specifically, the total size of such subsets is at most 
\begin{align}
    \sum_{s=0}^\tau |B_{sy + h}| \underset{\text{\Cref{claim:bucket-size-upper-bound}}}{\leq} \left(\frac{1 + \vare}{\vare} \right) |B_{\tau y + h}| \leq \left(\frac{1 + \vare}{\vare} \right) \widetilde{k} = \left(1 + \frac{1}{\vare} \right) |\widetilde{V}_{t}| \label{ineq:per-collection}
\end{align}
    Inequality \eqref{ineq:per-collection} implies that the total size of the subsets in $\{\widetilde{V}_{1}, \ldots, \widetilde{V}_{t} \} \cap \left( \cup_{h=1}^y \cal E_h \right)$ is at most $y \left(1 + \frac{1}{\vare} \right) |\widetilde{V}_{t}|$. In addition to these subsets from $\cup_{h=1}^y \cal E_h$, at most $x$ subsets $T_1, \ldots, T_x$ (each of size $k$) precede $\widetilde{V}_{t}$. Therefore, 
    \begin{align}
        \sum_{j=1}^t |\widetilde{V}_j| \leq y \left(1 + \frac{1}{\vare} \right) |\widetilde{V}_{t}| + x k \leq y \left(1 + \frac{1}{\vare} \right) |\widetilde{V}_{t}| + x |\widetilde{V}_{t}| 
    \end{align}
    The inequality reduces to 
    \begin{align}
        \frac{1}{|\widetilde{V}_{t}| } \sum_{j=1}^t |\widetilde{V}_j| \leq y \left(1 + \frac{1}{\vare} \right) + x  \label{ineq:eta-2}
    \end{align}

Considering the two cases (inequalities \eqref{ineq:eta-1} and \eqref{ineq:eta-2}), we obtain 
\begin{align}
    \eta  = \max_{t \in [\widetilde{\ell}]} \ \frac{1}{ |\widetilde{V}_t|} \sum_{j=1}^t |\widetilde{V}_j| 
     & \leq y \left(1 + \frac{1}{\vare} \right) + x \nonumber \\
    & \leq \left ( \frac{1 + \vare}{\vare} \ln{\frac1{\vare}} + 2 \right ) \left (1 + \frac1\vare \right ) + \frac{1}{\vare^2} + 1 \tag{eqn.~\eqref{eq:defnxy}} \\
&= \left ( \frac{1+ \vare}{\vare} \right )^2 \ln{\frac{1}{\vare}} + \left ( \frac{1+ \vare}{\vare} \right )^2 + 2. \notag \\
&\le \left ( \frac{1+ \vare}{\vare} \right )^2 \left (\ln{\frac1{\vare}} + 2 \right ) \label{ineq:ub-eta}
\end{align}
This upper bound on $\eta$ implies that $k$ (as specified in the theorem statement) is large enough to invoke \Cref{thm:partition-equitability} on the graph $G[V \setminus D]$ with partition $\left( \widetilde{V}_1, \ldots, \widetilde{V}_{\widetilde{\ell}} \right)$, yielding a $k$-coloring  $\left( \widetilde{C}_1, \ldots, \widetilde{C}_k \right)$ that is equitable with respect to this partition:
\begin{align}
    |\widetilde{C}_i \cap \widetilde{V}_j| = \frac{|\widetilde{V}_j|}{k} \quad \text{ for every } i \in [k] \text { and } j \in [\widetilde{\ell}] \label{eq:tildeCeq}
\end{align}

Using Lemma \ref{lem:filling-up} to color the deferred vertices $D$, we extend $\left( \widetilde{C}_1, \ldots, \widetilde{C}_k \right)$ to obtain a $k$-coloring $(C^*_1, \ldots, C^*_k)$ of $G$. 

We complete the proof of the theorem by showing that $(C^*_1, \ldots, C^*_k)$ is $(1 + 3\vare)$-\EQo under the given weights. The following three claims will be used to establish this equitability guarantee.

\begin{claim}\label{claim:discard-multiple-attribution}
For every color class $C^*_i$ in the coloring $(C^*_1, \ldots, C^*_k)$, we have $\sum_{j = 1}^x w(C^*_i \cap T_j) \ge x \ge \frac1{\vare^2}$. 
\end{claim}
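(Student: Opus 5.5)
The plan is to read off the claim from the structure of the sets $T_1, \ldots, T_x$ together with the normalization $w_{kx+1} = 1$. We establish the bound for the coloring $(\widetilde{C}_1, \ldots, \widetilde{C}_k)$ obtained from \Cref{thm:partition-equitability}, and then pass to the extension $(C^*_1, \ldots, C^*_k)$.

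First we check that the sets $T_j$ are never modified. The deferral steps, both the divisibility step behind \eqref{eq:discard-multiple} and the collection-based step over $\cal E_1, \ldots, \cal E_y$, only move vertices out of the subsets $B_j$. Hence each $T_j$ survives intact as a part of the partition $(\widetilde{V}_1, \ldots, \widetilde{V}_{\widetilde{\ell}})$ of $V \setminus D$, and it has size exactly $k$. Applying the equitability guarantee \eqref{eq:tildeCeq} to this part gives $|\widetilde{C}_i \cap T_j| = |T_j|/k = 1$ for every $i \in [k]$ and every $j \in [x]$.

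Next we use \Cref{lem:filling-up}. It guarantees $\widetilde{C}_i \subseteq C^*_i$, and the vertices it adds all lie in $D$, which is disjoint from every $T_j$. Therefore $C^*_i \cap T_j = \widetilde{C}_i \cap T_j$, and this set consists of exactly one vertex. That vertex has index at most $kx$, so by the ordering $w_1 \ge \cdots \ge w_n$ and the normalization $w_{kx+1} = 1$, its weight is at least $1$. Summing over $j \in [x]$ gives $\sum_{j=1}^x w(C^*_i \cap T_j) \ge x$. Finally, $x = \lceil 1/\vare^2 \rceil \ge 1/\vare^2$ by \eqref{eq:defnxy}.

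Each step is routine. The only point that needs care is confirming that the $T_j$ occur as parts of the partition with size exactly $k$. This rests on the standing assumption $n \ge kx$, which makes $T_1, \ldots, T_x$ well defined with $w_{kx+1}$ existing; if $n = kx$, the normalization step is vacuous and any positive rescaling works. It also rests on the fact that $T_j$ is never touched by the deferral process.
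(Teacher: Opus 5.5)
Your argument is correct and is essentially the paper's proof: $\widetilde{C}_i \subseteq C^*_i$, equitability \eqref{eq:tildeCeq} puts exactly one vertex of each $T_j$ into each class, and the normalization gives each such vertex weight at least $1$; your explicit checks that deferral never touches the $T_j$ and that \Cref{lem:filling-up} only adds vertices of $D$ (so $C^*_i \cap T_j = \widetilde{C}_i \cap T_j$) are details the paper leaves implicit. One small slip is in your side remark: when $n = kx$, an arbitrary positive rescaling need not give the $T_j$-vertices weight at least $1$, so you should normalize $w_{kx} = 1$ instead (and $n \ge kx$ is only implicitly required by the construction, not the paper's stated assumption $n \ge 2/\vare^2$).
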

\begin{proof}
The coloring $(C^*_1, \ldots, C^*_k)$ is obtained by extending $\left( \widetilde{C}_1, \ldots, \widetilde{C}_k \right)$ via Lemma \ref{lem:filling-up}. Hence, $\widetilde{C}_i \subseteq C^*_i$ for each $i \in [k]$. Furthermore, equitability of $\left( \widetilde{C}_1, \ldots, \widetilde{C}_k \right)$ with respect to the partition $\left( \widetilde{V}_1, \ldots, \widetilde{V}_{\widetilde{\ell}} \right)$ (see equation \eqref{eq:tildeCeq}) ensures that $|\widetilde{C}_i \cap T_j| = |T_j|/k = 1$ for each $i \in [k]$ and $j \in [x]$; recall that the subsets $T_j$ are parts in the partition $\left( \widetilde{V}_1, \ldots, \widetilde{V}_{\widetilde{\ell}} \right)$. Summing over $j \in [x]$, we obtain $\sum_{j = 1}^x w(C^*_i \cap T_j) = \sum_{j = 1}^x w \left(\widetilde{C}_i \cap T_j \right) \ge x$. The last inequality follows from the fact that, under the previously-mentioned normalization, each vertex in $T_j$ has weight at least $1$. The claim stands proved. 
\end{proof}

\begin{claim}\label{claim:discarded-weight-upper-bound}
The total weight of the deferred set $D$ satisfies $w(D) \leq k\vare \left(\frac{1 + \vare}{1 - \vare} \right) w(C^*_i)$ for each color class $C^*_i$.
\end{claim}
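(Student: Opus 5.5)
The plan is to split $D$ into the two kinds of deferred vertices and bound each separately in terms of $w(C^*_i)$.

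\emph{Divisibility deferrals.} By \eqref{eq:discard-multiple}, these vertices have total weight $W'_{\text{def}} \le k\frac{1+\vare}{\vare}$. By \Cref{claim:discard-multiple-attribution}, $w(C^*_i) \ge x \ge 1/\vare^2$. Hence
\[ W'_{\text{def}} \le k\frac{1+\vare}{\vare}\cdot \vare^2 w(C^*_i) = k\vare(1+\vare)\, w(C^*_i). \]

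\emph{Geometric-growth deferrals.} These are attributed to the surviving nonempty buckets $B_j$. Inequality \eqref{eq:discard-upper-bound} gives
\[ W_{\text{def}}(B_j) \le \frac{\vare}{1-\vare}\cdot\frac{|B_j|}{(1+\vare)^{j-1}}. \]
Every vertex of $B_j$ has weight greater than $(1+\vare)^{-j}$. So $|B_j|(1+\vare)^{-(j-1)} = (1+\vare)|B_j|(1+\vare)^{-j} < (1+\vare)\, w(B_j)$. Summing over $j$,
\[ \sum_j W_{\text{def}}(B_j) \le \frac{\vare(1+\vare)}{1-\vare}\sum_j w(B_j), \]
where $B_j$ now denotes the final (post-deferral) buckets, which are parts of the partition.

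The key step converts $\sum_j w(B_j)$ into $k\,w(C^*_i)$ for every $i$. By \eqref{eq:tildeCeq}, $|\widetilde C_i\cap B_j| = |B_j|/k$. Each vertex of $B_j$ has weight in $((1+\vare)^{-j},(1+\vare)^{-(j-1)}]$, so
\[ w(\widetilde C_i\cap B_j) \ge \frac{|B_j|}{k}(1+\vare)^{-j} \ge \frac{w(B_j)}{k(1+\vare)}. \]
That loses a factor $1+\vare$. To avoid the loss, I would skip $w(B_j)$ and compare directly:
\[ \frac{|B_j|}{(1+\vare)^{j-1}} = k(1+\vare)\cdot \frac{|B_j|}{k}(1+\vare)^{-j} \le k(1+\vare)\, w(\widetilde C_i\cap B_j). \]
Summing over $j$, the second part is at most $k\frac{\vare(1+\vare)}{1-\vare}\sum_j w(C^*_i\cap B_j)$, using $\widetilde C_i\subseteq C^*_i$.

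\emph{Combining.} The $T$-part and $B$-part of $C^*_i$ are disjoint, so both pieces can be charged against $w(C^*_i)$. The divisibility part is charged to $\sum_{j\le x} w(C^*_i\cap T_j)$ via \Cref{claim:discard-multiple-attribution} (which lower-bounds exactly this $T$-portion). The growth part is charged to $\sum_j w(C^*_i\cap B_j)$. This gives
\[ w(D) \le k\vare(1+\vare)\Bigl[\sum_{j\le x} w(C^*_i\cap T_j) + \tfrac{1}{1-\vare}\sum_j w(C^*_i\cap B_j)\Bigr] \le k\vare\frac{1+\vare}{1-\vare}\, w(C^*_i). \]

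The main obstacle is this bookkeeping: the disjoint charging is what prevents the constant from doubling. The details to verify are that attributions to buckets are exhaustive, and that the deferral ordering leaves each charged $B_j$ nonempty at the end. The latter is guaranteed by the footnote on processing in increasing index order.
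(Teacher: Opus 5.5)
Your proposal is correct and follows the paper's proof step for step: the paper also bounds $W'_{\text{def}} \le k(1+\vare)/\vare$ and charges it to $\sum_{j\le x} w(C^*_i\cap T_j) \ge 1/\vare^2$, bounds $\sum_j W_{\text{def}}(B_j)$ using the direct comparison $|B_j|(1+\vare)^{-(j-1)} \le k(1+\vare)\, w(C^*_i\cap B_j)$ from equitability, and adds the two disjoint pieces to get $k\vare\frac{1+\vare}{1-\vare}\, w(C^*_i)$. Your detour through $w(B_j)$ is unnecessary but harmless, since you drop it in favor of the direct comparison.
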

\begin{proof}
The weight of the set $D$ can be upper bounded by combining inequalities \eqref{eq:discard-multiple} and \eqref{eq:discard-upper-bound}:
{\allowdisplaybreaks
\begin{align*}
w(D) &= W'_{\text{def}} + \sum_{j \in [\ell]} W_{\text{def}}(B_j) \\
&\le \frac{k(1 + \vare)}{\vare} \ + \ \sum_{j \in [\ell]} \left ( \frac{\vare}{1-\vare} \right )\left (\frac{|B_j|}{(1 + \vare)^{(j - 1)}} \right ) \\
&\le k(1 + \vare)\vare \sum_{j = 1}^x w(C^*_i \cap T_j) \ +  \ \sum_{j \in [\ell]} \left ( \frac{\vare}{1-\vare} \right )\left (\frac{|B_j|}{(1 + \vare)^{(j - 1)}} \right ) \tag{\Cref{claim:discard-multiple-attribution}} \\
&\le k(1 + \vare)\vare \sum_{j = 1}^x w(C^*_i \cap T_j) \ + \ \sum_{j \in [\ell]} \left ( \frac{\vare}{1-\vare} \right ) \ k(1 + \vare) \ w(C^*_i \cap B_j)\\
&\le k \vare \left( \frac{ 1 + \vare}{1-\vare} \right) \ w(C^*_i).
\end{align*}
}
The penultimate inequality follows from the fact that each $\widetilde{C}_i$ contains exactly $|B_j|/k$ vertices from the subset $B_j$ (see equation \eqref{eq:tildeCeq}) and each of these vertices has weight at least $\frac1{(1 + \vare)^j}$. This completes the proof of the claim. 
\end{proof}

\begin{claim}\label{claim:math-stuff-1}
$\left (\frac{k}{k-\Delta} \right ) \left ( \frac{1 + \vare}{1 - \vare} \right ) \le 1.5$.
\end{claim}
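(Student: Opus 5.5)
The claim only needs a lower bound on the ratio $k/\Delta$ from the theorem's hypothesis, combined with the restriction $\vare \le 0.1$. I would first rewrite the target as
\[
\frac{k}{k-\Delta} = \frac{1}{1 - \Delta/k}.
\]
It then suffices to show
\[
\frac{\Delta}{k} \le 1 - \frac{1+\vare}{1.5(1-\vare)}.
\]

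The first step is to lower bound $k/\Delta$. From the hypothesis of \Cref{thm:eps-equitable},
\[
\frac{k}{\Delta} \ge \frac{4(1+\vare)^2}{\vare^2}\left(\ln\frac{1}{\vare} + 4\right) \ge \frac{4}{\vare^2}\cdot 4 = \frac{16}{\vare^2}.
\]
For $\vare \in (0, 0.1]$ this is at least $1600$. Hence $\Delta/k \le \vare^2/16 \le 1/1600$, and so
\[
\frac{k}{k-\Delta} \le \frac{1}{1 - \vare^2/16} \le \frac{1600}{1599}.
\]

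The second step bounds the other factor. The map $\vare \mapsto \frac{1+\vare}{1-\vare}$ is increasing on $(0,1)$, so for $\vare \le 0.1$ it is at most $\frac{1.1}{0.9} = \frac{11}{9} \approx 1.2223$.

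Multiplying the two bounds gives
\[
\left(\frac{k}{k-\Delta}\right)\left(\frac{1+\vare}{1-\vare}\right) \le \frac{1600}{1599}\cdot\frac{11}{9} \approx 1.223,
\]
which is comfortably below $1.5$. This proves the claim.

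There is no real obstacle here. The only care needed is to use the crude bound $\ln(1/\vare) + 4 \ge 4$ and to discard the factor $(1+\vare)^2 \ge 1$, so that the estimate is valid uniformly over the whole range $\vare \in (0, 0.1]$.
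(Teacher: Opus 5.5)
Your proof is correct and follows essentially the same route as the paper. Both arguments lower-bound $k/\Delta$ from the hypothesis of \Cref{thm:eps-equitable} using $\vare \le 0.1$ (the paper settles for $k \ge 400\Delta$, while your $4(\ln\frac{1}{\vare}+4) \ge 16$ gives $1600$), bound $\frac{1+\vare}{1-\vare} \le \frac{11}{9}$ by monotonicity, and multiply the two estimates.
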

\begin{proof}
Since $\vare \le 0.1$, we have $k \geq 400\Delta$ and $\left ( \frac{1 + \vare}{1 - \vare} \right ) \le \frac{11}{9}$. Hence, 
\begin{align*}
    \left (\frac{k}{k-\Delta} \right ) \left ( \frac{1 + \vare}{1 - \vare} \right ) \le \frac{400}{399} \cdot \frac{11}{9} \le 1.5.
\end{align*}
\end{proof}

We now complete the proof of the theorem by establishing that the coloring $(C^*_1,\ldots,C^*_k)$ satisfies the stated approximate \EQo guarantee under the given weights. Consider any two color classes $C^*_a$ and $C^*_{b}$. Let $\bar{v}$ be the vertex in the $C^*_a \cap T_1$.
{\allowdisplaybreaks
\begin{align}
w(C^*_a \setminus \{\bar{v}\}) & \le w\left( \widetilde{C}_a \setminus \{\bar v\} \right) + \frac{w(D)}{k - \Delta} + \maxw(D) \tag{via \Cref{lem:filling-up}} \\
& \le w\left( \widetilde{C}_a \setminus \{\bar v\} \right) + \frac{w(D)}{k - \Delta} + 1 \tag{$w_v \leq 1$ for every $v\in D$, under normalization} \\ 
&\le \sum_{j = 2}^x \maxw(T_j) + \sum_{j \in [\ell]} \frac{|B_j|}{k} \maxw(B_j) + \frac{w(D)}{k - \Delta} + 1  \tag{via equation \eqref{eq:tildeCeq}}\\
&\le \sum_{j = 1}^{x-1} \minw(T_j) + \sum_{j \in [\ell]} \frac{|B_j|}{k} \maxw(B_j) + \frac{w(D)}{k - \Delta} + 1 \tag{$\minw(T_j) \ge \maxw(T_{j+1})$} \\
&\le \sum_{j = 1}^{x-1} \minw(T_j) + \sum_{j \in [\ell]} \frac{|B_j|}{k} (1 + \vare) \minw(B_j) + \frac{w(D)}{k - \Delta} + 1 \tag{$w_v \in \left(\frac{1}{(1 + \vare)^{j}}, \frac{1}{(1 + \vare)^{j-1}} \right]$ for all $v\in B_j$} \\
&\le \sum_{j = 1}^{x-1} w(C^*_{b} \cap T_j) + (1 + \vare) \sum_{j \in [\ell]} w(C^*_{b} \cap B_j) \ \ + \frac{w(D)}{k - \Delta} + 1  \tag{equation \eqref{eq:tildeCeq} and $\widetilde{C}_{b} \subseteq C^*_{b}$} \\
&\le (1 + \vare) w(C^*_{b}) \ \ + \frac{w(D)}{k - \Delta} + 1  \label{ineq:apx-eqo}
\end{align}
Furthermore, note that $\vare^2 w(C^*_{b}) \geq 1$ (\Cref{claim:discard-multiple-attribution}) and $w(D) \leq k\vare \left(\frac{1 + \vare}{1 - \vare} \right) w(C^*_{b})$ (\Cref{claim:discarded-weight-upper-bound}). Hence, inequality~\eqref{ineq:apx-eqo} extends to

\begin{align*}
w(C^*_a \setminus \{\bar{v}\}) & \le (1 + \vare) w(C^*_{b})  + \frac{k \vare}{k-\Delta} \left(\frac{1 + \vare}{1 - \vare} \right) w(C^*_{b}) + \vare^2 w(C^*_{b}) \\  
& \le (1 + \vare) w(C^*_{b})  \ + \ 1.5 \vare  \  w(C^*_{b}) \ + \ 0.1 \vare \ w(C^*_{b}) \tag{\Cref{claim:math-stuff-1} and $\vare \leq 0.1$} \\
&\le (1 + 3 \vare) w(C^*_{b}).
\end{align*}
}
Overall, we obtain that the coloring $(C^*_1,\ldots,C^*_k)$ is $(1+3\vare)$-\EQo. 

\subsection{Efficient Algorithm}
The proof of \Cref{thm:eps-equitable} yields a polynomial-time algorithm (Algorithm \ref{algo:eps-eqo}) to compute a $(1 + 3\vare)$-\EQo coloring.   

\begin{algorithm}[h]
\caption{Finding a $(1 + 3\vare)$-\EQo Coloring}
\label{algo:eps-eqo}
\begin{algorithmic}[1]
\Require Vertex-weighted graph $G=(V, E, w)$ and integer $k \in \mathbb{Z}_+$.
\Ensure A $(1 + 3\vare)$-\EQo $k$-coloring $(C^*_1, \dots, C^*_k)$. 
\State Initialize $D = \emptyset$
\State Partition the vertices into $(T_1, \dots, T_x, B_1, \dots, B_{\ell})$ as defined in Section \ref{sec:apx-eqo-proof}
\State Defer vertices in each bucket to $D$ to ensure each $|B_i|$ is a multiple of $k$ \label{line:start}
\State Create collections $\cal E_1, \dots, \cal E_{y}$ as defined in Section \ref{sec:apx-eqo-proof}
\For{each collection $\cal E_j$}
    \For{each non-empty bucket $B_{ty + j} \in \cal E_j$ in increasing order of index $t$}
        \If{there exists a non-empty bucket $B_{t'y + j}$ such that $t' > t$ and $|B_{t'y + j}| \le (1 + \vare)^{t' - t}|B_{ty + j}|$}
        \State Defer all vertices in the bucket $B_{t'y + j}$ to $D$
        \EndIf
    \EndFor
\EndFor
\State Let $(\widetilde{V}_1, \dots, \widetilde{V}_{\ell})$ be the non-empty sets of $(T_1, \dots, T_x, B_1, \dots, B_{\ell})$ sorted in increasng order of size
\State Use Theorem \ref{thm:partition-equitability} to construct a coloring $(\widetilde{C}_1, \dots, \widetilde{C}_k)$ of $G[V \setminus D]$
\State Complete the coloring $(\widetilde{C}_1, \ldots, \widetilde{C}_k)$ using Lemma~\ref{lem:filling-up} to obtain a $k$-coloring $(C^*_1, \dots, C^*_k)$ of $G$. \label{line:end}
\State \Return $(C^*_1, \dots, C^*_k)$.
\end{algorithmic}
\end{algorithm}

The only issue preventing a trivial implementation of the above algorithm to run in polynomial time is that the number of buckets $\ell$ could be exponential in $n$. To sidestep this, we use the fact that there are at most $n$ non-empty buckets. Moreover, we can identify the set in $T_1, \dots, T_x, B_1, \dots, B_{\ell}$ that each vertex $i$ belongs to in polynomial time. This is because for any rational number $p/q$, we can find the $j$ such that $\left ( \frac{1}{(1 + \vare)^j}, \frac{1}{(1 + \vare)^{j-1}}\right ]$ in time polynomial in $\frac1{\vare}$ and $O(\max\{\log p, \log q\})$. Note that the number of bits required to represent the rational number $p/q$ is $O(\max\{\log p, \log q\})$. Therefore, in polynomial time, we can identify which buckets are non-empty and what each non-empty bucket contains. Using this, all the remaining steps of the algorithm (Lines \ref{line:start}--\ref{line:end}) can be carried out in polynomial time.

\begin{remark}
The algorithm (and the proof) implicitly assumes that no vertex in $V$ has zero weight; there is no bucket that these vertices can be added to. Zero weight vertices can be accommodated as follows: let $V_0$ denote the set of all zero weight vertices. We invoke Theorem \ref{thm:eps-equitable} with the graph $G[V \setminus V_0]$, to get a coloring $(C_1, \dots, C_k)$ for the graph $G[V \setminus V_0]$. Then, we complete this coloring using Lemma \ref{lem:filling-up} to obtain a coloring $(C^*_1, \dots, C^*_k)$ for the graph $G$. This completion step only adds zero weight vertices to each color class and therefore, the final coloring remains $(1 + 3\vare)$-\EQo.
\end{remark}

\section{$(1+\vare)$-\EQo with Low Maximum Weight}\label{sec:low-max-wt-eqo}
Our next set of results shows that when the maximum weight $\max_{v \in V} w_v$ is small relative to the total weight $w(V)$, we can obtain stronger guarantees. 

\begin{theorem}\label{thm:low-max-weight}
Let $G = (V, E, w)$ be a vertex-weighted graph with maximum degree $\Delta$, and let $(C'_1, \dots, C'_\kappa)$ be any $\kappa$-coloring of the graph $G$. For any parameter $\vare \in (0, 0.1]$ and integer $k \in \mathbb{Z}_+$, if $k \ge \max \left \{ \frac{\kappa}{\vare}, 2\Delta \right \}$ and $\maxw(V) \le \vare \frac{w(V)}{k}$, then there exists a $(1 + 7\vare)$-\EQo $k$-coloring of $G$. Furthermore, given the $\kappa$-coloring $(C'_1,\dots, C'_\kappa)$, such a coloring can be computed in time polynomial in $n$ and $\frac1{\vare}$.
\end{theorem}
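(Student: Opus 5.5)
The plan is the greedy splitting procedure described in the overview. Write $W \coloneqq w(V)/k$, so every vertex has weight at most $\vare W$. For each class $C'_i$ of the given $\kappa$-coloring, scan its vertices in arbitrary order. Greedily form consecutive subsets $D_{i,1}, D_{i,2}, \ldots$, closing a subset as soon as its weight first reaches at least $(1-2\vare)W$. Because each vertex adds at most $\vare W$, every closed subset has weight in $[(1-2\vare)W,(1-\vare)W]$. The last, unfinished subset of each $C'_i$ has weight below $(1-2\vare)W$, and we discard it into a deferred set $D$. Each $D_{i,j}\subseteq C'_i$ is independent.

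The first step is to count the closed pieces. Their total weight is at least $w(V) - \kappa(1-2\vare)W \ge w(V) - \vare k W$, using $\kappa \le \vare k$. Since each piece weighs at most $(1-\vare)W$, there are at least $k\frac{1-\vare}{1-\vare} = k$ of them. I keep exactly $k$ of these pieces as the partial coloring $(C_1,\ldots,C_k)$. All remaining pieces and all discarded remainders go into $D$. Then
\[
w(D) \le w(V) - k(1-2\vare)W = 2\vare w(V).
\]

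Next I apply Lemma~\ref{lem:filling-up} with $V' = V\setminus D$. This needs $k\ge \Delta+1$, which holds since $k \ge 2\Delta$ and $k \ge \kappa/\vare > \Delta$ when $\kappa \geq 1$; I will handle edge cases separately. The lemma extends the partial coloring to $(C^*_1,\ldots,C^*_k)$ with
\[
w(C^*_i \setminus C_i) - \maxw(C^*_i\setminus C_i) \le \frac{w(D)}{k-\Delta} \le \frac{2\vare w(V)}{k/2} = 4\vare W .
\]
For the equitability check, take any two classes $a,b$ and remove the maximum-weight vertex $\bar v$ of $C^*_a\setminus C_a$. This gives
\[
w(C^*_a\setminus\{\bar v\}) \le (1-\vare)W + 4\vare W = (1+3\vare)W.
\]
On the other side, $w(C^*_b) \ge (1-2\vare)W$. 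The ratio is then $\frac{1+3\vare}{1-2\vare}\le 1+7\vare$ for $\vare\le 0.1$, since $(1+7\vare)(1-2\vare) = 1+5\vare-14\vare^2 \ge 1+3\vare$ when $\vare \le 1/7$. If $C^*_a = C_a$, removing any vertex is even better.

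The main obstacle is the piece count: it must yield at least $k$ closed pieces despite up to $\kappa$ wasted remainders. This is exactly where $k\ge\kappa/\vare$ and the choice of the window $[(1-2\vare)W,(1-\vare)W]$ enter. I will double-check that no rounding is lost there. Running time is polynomial in $n$, since the greedy scan is linear and Lemma~\ref{lem:filling-up} is polynomial.
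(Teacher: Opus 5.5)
Your proof is correct and is essentially the paper's argument: you split each $C'_i$ greedily into pieces of weight in $[(1-2\vare)W,(1-\vare)W]$, keep $k$ pieces so the leftover weighs at most $2\vare w(V)$, and complete via Lemma~\ref{lem:filling-up}; the only difference is that you delete the heaviest filled-in vertex and use the piece's upper bound $(1-\vare)W$, whereas the paper deletes a vertex of the original piece and pays $\maxw$ separately, and both routes give the same $(1+3\vare)W$ bound. One small slip: $\kappa/\vare > \Delta$ need not hold (e.g., bipartite graphs with large $\Delta$), but $k \ge \Delta+1$ follows from $k \ge 2\Delta$ when $\Delta \ge 1$ and trivially when $\Delta = 0$.
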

\begin{proof}
Normalize the weights such that $w(V) = 1$. We partition every color classes $C'_i$ of $(C'_1, \dots, C'_\kappa)$ into $\ell_i \in \mathbb{Z}_+$ pairwise disjoint subsets $D_{(i, 1)}, \ldots D_{(i, \ell_i)}$ such that, for each $j \in [\ell_i]$:
\begin{enumerate}[(a)]
    \item $D_{(i, j)} \subseteq C'_i$.
    \item $w(D_{(i, j)}) \in [\frac{1 - 2\vare}{k}, \frac{1 - \vare}k]$.
    \item There exists some $\bar{v} \in D_{(i, j)}$ such that $w(D_{(i, j)} \setminus \{\bar v\}) \le \frac{1 - 2\vare}{k}$
    \item $w(C'_i \setminus \bigcup_{j = 1}^{\ell_i} D_{(i, j)}) \le \frac{1}{k}$.
\end{enumerate}
Since $\maxw(C'_i) \le \frac{\vare}{k}$,\footnote{Recall that $w(V) = 1$.} the subsets $D_{(i, 1)}, \ldots D_{(i, \ell_i)}$ satisfying properties (a)-(d) exist and can be constructed via a greedy procedure. Specifically, for each $j$, form $D_{(i, j)}$ as any minimal subset of $C'_i \setminus (\bigcup_{j' = 1}^{j-1} D_{(i, j')})$ with weight at least $\frac{1-2\vare}{k}$, provided such a subset exits. 

Now, consider vertex subset $V' \coloneqq \cup_{i=1}^{\kappa} \cup_{j=1}^{\ell_i} D_{(i, j)}$ and note that $\left(D_{(1, 1)}, \ldots, D_{(1, \ell_1)}, \ldots, D_{(\kappa,1)}, \ldots D_{(\kappa, \ell_\kappa)} \right)$ forms a coloring for the induced subgraph $G[V']$. The total weight of these color classes satisfies $w(V') \geq 1 - \frac{\kappa}{k} \ge 1 - \vare$; this follows from property (d) and the bound on $k$ in the theorem statement.

Re-index these color classes as $(D_1, \dots, D_{k'})$. Since each $D_i$ has weight at most $\frac{1 - \vare}{k}$ and the total weight of these color classes $w(V')$ is at least $1 - \vare$, it follows that $k' \ge k$. Now, consider $k$ color classes $D_1,\ldots, D_k$ and note that $(D_1, \ldots, D_k)$ forms a $k$-coloring of the subgraph $G[V'']$, where $V'' = \cup_{i=1}^k D_i \subseteq V'$. Since each $D_i$ has weight $w(D_i) \geq \frac{1 - 2\vare}{k}$, we have $w(V \setminus V'') \le 2\vare$.

We next apply Lemma \ref{lem:filling-up} to extend $(D_1, \ldots, D_k)$ and obtain a $k$-coloring $(D^*_1, \ldots, D^*_k)$ of the given graph $G$. We will show that this coloring is $(1 + 7\vare)$-\EQo with respect to the vertex weights. To this end, fix any two color classes $D^*_i$ and $D^*_j$, and let vertex $\bar{v} \in D_i \subseteq D^*_i$ be such that $w(D_i \setminus \{\bar v\}) \le \frac{1 - 2\vare}{k}$; the existence of such a vertex follows from property (c). We have
\begin{align*}
    w(D^*_i \setminus \{\bar v\}) &= w(D_i \setminus \{\bar v\}) + w(D^*_i \setminus D_i) \\
    & \leq \frac{1 - 2\vare}{k} + w(D^*_i \setminus D_i) \tag{property (c)}\\
    &\le \frac{1 - 2\vare}{k} + \frac{w(V \setminus V'')}{k - \Delta} + \maxw(V \setminus V'') \tag{Lemma \ref{lem:filling-up}}\\
    &\le  \frac{1 - 2\vare}{k} + \frac{2\vare}{k - \Delta} + \frac{\vare}{k} \\ 
    &\le \frac{1 - 2\vare}{k} + \frac{4\vare}{k} + \frac{\vare}{k} \tag{since $k \geq 2\Delta$} \\
    &= \frac{1 - 2\vare}{k} \left ( 1 + \frac{5 \vare}{1 - 2 \vare} \right ) \\
    &\le w(D_j) (1 + 7\vare) \tag{since $\vare \leq 0.1$ and $w(D_j) \geq \frac{1 - 2\vare}{k}$} \\ 
    & \le w(D^*_j)(1 + 7\vare).
\end{align*}
Therefore, the $k$-coloring $(D^*_1, \ldots, D^*_k)$ of $G$ is $(1 + 7\vare)$-\EQo.

The proof yields a polynomial-time algorithm for computing the desired $(1 + 7\vare)$-\EQo coloring. First, for each given color class $C'_i$, we construct a sub-partition $\left(D_{(i, 1)}, \dots, D_{(i, \ell_i)} \right)$ satisfying properties (a)--(d) using the efficient greedy procedure described above. Once the sub-partition $(D_{(i, 1)}, \dots, D_{(i, \ell_i)})$ is constructed for all color classes $C'_i$, the other two steps of the algorithm are straightforward: 
\begin{inparaenum}[(i)]
    \item select any $k$ sets $D_1, \dots, D_{k}$, and
    \item invoke Lemma \ref{lem:filling-up} to complete the coloring.
\end{inparaenum}
Both steps can be executed in polynomial time. This completes the proof. 
\end{proof}

We can instantiate \Cref{thm:low-max-weight} with different colorings $(C'_1, \ldots, C'_\kappa)$ to derive the following two corollaries. First, note that every graph $G$ admits a $(\Delta + 1)$-coloring $(C'_1, \dots, C'_{\Delta + 1})$, and such a coloring can be computed in polynomial time~\cite{Brooks1941Colouring}. Hence, we can invoke 
Theorem \ref{thm:low-max-weight} with a coloring $(C'_1, \dots, C'_{\Delta + 1})$ and obtain the following corollary. 

\thmlowmaxwtcomputable*

The existential guarantee in the above corollary can be strengthened by applying Theorem \ref{thm:low-max-weight} with a $\chi(G)$-coloring of the graph $G$.\footnote{Recall that $\chi(G)$ denotes the chromatic number of the graph $G$.} However, since a $\chi(G)$-coloring need not be computable in polynomial time, the following corollary does not come with an efficient algorithm.

\thmchromaticnumber*

\section{Conclusion and Future Work}
This work establishes bounds on the number of colors $k$ required to obtain approximately equitable colorings in vertex-weighted graphs. In particular, we prove that equitability up to one vertex and within a $(1+\varepsilon)$ multiplicative factor---i.e., a $(1+\varepsilon)$-\EQo coloring---can be achieved using $k=\widetilde{O} \left(\Delta/\varepsilon^2\right)$ colors. In addition, we show that the      Hajnal-Szemer\'{e}di threshold $k \ge \Delta+1$ guarantees $2$-\EQo in the weighted setting, and a multiplicative approximation is unavoidable at this threshold. A notable aspect of these results is that they hold without any assumptions on the range of the vertex weights.

In addition, we develop results for colorings that are equitable with respect to arbitrary partitions of the vertex set. These partition-equitability guarantees provide a novel approach to balancing weights and may be of independent interest.

Complementing the $(1+\varepsilon)$-\EQo coloring result, it would be interesting to establish a lower bound demonstrating that some dependence of $k$ on $\varepsilon$ is unavoidable. In parallel, an appealing direction for future work is to determine whether there exists a fixed constant $c$ such that an $\EQo$ $k$-coloring always exists for every $k \ge c\Delta$.

\section*{Acknowledgments}
Siddharth Barman gratefully acknowledges the support of the Walmart Centre for Tech Excellence (CSR WMGT-23-0001) an Ittiam CSR Grant (OD/OTHR-24-0032).
Vignesh Viswanathan is funded by the National Science Foundation (NSF) Career Award 2441296 and Grant RI-2327057. A part of this research was completed while Vignesh was a visitor at the Indian Institute of Science, Bangalore.
\addcontentsline{toc}{section}{References}
\bibliographystyle{alpha}
\bibliography{abb,references}

\appendix 

\section{Randomized Algorithm for Equitable Coloring}\label{apdx:probabilistic-method}
In this section we use the probabilistic method to prove the existence of an approximate \EQo coloring. Our analysis is similar to that of \cite{Pemmaraju2008SymmetryBreaking}. Assume via normalization that $\maxw(V) = 1$ and also $k \ge \Delta + 1$. \\

\noindent\textbf{Random Coloring Generation Process: } Pick a uniformly random permutation $\pi$ of the vertices $V$, and assign to each vertex a tentative color chosen uniformly at random from the set of colors $[k]$. For each vertex $v$, let $\mathcal{E}_v$ denote the event that there exists a neighbor $v'$ of $v$ that appears before $v$ in $\pi$ and is assigned the same tentative color.

If $\mathcal{E}_v$ does not occur, assign $v$ its tentative color as a permanent color; otherwise, leave $v$ uncolored. The final output is the (partial) coloring consisting of the permanent colors assigned to the vertices. Let $(\mathbf{C}_1, \ldots, \mathbf{C}_k)$ denote a partial coloring generated by this process. Also, let $\Pr \{ v \in \mathbf{C}_i \}$ denote the probability that vertex $v$ is assigned color class $i$. 

The proof of the following lemma is similar that of Lemma 5 in~\cite{Pemmaraju2008SymmetryBreaking}; we provide it here for completeness.

\begin{lemma}\label{lem:expectation}
$\Pr\left\{ v \in \mathbf{C}_i \right\} \in \left [\frac1{k+ \Delta + 1}, \frac1k \right]$ for all vertices $v \in V$ and colors $i \in [k]$. 
\end{lemma}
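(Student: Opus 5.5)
We need $\Pr\{v\in \mathbf{C}_i\}$. By symmetry over colors, the event is that $v$'s tentative color is $i$ and no earlier neighbor has tentative color $i$. Tentative colors are independent of $\pi$ and of each other.

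\textbf{Upper bound.} The event $\{v \in \mathbf{C}_i\}$ is contained in the event that $v$'s tentative color is $i$, which has probability exactly $\frac1k$. So $\Pr\{v\in\mathbf{C}_i\}\le \frac1k$.

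\textbf{Lower bound.} First condition on the permutation $\pi$. Let $m$ be the number of neighbors of $v$ that precede $v$ in $\pi$. Given $\pi$, the probability that $v$ gets tentative color $i$ and none of those $m$ neighbors gets color $i$ is
\[
\frac1k\left(1-\frac1k\right)^{m}.
\]
Next, average over $\pi$. Let $d\le\Delta$ be the degree of $v$. The position of $v$ among itself and its $d$ neighbors is uniform, so $m$ is uniform on $\{0,\dots,d\}$. Hence
\[
\Pr\{v\in\mathbf{C}_i\}=\frac{1}{k(d+1)}\sum_{m=0}^{d}\left(1-\frac1k\right)^m=\frac{1}{d+1}\left(1-\left(1-\frac1k\right)^{d+1}\right).
\]

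It remains to show that this quantity is at least $\frac{1}{k+d+1}$, which also gives the bound $\frac1{k+\Delta+1}$ since $d \le \Delta$. Set $q=1-\frac1k$ and $N=d+1$. The claim becomes $1-q^N\ge \frac{N}{k+N}$, equivalently $q^N\le \frac{k}{k+N}$. This follows from $q^N=(1+\frac1{k-1})^{-N}$ together with Bernoulli's inequality: $(1+\frac1{k-1})^N\ge 1+\frac{N}{k-1}\ge 1+\frac Nk=\frac{k+N}{k}$.

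An alternative route, if the closed form were unavailable, is Jensen's inequality. Since $x\mapsto q^x$ is convex, $\mathbb{E}[q^m]\ge q^{d/2}$, and one would then bound $q^{d/2}\ge \frac{k}{k+d+1}$. The direct computation above is cleaner, so I would use it.

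The only delicate step is this final elementary inequality relating the geometric sum to $\frac{1}{k+\Delta+1}$. I expect the Bernoulli-type bound above to handle it, since it does not require $k\ge\Delta+1$.
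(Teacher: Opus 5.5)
Your proof is correct and follows the paper's route. You use the same closed form $\frac{1}{d+1}\bigl(1-(1-\frac1k)^{d+1}\bigr)$, obtained by averaging over the uniformly distributed number of earlier neighbors, and the same key estimate $(1-\frac1k)^{N}\le \frac{k}{k+N}$, which the paper cites as $(1-x)^r\le \frac{1}{1+rx}$ and you derive from Bernoulli's inequality. Your version is slightly cleaner in two places: the event-containment argument for the upper bound, and applying the estimate at $d$ before using $d\le\Delta$. Together these avoid the paper's unargued monotonicity claim that the closed form is maximized at $\deg(v)=0$ and minimized at $\deg(v)=\Delta$. One small detail: your Bernoulli step divides by $k-1$, so it needs $k\ge 2$, but the case $k=1$ is trivial since then $q=0$.
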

\begin{proof}
Write ${\rm deg}(v)$ to denote the degree of vertex $v$ and let $L(v)$ denote the subset of neighbors of $v$ that appear before $v$ in the permutation $\pi$. By summing over all possible sizes of $L(v)$, we get
\begin{align}
    \Pr \{ v \in C_i \} &= \frac1k \sum_{j = 0}^{{\rm deg}(v)} \frac1{{\rm deg}(v) + 1} \left (1 - \frac1k \right )^j \notag\\
    &= \frac1{ {\rm deg}(v) + 1} \left ( 1 - \left ( 1- \frac1k \right )^{ {\rm deg}(v) + 1}\right ). \label{eq:probability}
\end{align}
The right hand side of the first equality has a $\frac1k$ term to denote the probability that $v$ is tentatively assigned the color $i$. The summation is over all possible sizes of $L(v)$, with each size occurring with probability $\frac1{{\rm deg}(v) + 1}$. 

The right hand side of \eqref{eq:probability} is maximized at ${\rm deg}(v) = 0$ and minimized at ${\rm deg}(v) = \Delta$. Therefore $\Pr\{v \in C_i\} \le \frac1k$ and
\begin{align*}
    \Pr\{v \in C_i\} \ge \frac1{\Delta + 1} \left ( 1 - \left ( 1- \frac1k \right )^{\Delta + 1}\right ) \ge \frac1{\Delta + 1} \left ( 1 - \frac1{1 + \frac{\Delta + 1}{k}}\right ) = \frac1{k + \Delta + 1}.
\end{align*}
The second inequality above follows from the identity $\left (1-x \right )^{r} \le \frac{1}{1 + rx}$, which holds for $x \in [0, 1]$ and $r \ge 0$.
\end{proof}

We use the Azuma-Hoeffding inequality \cite{Azuma1967} to bound how much the weight of each color class deviates from its expectation.

\begin{theorem}[Azuma's Inequality]\label{thm:azuma}
Let $X$ be a random variable determined by $n$ trials $T_1, T_2, \ldots, T_n$, such that for each $i$, and any two possible sequences of outcomes
$t_1, t_2, \ldots, t_{i-1}, t_i$ and $t_1, t_2, \ldots, t_{i-1}, t_i'$:
\begin{align*}
    \left|\E[X \mid T_1 = t_1, \ldots, T_i = t_i] - \E[X \mid T_1 = t_1, \ldots, T_i = t_i']
    \right| \leq c_i
\end{align*}
then
\begin{align*}
       \Pr \left\{ |X - E[X]| > t \right\}
    \leq 2 {\rm exp} \left( -\frac{t^2}{2 \sum_{i=1}^n c_i^2} \right). 
\end{align*}
\end{theorem}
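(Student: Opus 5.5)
We would prove this via the Doob martingale of $X$ and the standard exponential-moment (Chernoff) method. For $0 \le i \le n$, set $Z_i \coloneqq \E[X \mid T_1, \ldots, T_i]$. Then $Z_0 = \E[X]$, and $Z_n = X$ because $X$ is determined by the $n$ trials. Also $(Z_i)$ is a martingale with respect to the filtration generated by the trials. Write the increments as $D_i \coloneqq Z_i - Z_{i-1}$, so that $X - \E[X] = \sum_{i=1}^n D_i$ and $\E[D_i \mid T_1,\ldots,T_{i-1}] = 0$. To avoid measure-theoretic technicalities we would first treat trials with finitely many outcomes, where all conditional expectations are finite averages; the general case follows the same lines.

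The first step is to show $|D_i| \le c_i$ almost surely. Fix outcomes $t_1,\ldots,t_{i-1}$. The value $Z_{i-1}$ is an average (with respect to the conditional law of $T_i$) of the values $\E[X \mid T_1=t_1,\ldots,T_i=t_i']$ over $t_i'$. By hypothesis, any two such values differ by at most $c_i$. Hence $Z_i$ differs from this average by at most $c_i$.

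The second step is the conditional moment-generating-function bound. For $\lambda > 0$, a random variable $D$ with $\E[D]=0$ and $|D| \le c$ satisfies $\E[e^{\lambda D}] \le \cosh(\lambda c) \le e^{\lambda^2 c^2/2}$. The first inequality follows from convexity of $e^{\lambda x}$ on $[-c,c]$, by writing $x$ as a convex combination of $\pm c$. The second follows by comparing Taylor series. Applying this conditionally on $T_1,\ldots,T_{i-1}$ and peeling off one increment at a time with the tower property, we obtain
\begin{align*}
\E\left[e^{\lambda (X - \E[X])}\right] \le \exp\left(\frac{\lambda^2}{2}\sum_{i=1}^n c_i^2\right).
\end{align*}

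Finally, Markov's inequality gives
\begin{align*}
\Pr\{X - \E[X] > t\} \le \exp\left(-\lambda t + \frac{\lambda^2}{2}\sum_i c_i^2\right).
\end{align*}
Choosing $\lambda = t/\sum_i c_i^2$ yields the bound $\exp\left(-t^2/(2\sum_i c_i^2)\right)$. The lower tail follows by applying the same argument to $-X$, which satisfies the same hypotheses, and a union bound over the two tails gives the factor $2$. We expect the main obstacle to be the first step: converting the hypothesis, stated in terms of two outcome sequences differing only at coordinate $i$, into an almost-sure bound on the martingale increment. The key point there is that $Z_{i-1}$ is a convex combination of the possible values of $Z_i$. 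Notably, this argument does not require the trials to be independent.
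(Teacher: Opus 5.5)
The paper does not prove this statement. It quotes it from \cite{Azuma1967} and uses it only as a black box in Lemma~\ref{lem:tail-bound}, so there is no in-paper argument to compare against. Your proposal is the standard Doob-martingale proof, and it is correct as written.

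The step you flagged as the main obstacle, $|Z_i - Z_{i-1}| \le c_i$, goes through as you describe. Given $T_1=t_1,\ldots,T_{i-1}=t_{i-1}$, the value $Z_{i-1}$ is a convex combination of the values $\E[X \mid T_1=t_1,\ldots,T_i=t_i']$ over outcomes $t_i'$ in the support. By hypothesis these values all lie in a common interval of length $c_i$, so each is within $c_i$ of their average. The remaining steps are also sound: the $\cosh$ bound, peeling via the tower property, Markov's inequality with $\lambda = t/\sum_i c_i^2$, and the same argument applied to $-X$. Together they give exactly the stated constant. Your remark that the trials need not be independent is also correct.

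Your restriction to finitely many outcomes costs nothing. The hypothesis conditions on the events $\{T_1=t_1,\ldots,T_i=t_i\}$, which is only meaningful for discrete trials. In the paper's application the trials are tentative colors in $[k]$ for a fixed permutation $\pi$, so they are finite.

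Two small remarks:
\begin{itemize}
\item You should set aside the degenerate case $\sum_i c_i^2 = 0$, where $X = \E[X]$ almost surely, before choosing $\lambda$.
\item Conditionally on $T_1,\ldots,T_{i-1}$, the increment $D_i$ ranges over an interval of length $c_i$, not over $[-c_i,c_i]$. Using Hoeffding's lemma in place of the $\cosh$ bound would therefore give the sharper $2\exp\bigl(-2t^2/\sum_i c_i^2\bigr)$, which implies the stated inequality.
\end{itemize}
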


The following application of the Azuma-Hoeffding inequality follows the arguments of \cite{Pemmaraju2008SymmetryBreaking}; we include it here for completeness.

\begin{lemma}\label{lem:tail-bound}
For all $i \in [k]$, 
\begin{align*}
\Pr \left\{ \left |w(\mathbf{C}_i) - \E[w(\mathbf{C}_i)]\right | > c \sqrt{w(V) \ln{k}} \right\} \le \frac{2}{k^{c^2/18}}.
\end{align*}
\end{lemma}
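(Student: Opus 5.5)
We apply Theorem~\ref{thm:azuma} to $X = w(\mathbf{C}_i)$, with the randomness exposed vertex by vertex. Each vertex $v$ gets an independent tentative color $c_v \in [k]$ and an independent arrival time $\tau_v \in [0,1]$ drawn uniformly. The arrival times induce a uniformly random permutation $\pi$, so $n$ independent trials $T_v = (c_v, \tau_v)$ generate exactly the process described above. Write $X = \sum_{u} w_u \mathbf{1}[u \in \mathbf{C}_i]$. The event $u \in \mathbf{C}_i$ is determined by $T_u$ together with the trials $T_{u'}$ for neighbors $u'$ of $u$. Hence changing a single trial $T_v$ can only alter the indicators of $v$ itself and of its at most $\Delta$ neighbors.

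\textbf{Martingale differences.} The key step is bounding the martingale differences $c_v$. Fix an ordering of the trials and condition on the outcomes of the first trials. Changing $T_v$ to $T_v'$ and coupling the remaining trials identically changes $X$ by at most $w_v + \sum_{u \in N(v)} w_u$. Averaging over the unrevealed trials preserves this bound, so the conditional expectations differ by at most
\begin{align*}
c_v \le w_v + \sum_{u \in N(v)} w_u.
\end{align*}

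\textbf{Sum of squares.} Next we bound $\sum_v c_v^2$. With the normalization $\maxw(V) = 1$, each $c_v \le \Delta + 1$. We also have
\begin{align*}
\sum_v c_v \le \sum_v w_v + \sum_v \sum_{u \in N(v)} w_u \le (\Delta+1)\, w(V),
\end{align*}
since each $w_u$ is counted once for itself and at most $\Delta$ times as a neighbor. Therefore
\begin{align*}
\sum_v c_v^2 \le \max_v c_v \cdot \sum_v c_v \le (\Delta+1)^2 w(V).
\end{align*}
Plugging $t = c\sqrt{w(V)\ln k}$ into Theorem~\ref{thm:azuma} then gives the tail bound
\begin{align*}
2\exp\!\left(-\frac{c^2 \ln k}{2(\Delta+1)^2}\right).
\end{align*}

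\textbf{Main obstacle.} This bound carries a $(\Delta+1)^2$ loss in the exponent, whereas the lemma asks for a constant $18$ there. I would therefore aim for a bound of the form $\sum_v c_v^2 \le 9\, w(V)$, which yields exactly $2k^{-c^2/18}$. Getting there requires the sharper, averaging-based accounting of \cite{Pemmaraju2008SymmetryBreaking} in place of the worst case.

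The point is that changing $T_v$ affects a neighbor $u$'s membership in $\mathbf{C}_i$ only when $v$ precedes $u$ and one of $c_v, c_v'$ equals $i$ or $c_u$. In conditional expectation over the unrevealed trials, this happens with probability $O(1/k)$ per neighbor. The conditional expectation difference is then at most $w_v + O\bigl(\tfrac{1}{k}\sum_{u\in N(v)} w_u\bigr)$, which is at most $3$ using $k \ge \Delta+1$ and $\maxw(V) = 1$.

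I would try to establish this by revealing the trials in the order of the permutation, so that the randomness of later-arriving neighbors is still unexposed. A possible alternative is to reveal the colors in a fixed order with the permutation already exposed. Summing then gives
\begin{align*}
\sum_v c_v^2 \le 3 \sum_v c_v \le 9\, w(V),
\end{align*}
which would complete the proof. Making this conditional $O(1/k)$ estimate rigorous under adaptive conditioning is the step I expect to be the hardest.
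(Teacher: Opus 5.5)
\textbf{There is a genuine gap: the proposal does not yet prove the lemma.} The part you actually carry out is correct but too weak. That part uses trials $T_v=(c_v,\tau_v)$ in a fixed vertex order, with differences bounded by $w_v+\sum_{u\in N(v)}w_u$, and it only yields $2\exp\bigl(-c^2\ln k/(2(\Delta+1)^2)\bigr)$.

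The step that would give the constant $18$ is only asserted, and you leave it open yourself. That step needs a per-neighbor contribution of order $1/k$, with a constant small enough that every difference is at most $3$ and their sum is at most $3w(V)$. Your main plan for it, revealing the pairs $(c_v,\tau_v)$ adaptively in arrival order, is needlessly delicate. Your alternative, revealing colors ``in a fixed order with the permutation already exposed,'' works only if that order is the order of $\pi$ itself. If a $\pi$-later neighbor $u$ of $v$ already has its tentative color revealed, then switching $v$'s color between $i$ and a non-$i$ value can flip $u$'s membership in $\mathbf{C}_i$ with certainty, and the full $w_u$ term returns.

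\textbf{The missing idea is exactly the paper's argument.} Fix $\pi$ and take the $j$-th trial to be the tentative color $\theta_{\pi(j)}$. (I write $\theta$ for tentative colors to avoid the clash with the difference bounds $c_j$.) Rejection depends only on \emph{tentative} colors, so nothing cascades: $u\in\mathbf{C}_i$ if and only if $\theta_u=i$ and no $\pi$-earlier neighbor of $u$ has tentative color $i$. Consequently, changing the color of $v=\pi(j)$ has the following effects.
\begin{itemize}
\item It does not touch the indicators of vertices other than $v$ and its $\pi$-later neighbors $U(v)$.
\item The indicator of $v$ itself is determined by already-revealed colors, so it contributes at most $w_v$.
\item For each $u\in U(v)$, whose color is still unrevealed, the conditional probability that $u\in\mathbf{C}_i$ equals $\frac1k$ times a factor in $[0,1]$. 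So it moves by at most $\frac1k$.
\end{itemize}
This gives $c_j\le w_v+\frac1k\sum_{u\in U(v)}w_u\le 1+\frac{\Delta}{k}<2$ and $\sum_j c_j\le\bigl(1+\frac{\Delta}{k}\bigr)w(V)<2w(V)$. Hence $\sum_j c_j^2<4w(V)\le 9w(V)$, and Theorem~\ref{thm:azuma} finishes. The paper uses the looser factor $\frac2k$ and obtains $c_j<3$ and $\sum_j c_j^2<9w(V)$.

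\textbf{One caveat, which the paper's proof also passes over.} With $\pi$ fixed, you get concentration around $\E[w(\mathbf{C}_i)\mid\pi]$, not around the unconditional mean. This is harmless for the application. For every $\pi$,
\[
\E[w(\mathbf{C}_i)\mid\pi]=\sum_v w_v\frac1k\bigl(1-\frac1k\bigr)^{|L(v)|}\in\bigl[\bigl(1-\frac{\Delta}{k}\bigr)\frac{w(V)}{k},\frac{w(V)}{k}\bigr],
\]
where $L(v)$ is the set of $\pi$-earlier neighbors of $v$. That interval is all that Theorem~\ref{thm:probabilistic-method} uses.
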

\begin{proof}
Fix a permutation $\pi$. Let $T_j$ be the tentative color assignment of the $j$-th vertex in $\pi$ and note that $w(\mathbf{C}_i)$ is completely determined by the outcomes of the trials $T_1, T_2, \ldots, T_n$.
Now consider the difference in conditional expectations from the Azuma-Hoeffding inequality:
\begin{align}
    c_j = \left|
        \E[w(\mathbf{C}_i) \mid T_1 = t_1, T_2 = t_2, \ldots, T_j = t_j]
        - \E[w(\mathbf{C}_i) \mid T_1 = t_1, T_2 = t_2, \ldots, T_j = t_j'] \label{eq:azuma-diff}
    \right|.
\end{align}
Let $v$ be the $j$-th vertex in $\pi$. The difference in \Cref{eq:azuma-diff} is at most
\[
    c_j \leq w_v + \frac{2}{k} \sum_{u \in U(v)} w_u,
\]
where $U(v)$ is the set of neighbors of $v$ that appear after $v$ in $\pi$.
The first term $w_v$ in the above bound is due to the change in the tentative
color of $v$ from $t_j$ to $t_j'$. The second term is the \textit{expected} change
in $w(\mathbf{C}_i)$; note that this occurs only at vertices in $U(v)$. Using the fact that $\maxw(V) = 1$, we obtain $c_j \leq 1 + \frac{2\Delta}{k} < 3$. 
Also,
\begin{align*}
\sum_{j = 1}^{n} c_j
\leq \sum_{v \in V} w_v + \frac{2}{k} \sum_{v \in V} \sum_{u \in U(v)} w(u)
\leq w(V) + \frac{2\Delta}{k} \  \sum_{v \in V} w(v)
< 3w(V).  
\end{align*}

\noindent These inequalities imply $\sum_{j=1}^{n} c_j^2 < 3\sum_{j=1}^{n} c_j < 9w(V)$. With this bound, we apply the Azuma–Hoeffding inequality (\Cref{lem:tail-bound}) to obtain the stated upper bound on the deviation from the mean: 
\begin{align*}
\Pr \left \{\left |w(\mathbf{C}_i) - \E[w(\mathbf{C}_i)]\right | > c \sqrt{w(V) \ln{k}} \right\} \le 2\exp\!\left(\frac{-c^2 w(V) \ln{k}}{18w(V)}\right) = \frac{2}{k^{c^2/18}}.
\end{align*}
\end{proof}

We next establish the main theorem of this section.
\thmprobabilisticmethod*
\begin{proof}
 We generate a random coloring $(\mathbf{C}_1, \ldots, \mathbf{C}_k)$ using the random coloring generation process described above. Applying the linearity of expectation with Lemma \ref{lem:expectation}, we obtain $\E[w(\mathbf{C}_i)] \in \left [\frac{w(V)}{k + \Delta + 1}, \frac{w(V)}{k} \right ]$. Hence, the assumption that that $k \ge \frac{\Delta + 1}{\vare}$, gives us 
\begin{align*}
\E[w(\mathbf{C}_i)] \in \left [\frac{w(V)}{(1 + \vare)k}, \frac{w(V)}{k} \right ].
\end{align*} 
Using $c = 6$ in Lemma \ref{lem:tail-bound}, we get, for all $i \in [k]$,
\begin{align}
\Pr \left\{\left |w(\mathbf{C}_i) - \E[w(\mathbf{C}_i)]\right | > 6 \sqrt{w(V) \ln{k}} \right\} \le \frac{2}{k^{2}}  \label{eq:tail-bound-inequality}
\end{align}
Now, the union bound gives us
\begin{align*}
\Pr \left\{ \text{ there exists } i \in [k] \text{ s.t.} \left |w(\mathbf{C}_i) - \E[w(\mathbf{C}_i)]\right | > 6 \sqrt{w(V) \ln{k}} \right\} \le \frac{2}{k} < 1.
\end{align*}

Therefore, a (partial) coloring where the event in \eqref{eq:tail-bound-inequality} does not occur for all $i \in [k]$ exists. Denote this coloring as $(C_1, \dots, C_k)$. By our assumption in the theorem, $\sqrt{w(V) \ln{k}} \le \frac{\vare w(V)}{k}$; recall that via normalization we have $\maxw(V) = 1$. Therefore, for all $i \in [k]$,
\begin{align}
    w(C_i) \in \left [ \frac{w(V)(1 - \vare)}{k} - \frac{6\vare w(V)}{k}, \frac{w(V)}{k} + \frac{6\vare w(V)}{k}\right]. \label{eq:weight-bounds}
\end{align}

The coloring $(C_1, \dots, C_k)$ is a coloring for some induced subgraph $G[V']$. From the above bound, we get that the total weight of the uncolored vertices $w(V \setminus V')$ is at most $6\vare w(V)$.
We complete this coloring to create a coloring $(C'_1, \dots, C'_k)$ of the graph $G$ using Lemma \ref{lem:filling-up}. We complete the proof next by showing that this coloring is $(1 + 25\vare)$-\EQo. Consider any two color classes $C'_i$ and $C'_j$. Let $\bar{v}$ denote the highest weight vertex in $C'_i \setminus C_i$. Using Lemma \ref{lem:filling-up},
\begin{align*}
    w(C'_i \setminus \bar{v}) &\le w(C_i) + \frac{w(V \setminus V')}{k-\Delta} \\
    &\le \frac{(1 + 6\vare)w(V)}{k} + \frac{6\vare w(V)}{k - \Delta} \\
    &\le \frac{(1 + 6\vare)w(V)}{k} + \frac{6\vare w(V)}{k - \vare k} \\
    &\le \frac{w(V)}{k} \left ( 1 + 6\vare + 7\vare \right ) \\
    &\le w(C_j)\frac{1 + 13\vare}{1 - 7\vare} \tag{via (\eqref{eq:weight-bounds})} \\
    &\le w(C'_j)(1 + 13\vare)(1 + 10\vare) \le w(C'_j)(1 + 25\vare)
\end{align*}
In both the fourth and sixth inequalities, we use $\vare < 0.01$ to simplify the expression. The theorem stands proved. 
\end{proof}

\section{Concentration Inequalities}
In this section, we prove a concentration inequality for sums of partly dependent random variables using our equitable coloring results.

\thmtailbounds*
\begin{proof}
For each vertex $i$ in the graph $G$, we define its weight $w_i$ to be $(b_i - a_i)^2$. We invoke Theorem \ref{thm:delta-plus-one} to construct a $2$-\EQo coloring $(C_1, \dots, C_{k})$ with $k = \Delta + 1$. 

Each color class $C_j$ is an independent set in $G$, which means the random variables corresponding to the vertices of $C_j$ are independent. Let $X^j$ denote the sum of the random variables in $C_j$; that is, $X^j = \sum_{i \in C_j} X_i$. 

For each $j \in [\Delta + 1]$, we can apply Hoeffding's inequality \cite{Hoeffding1963Inequality}. This gives us,
\begin{align}
    \Pr \left\{X^j - \E[X^j] \ge \frac{t}{\Delta + 1} \right\} \le \exp \left ( \frac{-2t^2}{(\Delta + 1)^2 \left [ \sum_{i \in C_j} (b_i - a_i)^2 \right ] }\right ) \label{eq:hoeffding}
\end{align}

Note that the term $\sum_{i \in C_j} (b_i - a_i)^2$ is equal to the weight of the set $C_j$, $w(C_j)$, in the graph $G$. We use the definition of $2$-\EQo to upper bound $w(C_j)$. 

Specifically, let $\overline{v}$ denote the highest weight vertex in $C_j$. By the definition of a $2$-\EQo coloring, we have for any $j' \in [\Delta + 1]$, 
\begin{align*}
w(C_j) \le 2w(C_{j'}) + w_{\overline{v}} \le 2w(C_{j'}) + \maxw([n]). 
\end{align*}

Summing this inequality up for all $j' \in [\Delta + 1]$, we get
\begin{align}
    (\Delta + 1)w(C_j) \le 2w([n]) + (\Delta + 1)\maxw([n]). \label{eq:cj-upper-bound}
\end{align}

Applying the upper bound from Equation \eqref{eq:cj-upper-bound} into Equation \eqref{eq:hoeffding}, we get for each $j \in [\Delta + 1]$
\begin{align}
\Pr \left\{X^j - \E[X^j] \ge \frac{t}{\Delta + 1} \right\} \le \exp \left ( \frac{-2t^2}{2(\Delta + 1)\sum_{i \in [n]} (b_i - a_i)^2 + (\Delta + 1)^2\max_{i \in [n]} (b_i - a_i)^2}\right ).\label{eq:complete-hoeffding}   
\end{align}

The theorem then follows by taking the union bound over all $j \in [\Delta + 1]$. 
\begin{align*}
    \Pr \{X - \E[X] \ge t \} &\le \Pr \left\{ \text{there exists } j \in [k]  \text{ s.t. } X^j - \E[X^j] \ge \frac{t}{\Delta + 1} \right \} \\
    &\le \sum_{j \in [\Delta + 1]} \Pr \left\{ X^j - \E[X^j] \ge \frac{t}{\Delta + 1} \right\} \tag{Union bound}\\
    &\le (\Delta + 1) \exp \left( \frac{-2t^2}{2(\Delta + 1)\sum_{i \in [n]} (b_i - a_i)^2 + (\Delta + 1)^2\max_{i \in [n]} (b_i - a_i)^2 }\right) \tag{via \eqref{eq:complete-hoeffding}}
\end{align*}

The upper bound for the probability $\Pr\{X - \E[X] \le -t\}$ follows from a similar argument. This completes the proof of the theorem. 
\end{proof}

\section{On the (Non-)Existence of EQX colorings}\label{apdx:eqx}
This section studies a strengthening of $\alpha$-\EQo, called $\alpha$-approximately equitable up to any vertex ($\alpha$-EQX), and shows that for any $k$ less than the number of vertices, an $\alpha$-EQX $k$-coloring is not guaranteed to exist. Given a vertex-weighted graph $G = (V, E, w)$, a coloring $(C_1, \dots, C_k)$ is said to be $\alpha$-approximately equitable up to any vertex ($\alpha$-EQX) if, for every $i, j \in [k]$ and {\em every} vertex $\overline{v} \in C_i$, we have $w(C_i \setminus \overline{v}) \le \alpha w(C_j)$. This is a stronger criterion than equitability up to one vertex. 

The following instance shows that, for any $\alpha \geq 1$, the only $\alpha$-EQX coloring puts each vertex in its own unique color class.

\begin{example}\label{eg:eqx}
For any given $\alpha \ge 1$, we provide an instance in which no $\alpha$-EQX $k$-coloring exists for any $k < n$. Consider an $n$-vertex line graph with vertices $\{1, 2, \dots, n\}$ and edges $(i, i+1)$ for each $1 \leq i < n$. Assign to each vertex $i \in [n]$ a weight $\beta^i$, where $\beta > \alpha + 1$. Note that such a choice of $\beta$ satisfies
\begin{align}
\beta^j > \alpha \left( \beta^{j-1} + \beta^{j-2} + \cdots + \beta \right), \label{eq:alpha-eqx}
\end{align}
for every positive integer $j$.

Assume, towards a contradiction, that this instance admits an $\alpha$-EQX coloring $(C_1, \dots, C_k)$ for some $k < n$. In such a coloring, at least one color class must contain at least two vertices. Let $C_h$ denote the highest weight color class that contains at least two vertices and let $i^*$ be the highest weight vertex in $C_h$. Note that $i^* \ge 3$ since $|C_h| \ge 2$ and there is an edge between vertex $1$ and vertex $2$. 

Let $C_{\ell}$ denote the color class that contains the vertex $i^* - 1$. Since an edge exists between $i^* - 1$ and $i^*$, we must have $\ell \ne h$. Additionally, since $C_{h}$ is the highest weight color class that contains multiple vertices, we must have $C_{\ell} \subseteq \{1, 2, \dots, i^* - 1\}$. Let $\overline{v}$ denote any vertex in $C_h \setminus \{i^*\}$. By the definition of $\alpha$-EQX, we have
\begin{align}
    w(C_{h} \setminus \{\overline{v}\}) \le \alpha w(C_{\ell}) \label{ineq:eqx-wts}
\end{align}
In addition, $w(C_h \setminus \{\overline{v}\}) \ge \beta^{i^*}$ and $w(C_{\ell}) \le \beta^{i^* - 1} + \beta^{i^* - 2} + \ldots + \beta^1$. These bounds, together with inequality (\ref{ineq:eqx-wts}), contradict Equation \eqref{eq:alpha-eqx}. This contradiction shows that $(C_1,\dots,C_k)$ cannot be an $\alpha$-EQX coloring. In fact, the argument implies that any $\alpha$-EQX coloring must satisfy $k = n$, with $C_i = \{i\}$ for all $i \in [n]$.
\end{example}

\end{document}

%% file: preamble.tex
\usepackage[margin=1in]{geometry}

\usepackage[hidelinks]{hyperref}
\usepackage{bbding}
\usepackage[nointegrals]{wasysym}
\usepackage[utf8]{inputenc}
\usepackage{xspace}
\usepackage{mathtools}
\usepackage{url}
\usepackage{subcaption}
\usepackage{tikz}
\usetikzlibrary{arrows, fit, positioning}
\usetikzlibrary{backgrounds,automata,calc}
\usetikzlibrary{decorations.pathreplacing}
\usepackage{csquotes}
\usepackage{amsmath,amsfonts,amsthm}
\usepackage{bm,bbm}
\usepackage{pgfplots}
\pgfplotsset{compat=1.17}

\usepackage{enumerate}
\usepackage{paralist}
\usepackage[shortlabels]{enumitem}
\usepackage{appendix}
\usepackage[capitalize]{cleveref}
\usepackage{thmtools}
\usepackage{mathtools}
\usepackage{natbib}
\usepackage{changepage}
\usepackage{authblk}
\usepackage{commath}

\usepackage{algorithm}
\usepackage{algpseudocode}

\usepackage{charter}

\newcommand{\EQo}{{\rm EQ1}\xspace}

\newcommand{\E}{\mathbb{E}}

\renewcommand{\cal}[1]{\mathcal{#1}}

\newcommand{\vare}{\varepsilon}

\newcommand{\maxw}{\mathrm{max\text{-}wt}}
\newcommand{\minw}{\mathrm{min\text{-}wt}}

\usepackage{mathtools}

\setlist{topsep=0.5ex,itemsep=0.1ex}

\newtheorem{theorem}{Theorem}[section]
\newtheorem{lemma}[theorem]{Lemma}

\newtheorem{claim}[theorem]{Claim}

\newtheorem{corr}[theorem]{Corollary}
\newtheorem*{theorem*}{Theorem}

\theoremstyle{definition}
\newtheorem{remark}[theorem]{Remark}
\newtheorem{definition}[theorem]{Definition}

\newtheorem{example}[theorem]{Example}

%% file: abb.bib
@STRING{agents = { Annual Conference on Autonomous Agents (AGENTS)}}

@STRING{ec = { ACM Conference on Economics and Computation (EC)}}

@STRING{ecai = { European Conference on Artificial Intelligence (ECAI)}}

@STRING{icalp = { International Colloquium on Automata, Languages and Programming (ICALP)}}

@STRING{ijcai = { International Joint Conference on Artificial Intelligence (IJCAI)}}

@STRING{proc = {Proceedings of the }}

@STRING{random = { International Workshop on Randomization and Computation (RANDOM)}}


%% file: references.bib
@article{HajnalSzemeredi1970,
  author  = {Hajnal, Andr{\'a}s and Szemer{\'e}di, Endre},
  title   = {Proof of a conjecture of {Erd\H{o}s}},
  journal = {Combinatorial Theory and its Applications},
  year    = {1970},
  pages   = {601--623}
}

@article{alon1987splitting,
  title={Splitting necklaces},
  author={Alon, Noga},
  journal={Advances in Mathematics},
  volume={63},
  number={3},
  pages={247--253},
  year={1987},
  publisher={Elsevier}
}

@inproceedings{Barman2023FeasiblEFX,
author = {Barman, Siddharth and Khan, Arindam and Shyam, Sudarshan and Sreenivas, K. V. N.},
title = {Guaranteeing Envy-Freeness under Generalized Assignment Constraints},
year = {2023},
booktitle = proc # {24th} # ec,
pages = {242--269}
}

@inproceedings{Pemmaraju2008SymmetryBreaking,
author = {Pemmaraju, Sriram and Srinivasan, Aravind},
title = {The Randomized Coloring Procedure with Symmetry-Breaking},
year = {2008},
booktitle = proc # {35th} # icalp,
pages = {306--319}
}

@article{Azuma1967,
  author  = {Azuma, Kazuoki},
  title   = {Weighted Sums of Certain Dependent Random Variables},
  journal = {T\^ohoku Mathematical Journal},
  volume  = {19},
  number  = {3},
  pages   = {357--367},
  year    = {1967}
}

@article{Brooks1941Colouring,
  title   = {On colouring the nodes of a network},
  author  = {Brooks, Rowland Leonard},
  journal = {Mathematical Proceedings of the Cambridge Philosophical Society},
  volume  = {37},
  number  = {2},
  pages   = {194--197},
  year    = {1941},
  publisher = {Cambridge University Press}
}

@article{Hummel2022ConflictingItems,
  author    = {Hummel, Halvard and Hetland, Magnus Lie},
  title     = {Fair allocation of conflicting items},
  journal   = {Autonomous Agents and Multi-Agent Systems},
  volume    = {36},
  number    = {8},
  year      = {2022}
}

@article{Janson2004TailBounds,
author = {Janson, Svante},
title = {Large deviations for sums of partly dependent random variables},
journal = {Random Structures \& Algorithms},
volume = {24},
number = {3},
pages = {234--248},
year = {2004}
}

@inproceedings{Pemmaraju2001TailBounds,
	author = {Pemmaraju, Sriram V.},
	booktitle = {Approximation, Randomization, and Combinatorial Optimization: Algorithms and Techniques},
	pages = {285--296},
	title = {Equitable Coloring Extends {C}hernoff-{H}oeffding Bounds},
	year = {2001}
}

@article{Hoeffding1963Inequality,
  title     = {Probability Inequalities for Sums of Bounded Random Variables},
  author    = {Hoeffding, Wassily},
  journal   = {Journal of the American Statistical Association},
  volume    = {58},
  number    = {301},
  pages     = {13--30},
  year      = {1963},
  publisher = {Taylor \& Francis}
}

@inproceedings{Chiarelli2020ConflictingItems,
author = {Chiarelli, Nina and Krnc, Matja\v{z} and Milani\v{c}, Martin and Pferschy, Ulrich and Piva\v{c}, Nevena and Schauer, Joachim},
title = {Fair Packing of Independent Sets},
year = {2020},
publisher = {Springer-Verlag},
booktitle = {Combinatorial Algorithms: 31st International Workshop (IWOCA 2020)},
pages = {154--165}
}

@article{Cechlarova2012EquitableDivisible,
	author = {Katar{\'\i}na Cechl{\'a}rov{\'a} and Eva Pill{\'a}rov{\'a}},
	journal = {Discrete Optimization},
	number = {4},
	pages = {249--257},
	title = {On the computability of equitable divisions},
	volume = {9},
	year = {2012}}

@article{Dubins1961EquitableDivisible,
 author = {L. E. Dubins and E. H. Spanier},
 journal = {The American Mathematical Monthly},
 number = {1},
 pages = {1--17},
 publisher = {[Taylor & Francis, Ltd., Mathematical Association of America]},
 title = {How to Cut A Cake Fairly},
 volume = {68},
 year = {1961}
}

@inproceedings{Freeman2019Equitable,
	author = {Freeman, Rupert and Sikdar, Sujoy and Vaish, Rohit and Xia, Lirong},
	booktitle = proc # {28th} # ijcai,
	pages = {280--286},
	title = {Equitable Allocations of Indivisible Goods},
	year = {2019}}

@inproceedings{Gourves2014Equitable,
author = {Gourv\`{e}s, Laurent and Monnot, J\'{e}r\^{o}me and Tlilane, Lydia},
title = {Near fairness in matroids},
year = {2014},
booktitle = proc # {21st} # ecai,
pages = {393--398}
}

@article{Barman2026Equitable,
author = {Barman, Siddharth and Bhaskar, Umang and Pandit, Yeshwant and Pyne, Soumyajit},
title = {Nearly Equitable Allocations Beyond Additivity and Monotonicity},
year = {2026},
volume = {85},
journal = {Journal of Artificial Intelligence Research},
numpages = {29}
}
